\newtheorem{theorem}{Theorem}
 \newtheorem{corollary}[theorem]{Corollary}
\newtheorem{lemma}[theorem]{Lemma}
 \newtheorem{claim}{Claim}
\newcommand{\Z}{\ensuremath{\mathbb{Z}}\xspace}
\newcommand{\Q}{\ensuremath{\mathbb{Q}}\xspace}
\newcommand{\R}{\ensuremath{\mathbb{R}}\xspace}
\newcommand{\N}{\ensuremath{\mathbb{N}}\xspace}
\newcommand{\A}{\ensuremath{\mathcal{A}}\xspace}
\newcommand{\F}{\ensuremath{\mathcal{F}}\xspace}
\newcommand{\Practical}{\ensuremath{\mathcal{P}}\xspace}
\newcommand{\Theoretical}{\ensuremath{\mathcal{T}}\xspace}
\newcommand{\Both}{\ensuremath{\mathcal{B}}\xspace}
\newcommand{\ER}{\ensuremath{\exists \mathbb{R}}\xspace}
\newcommand{\NP}{\ensuremath{\textrm{NP}}\xspace}
\newcommand{\vis}{\ensuremath{\textrm{vis}}\xspace}
\newcommand{\VIS}{\ensuremath{\textrm{VIS}}\xspace}
\newcommand{\disth}{\ensuremath{\textrm{dist}_\textrm{H}}\xspace}
\newcommand{\seg}{\ensuremath{\textrm{seg}}\xspace}
\newcommand{\short}{\ensuremath{\textrm{short}}\xspace}
\newcommand{\tangent}{\ensuremath{\textrm{tangent}}\xspace}
\newcommand{\power}{\ensuremath{\textrm{angular capacity}}\xspace}
\newcommand{\powerF}{\ensuremath{\textrm{capacity}}\xspace}
\newcommand{\POwer}{\ensuremath{\textrm{Angular Capacity}}\xspace}
\newcommand{\vertex}{\ensuremath{\textrm{vertex}}\xspace}
\newcommand{\node}{\ensuremath{\textrm{node}}\xspace}
\newcommand{\tafel}{\ensuremath{\textrm{table}}\xspace}
\newcommand{\face}{\ensuremath{\textrm{face}}\xspace}
\newcommand{\eps}{\ensuremath{\varepsilon}\xspace}
\newcommand{\ray}{\ensuremath{\textrm{ray}}\xspace}
\newcommand{\opt}{\ensuremath{\textrm{opt}}\xspace}
\newcommand{\representative}{\ensuremath{\textrm{representative}}\xspace}
\newcommand{\splittable}{\ensuremath{\textrm{splittable}}\xspace}
\newcommand{\interior}{\ensuremath{\textrm{int}}\xspace}
\newcommand{\chord}{\ensuremath{\textrm{chord}}\xspace}
\newcommand{\cw}{\ensuremath{\textrm{cw}}\xspace}
\newcommand{\visionstability}{vision-stability\xspace}
\newcommand{\Visionstability}{Vision-stability\xspace}
\newcommand{\visionstable}{vision-stable\xspace}
\newcommand{\Visionstable}{Vision-stable\xspace}
\newcommand{\enhanced}{enhanced\xspace}
\newcommand{\diminished}{diminished\xspace}
\newcommand{\weakVisPolyTree}{weak visibility polygon tree\xspace}
\newcommand{\WeakVisPolyTree}{Weak visibility polygon tree\xspace}
\newcommand{\const}[1]{\ensuremath{\left\llbracket \, #1 \,\right\rrbracket}\xspace}
\newcommand{\chordwidth}{chord-visibility width\xspace}
\definecolor{darkgreen}{rgb}{0.01, 0.93, 0.29}
\definecolor{lightbrown}{rgb}{0.91, 0.4, 0.11}
\author[Hengeveld and Miltzow]
{Simon B. Hengeveld\affiliationmark{1}
  \and Tillmann Miltzow\affiliationmark{1}\thanks{The second author is generously supported by the NWO Veni grant 016.Veni.192.250.}
 }
\title[Practical Art Gallery]{A Practical Algorithm with Performance Guarantees
for the Art~Gallery~Problem}
\affiliation{Department of Information and Computing Science, Utrecht University, The Netherlands}
\keywords{Computational Geometry, Existential Theory of the Reals, Art Gallery Problem, Discretization}
\begin{document}
\maketitle

\begin{abstract}
    Given a closed simple polygon $P$, we say two points $p,q$ 
    see each other if the segment $\textrm{seg}(p,q)$ is  contained in $P$.
    The art gallery problem seeks a minimum size set $G\subset P$ 
    of guards that sees $P$ completely.
    The only currently correct algorithm to solve the art gallery problem
    exactly uses algebraic methods.
    As the art gallery problem is \ER-complete,
    it seems unlikely to avoid algebraic methods, 
    for any exact algorithm, without additional assumptions. 

    In this paper, we introduce the notion of \emph{\visionstability}.
    In order to describe \visionstability consider an \emph{\enhanced} guard that can see ``around the corner'' by an angle of~$\delta$  or 
    a \emph{\diminished} guard whose vision is by an angle of $\delta$ ``blocked'' by reflex vertices.
    A polygon~$P$ is \visionstable for an angle $\delta$ if the optimal number of \enhanced guards to guard~$P$ is the same as
    the optimal number of \diminished guards to guard~$P$.
    We will argue that most relevant polygons are \visionstable.
    We describe a {\em one-shot \visionstable} algorithm that computes an optimal guard set for \visionstable polygons using polynomial time and solving one integer program.
    It guarantees to find the optimal solution for every \visionstable polygon.
    We implemented an {\em iterative \visionstable} algorithm and show its practical performance is slower, 
    but comparable with other state-of-the-art algorithms. The practical implementation can be found at: \href{https://github.com/simonheng/AGPIterative}{\url{https://github.com/simonheng/AGPIterative}}. 
    Our iterative algorithm is inspired and follows closely the one-shot algorithm. 
    It delays several steps and only computes them when deemed necessary. 
    
    A \emph{chord} is a straight line within a polygon connecting two non-adjacent vertices.
    Given such a chord $c$ of a polygon, we denote by $n(c)$ the number of vertices visible from~$c$. 
    The {\em \chordwidth} ($\cw(P)$) of a polygon is the maximum $n(c)$ over all possible chords~$c$.
    The set of \visionstable polygons admit an FPT algorithm when parameterized by the \chordwidth.
    Furthermore, the one-shot algorithm runs in FPT time when parameterized by the number of reflex vertices.
\end{abstract}

\vfill

\begin{figure}[htbp]
\centering
    \includegraphics{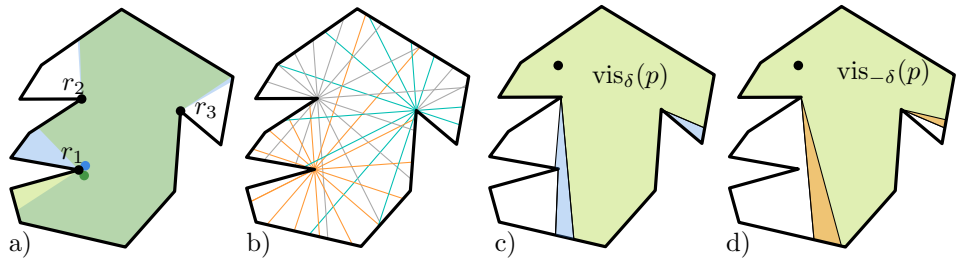}
    \caption{\small a) A small positional change largely influences how much visibility of the two guards is blocked by $r_1$, but only has a small effect on the way that $r_2$ or $r_3$ blocks the visibility of the guards. 
    b) Shooting rays from reflex vertices. 
    c) Enhanced visibility region. d) Diminished visibility region.
    }
\label{fig:SmallChanges}
\end{figure}

\newpage
\section{Introduction}

For most algorithmic problems, there is a \emph{practice-theory gap}.
That is, a gap between algorithms that perform best in practice and that perform best in theory.
A striking example is the euclidean traveling salesperson problem. It has long been known to be NP-hard, see~\cite{kisfaludi2020gap} for the most recent hardness results.
Despite those theoretical hardness results, in practice, new records of optimally solved large scale instances keep making the news~\cite{GermanTSP}.

In this work, we narrow this gap for the art gallery problem.
We present several closely related algorithms.
Some of them have provable performance guarantees under some
mild assumptions.
Other algorithms perform comparably well to the best state-of-the-art practical algorithms on the art gallery problem, however at the cost of losing these theoretical performance guarantees. 

\paragraph*{Motivation.}
The theoretical study~\cite{Beyond-Worst-Case} of any algorithm in general has three main motivations:
\\\textbf{ Prediction:} Given a specific algorithm, we want to predict how well it will perform.
    This can help us to decide if we want to implement it in the first place.
\\\textbf{ Explanation:} Assuming that we know how well an algorithm performs,
    we want to find an underlying reason for its performance.
\\\textbf{ Invention:} Can we design better algorithms in practice?

These goals are an important measure of success.
In an ideal world, theory and practice go hand-in-hand.
Theory researchers design and analyze algorithms and practically oriented researchers implement those algorithms and rigorously test them.
These tests help to adapt models, 
assumptions, and performance measures.
At the same time, practical researchers may find other
approaches fruitful and then in turn theoretical researchers need to find out why those 
approaches work.
This \emph{theory-practice cycle} should ideally lead to a very good 
theoretical understanding and very fast practical algorithms.
Unfortunately, as we will lay out, the study on the art gallery problem
has two almost independent lines of research. 
One solely theoretical and another purely practical one.
Thus, our theoretical study has limited value to predict,
 to explain, or to invent.
We believe that a solid theoretical understanding 
 will be also useful for experimental research.
Our main contribution is to  narrow this
gap and give a starting point to the theory-practice cycle.

\paragraph*{Related work.}
Let us start by considering practical research. 
Various researchers implemented algorithms and found the optimal solution on synthetic instances.
Among those algorithms with published code, the best can find the optimal solution of instances of up to~$500$ vertices {and sometimes even up to~$2500$ vertices}~\cite{tozoni, PracticalARTMasterFriedrich, PracticalARTbottino2011, PracticalARTkroller2012, PracticalARTcouto2011, PracticalARTbottino2008, PracticalARTcouto2008, PracticalARTamit2010, engineering, SoCGVideo}. 
All of these works rely on solving integer programs.
The idea is to discretize the problem and hope that the solution to the discretized problem also gives the optimal solution to the original problem.
To be more specific, the approach is to generate a \emph{candidate} set~$C$  and a \emph{witness} set~$W$, 
then compute the optimal way to guard~$W$ using the minimum number of guards in~$C$.
In an iterative manner, more candidates and witnesses are generated until the optimal solution is found. 
While these algorithms \emph{usually} find the optimal solution, 
there is a simple polygon on which these algorithms run \emph{forever}~\cite{abrahamsen2017irrational}, because 
the guards that make up the optimal solution for this specific polygon have irrational coordinates (we will refer to such a solution as irrational guards).
{We do know very few sufficient conditions under which these practical algorithms would give the optimal solution in a finite amount of time.
For instance, if the polygon is convex or witnessable (see below). 
However, those criteria are rarely met in practical instances.}

Let us continue by considering worst-case optimality.
We know that the art gallery problem is decidable using tools from real algebraic geometry. 
The idea is to encode guards by real numbers 
and use polynomial equations and inequalities to encode visibility~\cite{EfratH06}.
Currently, researchers working on solving polynomial equations repeatedly report that~$12$ is the maximum number of variables they could handle, using exact methods.
(The second author asked this at several conferences and workshops.)
Expressing the art gallery problem using polynomial equations has a considerable blow-up and needs existentially and universally quantified variables.
We think it is unlikely that those methods can be applied to decide if a polygon can be guarded with three guards,
but we leave this as an interesting open problem. 
To summarize, we would be surprised if these exact methods could even find an optimal solution of size~two for the art gallery problem.
As the art gallery problem is \ER-complete~\cite{ARTETR, stade2022complexity}, we know that methods from real algebraic geometry are unavoidable for any exact algorithm, without additional assumptions.
The \ER-completeness of the art gallery problem~\cite{ARTETR, stade2022complexity} is a very good explanation of why researchers were not able to find algorithms that avoid the aforementioned algebraic methods.
Furthermore, it explains why it is hard to prove that practical  discretization schemes work from a theoretical perspective.
To be precise, if there would be a discretization scheme with worst-case guarantees, then $\NP\neq \ER$.

To the reader not familiar with the \emph{existential theory of the reals} (\ER),
it may be insightful to get some background information.
It is defined analogously to \NP. 
Recall that a problem is in \NP if for every input there is a binary witness~$w\in\{0,1\}^*$
and a verification algorithm that runs on the word RAM in polynomial time.
We say a problem is contained in~\ER if for every input there is a \emph{real} witness~$w\in\R^*$ and a verification 
algorithm that runs on the \emph{real RAM} in polynomial time.
Note that the usage of witness in the context of complexity classes is very different from the usage of witness in the context of the art gallery problem~\cite{robustFramework}.
The complexity class~\ER is important as it gives a precise characterization of many important 
algorithmic problems.
Important algorithmic problems are related to graph drawing~\cite{AnnaPreparation,bienstock1991some,LindaPHD,cardinal2017recognition, AreasKleist, cardinal2017intersection, mcdiarmid2013integer}, the art gallery problem~\cite{ARTETR, stade2022complexity}, geometric packing~\cite{etrPacking}, linkages~\cite{abel}, polytopes~\cite{richter1995realization, NestedPolytopesER}, machine learning~\cite{etrNeurons,train-fully-neural-networks, Z92}, matrix factorization~\cite{Schaefer-ETR,   shitov2016universality, Shitov16a}, order types~\cite{shor1991stretchability, mnev1988universality} and various other topics~\cite{erickson2019optimal, garg2015etr,  kang2011sphere, 
Schaefer2010, schaefer2013realizability, deligkas2020square, SimplexER, GeoThicknessER, Hausdorf-UER, MS22, HMWW24}.
None of these algorithmic problems are known to be contained
in~\NP. 
The problem to show \NP-membership is that it is seemingly
impossible to describe a discrete witness of polynomial-size.
The complexity-class~\ER relates the issue for each of those algorithmic problems.
To be more specific, either all of those algorithmic problems admit a polynomial size witness or none of them do.
This also makes it difficult to discretize any of those problems
as any such discretization, usually, would also give rise to a 
polynomial sized witness.

Maybe the main approach to overcome the discretization problem for the art gallery problem was to 
consider variants.
Specifically, restricting guard placements to the vertices of the polygon discretizes the candidate set, and thus allows us to avoid this discretization problem, see~\cite{o1987art,PritamConstantFactor,ApproXKirkpatrick15, AlmostConvex,PerfectGraphApproach,ghosh2010approximation}. 
While many of these results are very intricate and innovative, they do not play an important role in practice.
We think that there are two main reasons for this.
In practice, the discretization problem is solved to a fairly large degree.
That is, the candidates and witnesses are usually sufficient to determine
an optimal solution, even if this cannot be proven, i.e., restricting
the guards to the vertices has only a small added benefit.
The second reason is that theoretical research focuses on the study of approximation algorithms,
which may not be so relevant in practice.

The study of approximation algorithms is mainly motivated by the
fact that the art gallery problem and many of its variants are \NP-hard~\cite{eidenbenz2001inapproximability,LeeLin86,SchuchardtH95,BonnetW1HARD}.
Thus, we cannot expect a polynomial-time algorithm, assuming P~$\neq$~\NP.
However, there are four important facts to consider when dealing with the practical performance of approximation algorithms:\\
    \textbf{Discretization:} While the concept of approximation relaxes the worst-case condition, it has not been shown to help sufficiently to find a nice discretization~\cite{BonnetM17Approx}. 
    (See below for a detailed discussion.)\\
   \textbf{ IP-solvers:} Once we have
    a discretization of the art gallery problem, IP-solvers often find the optimum
    for that discretized problem very fast in practice.
    (Often only $10\%$ of the total running time is spent on solving IPs~\cite{engineering}.)
    Thus, the aim of having a polynomial-time algorithm is not so important.\\
    \textbf{Visibility:} The real bottleneck in practical performance seems to be computing visibilities between candidates and witnesses. Approximation algorithms have to do these computations as well.
    Thus, they may just not be any faster.\\
    \textbf{Non-optimality:} By definition approximation algorithms do not
    give the optimal solution. 
    This may just be a too high price to pay.

\paragraph*{Contribution.}
In this paper, we introduce the notion of \visionstability.
We argue that most practical polygons are \visionstable.
Using this assumption, we give a theoretical solution to the 
discretization problem.
Based on this discretization, we develop the \emph{one-shot (\visionstable)} algorithm that is \emph{guaranteed} to find an optimal solution for every \visionstable polygon.
The algorithm takes polynomial preprocessing time and thereafter solves one integer program.
In particular, this shows that the art gallery problem is in~\NP for {simple} \visionstable polygons.
We refine the algorithm and present the \emph{iterative (\visionstable)} algorithm.
The iterative algorithm delays many steps of the one-shot algorithm.
Both algorithms are \emph{reliable}, in the sense that even if the input polygon is not \visionstable, the reported result is correct.
In order to make the iterative algorithm comparable
in performance to the state-of-the-art,
we implemented several practical improvements.
The downside of this approach is that we are not able to show theoretical performance guarantees for this practical version of the iterative algorithm.
On the other hand, these improvements make it so that our implementation runs as fast as the state-of-the-art algorithms. 

We believe that the concept of \visionstability contributed to all
three main goals, as mentioned above. 
It gives an \emph{explanation} of why the discretization
problem is solvable in practice.
It led to the \emph{design} of a new practical algorithm, that, as we will see, works quite differently from previous algorithms in many aspects.
And finally, it \emph{predicted} correctly that this algorithm is feasible,
which we verified in practice.

Let us emphasize here that there is still a large gap,
between theory and practice.
Our predicted running times are far worse than the 
observed practical running time.
Furthermore, in practice, we used many techniques for which 
it seems very hard to give a profound theoretical explanation
of why they work so well.
We hope our work contributes to a deepened theoretical understanding
of the art gallery problem.

\paragraph*{Theoretical Practical Work.}
{While the gap between theory and practice is indeed deep, we would like to emphasize that there are indeed connections between them that we are aware of.} 
Furthermore, there may be more examples that we are not aware of.
For instance, Ghosh's approximation algorithm~\cite{ghosh2010approximation} had also a large influence 
on designing heuristics and exact algorithms.
Another example is the algorithm that is capable to discretize the
art gallery problem, in case that it is witnessable~\cite{KnauerWitness}.
In that case the algorithm by Tozoni et al.~\cite{tozoni2013practical} is guaranteed to find the optimal solution.

\paragraph*{\Visionstability.}
Before we formally define the notion of \visionstability, which is the key concept of this paper, let us give some
basic intuition on discretizing the art gallery problem. 
The aim of the discretization process is to 
define a suitable \emph{candidate set} $C$ such 
that $ \opt[C]$, the optimum guard set  
restricted to $C$,  is ``pretty close'' to the actual optimum $\opt$.
After defining $C$ in a suitable fashion, we can
compute $\opt[C]$ by solving an integer program.
We know~\cite{BonnetM17Approx} 
that the grid $\Gamma = w\Z^2 \cap P$ with a small 
enough width $w$
is a good such candidate set in the sense that 
$|\opt[\Gamma]| \leq  10\, |\opt|$, under some mild general position assumptions.
Using smoothed analysis~\cite{ArxivSmoothedART,robustFramework}
and a suitable random model of perturbation, 
we even know that $|\opt[\Gamma]| = |\opt|$, with high probability.
In summary, a fine enough grid contains the optimal solution. {Under certain assumptions}, however, the size of the candidate set~$C$ is also important.
The grid~$\Gamma$ as described above is huge. 
Thus, computing $\opt[\Gamma]$ is 
infeasible in practice, 
making it very desirable to attain a candidate set $C$ with 
similar properties as $\Gamma$ and of polynomial size.

As a first step, we realize that a \emph{uniformly distributed} candidate set 
would either be too large or too coarse.
Thus, at some spots, we want the candidate set to be denser 
and at other places in the polygon, we want it to be more sparse. 
Let us say two 
guard positions $g,g' $ have almost the same visibility region.
Then, hopefully, it is not so important to keep track
of both positions, but keeping only one of the positions
is sufficient. 
However, if the visibility regions
of $g$ and $g'$ are very different, we should potentially 
include both of them in our candidate set $C$.
Another, almost trivial, observation is that a small movement
of $g$ towards $g'$ may dramatically change visibility regions, 
if $g$ and $g'$ are close to a \emph{reflex} vertex.
A reflex (or a concave) vertex is a vertex with an internal angle strictly greater than $\pi$ radians.
If $g$ and $g'$ are both very far 
away from all reflex vertices
their visibility regions change almost not at all, 
see Figure~\ref{fig:SmallChanges}~a).
Those two observations suggest that we may want to pick a candidate set that is denser closer to reflex vertices and more sparse farther away from reflex vertices.

This motivates the following approach. 
Shoot rays from every reflex vertex, such that the angle between any two rays is at most some 
given angle $\delta$, where $\delta$ is not too small.
This defines an arrangement~$\A$.
All intersection points of the rays within the polygon $P$
define our candidate set~$C$, 
see Figure~\ref{fig:SmallChanges}~b).
On an intuitive level, for every point $p\in P$, there is a candidate $c\in C$ such that the visibility regions of~$p$ and~$c$ are similar.
When $r$ denotes the number of reflex
vertices, we get a total number of rays upper bounded by
$O(\frac{r}{\delta})$. Thus, the candidate 
set has size~$O(\frac{r^2}{\delta^2})$.

We are now ready for the formal definition of \visionstability.
Given a simple polygon $P$ and a point $q$, the visibility region of $q$ is defined as $\vis(q) = \{x\in P: x \textrm{ sees $q$}\}$. 
Let $r$ be a reflex vertex of $P$ and we assume that $q$ sees $r$.
Then, given some $\delta>0$, we can define the \emph{visibility enhancing region} $A = A(q,r,\delta)$ as follows.
Rotate a ray $v$ with apex $r$ 
by some angle of $\delta$,
clockwise and counter-clockwise. 
At each time of the rotation, the ray $v$ defines a maximal segment inside~$P$ with endpoint~$r$.
In some cases, the segment is the single point $r$, see Figure~\ref{fig:AugmentationRegion}.
The region $A(r,\delta)$ is the union of all those segments.
\begin{figure}[tbp]
    \centering
    \includegraphics[page=12]{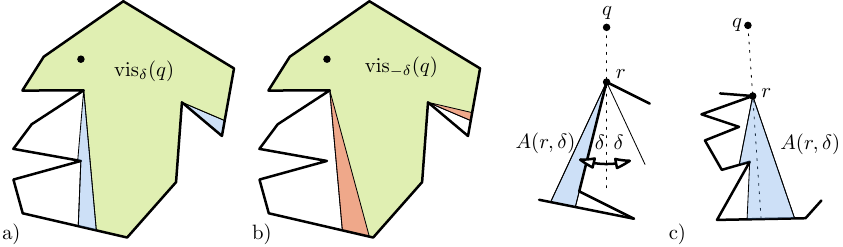}
    \caption{A ray is rotated around a reflex vertex $r$. It defines a region that is either added or removed from the visibility region.}
    \label{fig:AugmentationRegion}
\end{figure}
For some $\delta>0$, we define the $\delta$-\enhanced visibility region $\vis_{\delta}(q)$ of $q$ as $\vis(q)$ and, for every suitable reflex vertex, 
we add the region $A(r,\delta)$.
We define the $\delta$-\diminished visibility region $\vis_{-\delta}(q)$ of $q$ as $\vis(q)$ after we remove the regions $A(r,\delta)$,  for every applicable reflex vertex $r$.
To be precise, we define $\vis_{\delta}(q)$ to be a closed set, both for $\delta>0$ and $\delta\leq 0$.
Given a polygon $P$, we say that the point set{~$G$} is \emph{$\delta$-guarding} $P$ if 
$\bigcup_{g \in G} \vis_{\delta}(g) = P$.
We denote by $\opt(P,\delta)$ the size of the minimum 
$\delta$-guarding set.
For brevity, we denote $\opt(P,0)$, merely by $\opt(P)$ or, if $P$ is clear from the context, by~$\opt$.
We say that a polygon $P$ is \emph{\visionstable} or
 equivalently has \visionstability $\delta>0$,
 if $\opt(P,-\delta) = \opt(P,\delta)$.
Note that for $\delta' > \delta > 0$, it holds that $P$ has \visionstability $\delta'$ implies that $P$ also has \visionstability $\delta$. 
Thus, the smaller the \visionstability the larger we expect the candidate set to be.
To avoid confusion, at no time are we interested in actually computing~$\opt(P,x)$, for any value $x\neq 0$. 
The notion of \visionstability is purely a theoretical concept in 
order to formulate assumptions on the underlying polygon.

\begin{figure}[tbp]
    \centering
    \includegraphics[page=13]{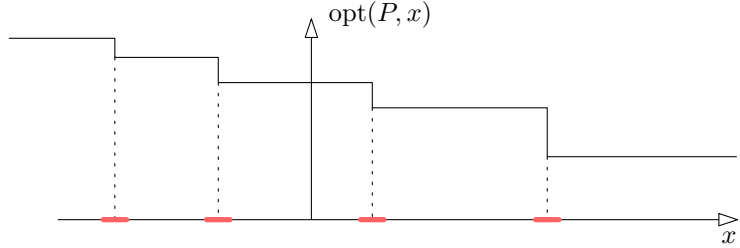}
    \caption{On the $x$-axis, we have the value by which we either diminish or enhance guards. On the $y$-axis we display the optimal number of guards. The function $\opt(P,x)$ only takes discrete values and is monotonically decreasing. Thus, it has a finite number of breakpoints.}
    \label{fig:SmoothArgument}
\end{figure}

Here, we will give a summary of the justification of \visionstability in the context of solving the art gallery problem. 
First, without any assumption, we cannot avoid algebraic methods unless~$\NP = \ER$.
Furthermore, there is an argument to be made related to smoothed analysis.
Consider $\opt(P,x)$ as a function~$f$ of $x$, see Figure~\ref{fig:SmoothArgument}. Clearly,~$f$ takes only
a discrete number of values, by definition. 
Furthermore, the function is monotonically decreasing, as 
$\vis_x(p) \subseteq \vis_{x'}(p)$, if $x\leq x'$.
Thus, the function~$f$ has only a finite number of breakpoints.
If a breakpoint happens to be at zero then~$P$ is not \visionstable.
Intuitively, this seems unlikely.

\paragraph*{Discretization.}

Using \visionstability, we exhibit a candidate set~$C$ of polynomial size. The idea is to use the vertices of arrangement~\A from Figure~\ref{fig:SmallChanges}~b). 
For technical reasons, we will not use the arrangement~\A, but a refinement of it.
Note that the smaller the \visionstability, the weaker the assumption on the underlying polygon, and thus the larger the required candidate set. 
\begin{restatable}[Candidate Set]{theorem}{Candidates}
    \label{thm:Candidates}
     Given a {simple} polygon with \visionstability~$\delta$ and~$r$ ($>0$) reflex vertices, it is possible to compute a candidate set~$C$ (of size $O(\frac{r^4}{\delta^2})$) in polynomial time on a real RAM.
     The candidate set~$C$ contains an optimal solution.
\end{restatable}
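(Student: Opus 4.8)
The plan is to construct the candidate set~$C$ in two stages and then argue it contains an optimal solution using the definition of \visionstability. First I would build the ray arrangement~$\A$ sketched in Figure~\ref{fig:SmallChanges}~b): from each reflex vertex~$r_i$, shoot $O(1/\delta)$ rays so that consecutive rays make an angle at most~$\delta$ (including the two rays along the polygon edges incident to~$r_i$, and refining where necessary so the angular resolution is everywhere below~$\delta$). This yields $O(r/\delta)$ rays, hence $O(r^2/\delta^2)$ pairwise intersection points inside~$P$. However, as the excerpt hints (``for technical reasons, we will not use the arrangement~\A itself, but a refinement of it''), these vertices alone do not suffice: the visibility region of a candidate near a reflex vertex can still differ too much from that of the point it is meant to represent unless the candidate also lies on the right side of every \emph{extension line} through pairs of vertices of~$P$. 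So the second stage is to additionally overlay the $O(r^2)$ lines through pairs of (reflex) vertices, or equivalently intersect~$\A$ with this coarser arrangement; the vertices of the common refinement number $O((r/\delta)^2 \cdot r^2) = O(r^4/\delta^2)$, matching the claimed bound. Each such vertex is the solution of a constant-size system of linear equations with coefficients computed from the input coordinates, so the whole construction runs in polynomial time on a real RAM.

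The core of the argument is the containment claim. Let $G^\star$ be an optimal (ordinary) guard set, so $|G^\star| = |\opt(P)| = \opt(P,0)$. Since \visionstability~$\delta$ forces $\opt(P,-\delta) = \opt(P,\delta)$ and the chain $\opt(P,\delta) \le \opt(P,0) \le \opt(P,-\delta)$ always holds (enhanced guards see more, diminished guards see less), we get $\opt(P,\delta) = \opt(P,0) = \opt(P,-\delta)$. Now take an optimal \emph{$\delta$-guarding} set $G_\delta$ with $\bigcup_{g\in G_\delta}\vis_\delta(g) = P$ and $|G_\delta| = \opt(P,\delta)$. I would then show that each $g \in G_\delta$ can be \emph{snapped} to a nearby candidate $c(g) \in C$ in such a way that $\vis_{-\delta}(c(g)) \supseteq \vis_\delta(g)$ — intuitively, moving from~$g$ to the corner of the arrangement cell containing~$g$ costs at most an angle~$\delta$ of visibility at every reflex vertex, and because $c(g)$ is a vertex of the \emph{refined} arrangement it also sits on the correct side of every vertex-pair extension line, so no spurious combinatorial jump in the visibility region occurs. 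Consequently $\{c(g) : g \in G_\delta\}$ is a $(-\delta)$-guarding set of size at most $|G_\delta| = \opt(P,-\delta)$, hence exactly optimal among $(-\delta)$-guards. Since a $(-\delta)$-guarding set is in particular an ordinary guarding set of the same size, and its size equals $\opt(P,0)$, we conclude $C$ contains an optimal ordinary guard set.

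The main obstacle is the snapping lemma: proving that for every $g \in P$ there is a vertex $c$ of the refined arrangement with $\vis_{-\delta}(c) \supseteq \vis_\delta(g)$. This requires a careful case analysis of how $\vis_\delta(g)$ is delimited — its boundary consists of polygon edges and of ``windows'' pinned at reflex vertices, each rotated by~$\delta$ relative to the true sightline — and one must check that moving $g$ within its arrangement cell to a chosen corner (i) never rotates any such window by more than the slack~$\delta$ built into the diminished region at~$c$, and (ii) never crosses an extension line that would make some reflex vertex newly block, or newly expose, a whole sub-polygon. Point~(ii) is exactly why the plain arrangement~$\A$ is insufficient and the refinement is needed; making this precise, and bounding the refinement's size, is where the $O(r^4/\delta^2)$ (rather than $O(r^2/\delta^2)$) comes from. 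I would isolate this as a standalone geometric lemma, prove it reflex-vertex by reflex-vertex, and then the theorem follows by the counting and the $\opt$-chain above.
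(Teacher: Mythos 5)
Your construction of the candidate set is essentially the paper's (rays of angular resolution $\Theta(\delta)$ from the reflex vertices, overlaid with the reflex chords, giving $O(r^4/\delta^2)$ arrangement vertices), but your containment argument has a genuine gap: the snapping lemma you rely on is false, and it fails in exactly the direction that matters. You ask, for every guard $g$, for a nearby arrangement vertex $c$ with $\vis_{-\delta}(c) \supseteq \vis_{\delta}(g)$. At a reflex vertex $r$ that partially blocks the view, the window bounding $\vis_{-\delta}(c)$ is the true window at $c$ rotated \emph{unfavourably} by $\delta$, while the window bounding $\vis_{\delta}(g)$ is the window at $g$ rotated \emph{favourably} by $\delta$; for the claimed containment the sightline through $c$ and $r$ would have to be ahead of the sightline through $g$ and $r$ by at least $2\delta$. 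But moving $g$ to a corner of its cell changes that sightline by at most the power of the cell (about $\delta/2$ in this construction), and in a direction you do not control --- and distinct reflex vertices would in any case pull the choice of $c$ in conflicting directions. So for a generic cell not incident to a reflex vertex, no vertex $c$ of that cell satisfies the containment: one already has $\vis_{-\delta}(c) \subsetneq \vis(c) \approx \vis(g) \subsetneq \vis_{\delta}(g)$. Your own phrase that the snap ``costs at most an angle $\delta$'' concedes the point: snapping can only be shown to \emph{lose} roughly $\delta$ of vision, never to gain the $2\delta$ your inequality requires.

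The repair is to run the replacement in the opposite direction, which is how the paper argues. Start from a minimum $(-\delta)$-guarding set $G_0$, whose size equals $\opt(P,0)$ by \visionstability, and snap each guard to the representative vertex of the face of the arrangement containing it. Because every face has power at most $\delta/2$, is incident to at most one reflex vertex, and is not properly crossed by a reflex chord, Lemma~\ref{lem:Face-Point-Replacement} (applied with $\gamma=-\delta$) shows the snap degrades vision by at most $\delta$, so the snapped set is $0$-guarding, i.e.\ an ordinary guard set of size $\opt$ contained in $C$. Starting from the $\delta$-guarding optimum, as you propose, cannot work with a lossy snap: you would only obtain a $2\delta$-guarding set, which says nothing about ordinary guarding. (Two smaller points: the paper's arrangement also subdivides faces so that each is incident to at most one reflex vertex, a hypothesis your sketch omits but Lemma~\ref{lem:Face-Point-Replacement} needs; and the vertex count of an overlay is quadratic in the total number of segments, not the product of the two vertex counts, though your final $O(r^4/\delta^2)$ bound is still correct.)
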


Although some of the details of the proof are tedious and involved, we see the main contribution on a conceptual level.
It is surprising to us that the \visionstability was 
not formulated earlier.
In particular, regarding the popularity of the art gallery problem within computational geometry 
and the problematic lack of algorithmic results.
Let us stress that while we use augmented and diminished visibilities as abstract concepts, 
we are only interested in solving the original art gallery problem.

Let us give an intuition of the proof idea of Theorem~\ref{thm:Candidates}. 
As mentioned before, we use all the vertices of a refinement of the arrangement~\A, as in Figure~\ref{fig:SmallChanges}~b), as our candidate set~$C$.
Consider a minimum size set~$G_0$ that is $(-\delta)$-guarding~$P$.
Replace each guard~$g\in G_0$ by a guard $g'\in C$ that is ``close by''. This gives a new solution $G_1\subseteq C$ of equal size.
Using that $G_0$ is $(-\delta)$-guarding, we can show that $G_1$ is guarding~$P$ in the usual sense.
As $\opt(P,-\delta) = \opt(P,0) = \opt(P,\delta)$, we can conclude that~$G_1\subseteq C$ is of minimum size. 
The technical demanding part of the proof is to show that for every point $p\in P$ there exists a point $c\in C$ such that $\vis_{-\delta}(p) \subseteq \vis(c)$.

Theorem~\ref{thm:Candidates} answers an open question posed by several authors of related works~\cite{PracticalARTamit2010, tozoni2013practical, Quest-Tozoni, engineering}.  For instance, De~Rezende~et~al.~\cite{engineering} states 
``\emph{Therefore, it remains an important open question whether there
exists a discretization scheme that guarantees that the algorithm always converges~[...].}''

In the next paragraph, we discuss how the discretization scheme leads to a correct algorithm that avoids algebraic methods.

\paragraph*{One-Shot \visionstable algorithm.}
Note that in practice there does not exist an algorithm which can be used to compute whether or not a polygon has \visionstability~$\delta$. Therefore, it is important that our algorithms work correctly even if the underlying polygon is not \visionstable. 
Specifically, our algorithms will either compute the optimal solution or report that the input polygon had lower \visionstability than specified by the user. 
We say our algorithms are \emph{reliable}, as they never return an incorrect answer.

\begin{restatable}[One-Shot \visionstable Algorithm]{theorem}{OneShot}
     \label{thm:OneShotAlgo}
    Let $P$ be an $n$ vertex {simple} polygon, with~$r$ reflex vertices. 
    We assume that a suggested value for~$\delta$ is given as part of the input.
    Then the one-shot algorithm has a preprocessing time of $O(\frac{r^{8}}{\delta^4}\log n + n\log n )$ on a real RAM and additionally solves exactly one integer program.
    The algorithm either returns the optimal solution or reports that the given suggested value for~$\delta$ is incorrect.
\end{restatable}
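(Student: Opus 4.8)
The plan is to build the one-shot algorithm directly on top of the candidate set from Theorem~\ref{thm:Candidates}, and to prove correctness by a sandwiching argument between the $(-\delta)$-guarding and $\delta$-guarding optima, which coincide by the \visionstability assumption. First I would fix the candidate set $C$ of size $O(\frac{r^4}{\delta^2})$ guaranteed by Theorem~\ref{thm:Candidates}, together with the underlying refined ray arrangement $\A$ from Figure~\ref{fig:SmallChanges}~b). For the witness side, I would use the faces of this arrangement (or a further refinement where visibility is constant): within one face, the set of candidates in $C$ that $\delta$-see every point of the face is the same, so it suffices to pick one representative point per face as a witness. This gives a \emph{finite} integer program: a $0/1$ variable $x_c$ for each $c\in C$, a covering constraint $\sum_{c : f \subseteq \vis_\delta(c)} x_c \ge 1$ for each face $f$, minimizing $\sum_c x_c$. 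The preprocessing cost is dominated by computing the arrangement and the visibility/$\delta$-visibility information: the arrangement has $O(\frac{r^2}{\delta^2})$ vertices hence $O(\frac{r^4}{\delta^4})$ faces in the worst case, and for each candidate we compute its $\delta$-visibility region and locate which faces it covers, giving the stated $O(\frac{r^8}{\delta^4}\log n)$ bound after a point-location / sorting overhead of $\log n$; I would not grind through the exact accounting here.

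Next I would argue correctness in the \visionstable case. Let $k^\star = \opt(P,0) = \opt(P,\delta) = \opt(P,-\delta)$. The integer program as written asks for a minimum subset of $C$ that $\delta$-guards $P$ (covering every face, hence every point). On one hand, by Theorem~\ref{thm:Candidates} applied with the \visionstability hypothesis, $C$ contains an optimal \emph{ordinary} guard set $G_1$ of size $k^\star$; since ordinary visibility is contained in $\delta$-visibility, $G_1$ is feasible for the IP, so the IP optimum is $\le k^\star$. On the other hand, any feasible IP solution is a $\delta$-guarding set, so its size is $\ge \opt(P,\delta) = k^\star$. Hence the IP optimum equals $k^\star$ and any optimizer is an optimal ordinary guard set — so the algorithm can simply return it. The point is that we phrase the covering constraints with $\delta$-visibility (which is what keeps the candidate set small and the IP finite) but the \visionstability equation collapses the three optima so the answer is still exactly $\opt$.

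Finally I would handle the \emph{reliability} clause for polygons that are \emph{not} \visionstable. After solving the IP and obtaining a candidate guard set $G$ of size $m$ (the IP optimum), the algorithm verifies whether $G$ actually guards $P$ in the ordinary sense — this is a polynomial check (compute $\bigcup_{g\in G}\vis(g)$ and test equality with $P$, or equivalently re-test the faces with ordinary visibility). If $G$ guards $P$, then $m \le \opt$ trivially (it is a guard set) and $m \ge \opt(P,\delta) \ge \opt$ is false in general — instead I would argue $m = \opt$: $m\ge \opt(P,\delta)$ always, and if additionally $G$ is an ordinary guard set then $m \ge \opt$, while Theorem~\ref{thm:Candidates}'s construction gives $m \le \opt$ precisely when \visionstability holds; so in the reliable branch we detect success exactly when the ordinary-guarding test passes and in that case output $G$ as optimal. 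If the test fails, we report that the \visionstability is smaller than $\delta$. I expect the main obstacle to be the second, more delicate half of the reliability argument: proving that whenever the IP returns a set that does guard $P$ ordinarily, that set is genuinely optimal and not merely feasible — this requires re-deriving the lower bound $m \ge \opt$ carefully from $m \ge \opt(P,\delta)$ together with the fact that the $\delta$-relaxation can only decrease the optimum, so that $\opt(P,\delta) \le \opt \le m$ forces equality once ordinary-guarding is confirmed. The arrangement/IP-size bookkeeping and the point-location details are routine by comparison.
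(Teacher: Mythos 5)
Your plan breaks at its central step: the claim that ``any optimizer is an optimal ordinary guard set.'' Vision-stability only equates the three \emph{optimal sizes} $\opt(P,-\delta)=\opt=\opt(P,\delta)$; it says nothing about which sets attain them. Your IP, with hard constraints ``$f\subseteq\vis_\delta(c)$ for some chosen $c$'' and objective $\sum_c x_c$, characterizes minimum $\delta$-guarding subsets of $C$, and such a set may well fail to guard $P$ in the ordinary sense (enhanced vision sees around reflex vertices). Then your ordinary-guarding check fails and your algorithm reports ``vision-stability smaller than $\delta$'' on a polygon that \emph{does} have vision-stability $\delta$ --- exactly the guarantee the theorem forbids. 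The paper's IP is engineered to prevent this: the constraints use \emph{ordinary} visibility, face-candidates are admitted at weight $1+\eps$, and unseen face-witnesses are allowed at cost $\eps$. Then Lemma~\ref{lem:Face-Point-Replacement} is used in both directions: with $\gamma=-\delta$, replacing each guard of a minimum $(-\delta)$-guarding set by $\representative(\face(g))$ gives an all-vertex-guard feasible solution of value $\opt$ seeing every face-witness (so $s\le\opt$); with $\gamma=0$, any IP solution yields, via representatives, a $\delta$-guarding set, so $s\ge\opt(P,\delta)=\opt$. Since $s=\opt$ exactly, the $\eps$-weights force the optimizer itself to consist of point guards seeing all face-witnesses, which is what the algorithm outputs. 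Your proposal has no mechanism playing this role. Relatedly, your feasibility step ``ordinary visibility is contained in $\delta$-visibility, so $G_1$ is feasible'' is not a valid inference: $G_1$ covering a face jointly by several guards does not give a single guard whose $\delta$-enhanced region contains the whole face. The statement is in fact true for faces of the refined arrangement, but only because no reflex chord properly crosses a face and faces have power at most $\delta/2$, i.e.\ precisely the content of Lemma~\ref{lem:Face-Point-Replacement}, which your proposal never invokes; treating Theorem~\ref{thm:Candidates} as a black box skips the geometric heart of the proof.

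The reliability half also has a genuine gap. Without vision-stability you have no upper bound of the form (IP value) $\le\opt$ for your IP --- an optimal ordinary guard set need not lie in $C$, and even if it did it need not satisfy your single-candidate $\delta$-covering constraints --- so confirming that the returned set guards $P$ only yields $m\ge\opt$; your ``$\opt(P,\delta)\le\opt\le m$ forces equality'' has no source for $m\le\opt$. In the paper, reliability is exactly what the face-candidates buy: any true optimal point-guard set $F$ projects to the face-guards $\face(F)$, which form a feasible solution of value strictly below $\opt+1$, so if the solver returns an all-point-guard solution seeing all face-witnesses, its size is at most $\opt$, hence equal to $\opt$. (Minor bookkeeping aside: the arrangement has $O(r^2/\delta)$ segments, hence $O(r^4/\delta^2)$ vertices and faces, not $O(r^4/\delta^4)$ faces.)
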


This is the first algorithm for the classical art gallery problem that 
avoids using algebraic methods and gives an exact solution.
Note that we do not improve over algebraic methods in terms of worst-case time complexity.

The core idea of the algorithm is to utilize the candidates 
from Theorem~\ref{thm:Candidates} and use them to build 
a set-cover instance, which can then be solved using integer programming.

Let us point out that next to vertex-candidates, we also use faces as candidates. This is a distinct feature of our algorithm. All previous algorithms used only points as candidates.
In this way, we can easily check that we have not missed a better solution. 
Say the algorithm returns a solution~$G$ of size $k$
and suppose for the purpose of contradiction that there would be a smaller 
solution~$G'$ of size~$k'<k$. 
Pick for each $g\in G'$ a face containing~$g$. 
(If~$g$ lies on an edge or a vertex of~\A make an arbitrary choice.)
This defines a set $\F$ of faces with $|\F| = k'$.
Now we arrive at a contradiction, as $\F$ is a valid guarding of the polygon~$P$.
The algorithm primarily minimizes the number of used guards (vertex or face guards), but among the minimum size solutions, it prefers vertex guards over face guards. 
We say that using vertex guards is a \emph{soft constraint}.
We will show later that if every guard is a vertex guard and everything is guarded, 
we have found an optimal solution. 
The idea of hard and soft constraints is that soft constraints are only relevant once all hard constraints are satisfied.

Similar to previous algorithms, we also use witnesses.
In the context of the art gallery problem, we require 
only that all the witnesses are seen, instead of the entire polygon.
The hope is that the computed guards will also see the entire polygon.
This is a second important step to discretize the problem.
In our case, the witness set~$W$ will be all \emph{faces} and
vertices of some arrangement~\A. 
This is a second feature that makes our algorithm unique.
As all the faces of~\A are covering~$P$, seeing all faces guarantees that~$P$ is completely seen.
We impose the \emph{hard constraint} that all point-witnesses are seen and the \emph{soft constraint} that each face-witness must be seen by at least one guard.

Due to the hard constraints, we know that our solution is at most the optimal size (theoretically it could be smaller, as face-guards are more powerful than point-guards). 
If we see all face-witnesses, then we know that the guards are actually guarding~$P$. Furthermore, if all guards are points, we know that we have found a minimum size point guard set.
In case that one of the soft-constraints is violated, we will be able to deduce that the underlying polygon was not \visionstable, with the suggested value~$\delta$ given in the input.
The last statement is the technically demanding part of the proof.
One of the central concepts of the proof is the \emph{\power of a face}. It measures how small a face 
is while taking into account the distance to reflex vertices.

\vspace{4pt}

\emph{For the remainder of the paper, whenever we refer to a candidate, witness or a guard, we could mean either a point or a face, if not further specified.}

\vspace{4pt}

{The choice of the \visionstability $\delta$ is somewhat arbitrary and is left to the user.
It is not trivial to upper bound the worst possible $\delta$. 
Clearly, if the polygon is convex then $\delta$ is unbounded, as there is no reflex vertex and no visibility polygon will be altered.
If there is at least one reflex vertex then 
$\delta$ must be smaller than $2\pi$ trivially. 
However, due to the way that the algorithm works internally, we think that $\pi/2$ is maybe the largest reasonable choice. 
}

\paragraph *{Parameterized algorithms.}
As the size of the integer program of the one-shot algorithm only depends on $r$ and $1/\delta$, it exhibits an FPT algorithm, with respect to the number of reflex vertices~$r$, for every fixed~$\delta$.
The most natural parameter for the art gallery problem is the solution size. 
As the art gallery problem is W[1]-hard, when parameterized by the solution size~\cite{BonnetW1HARD},
 research focused on other parameters~\cite{agrawal2020parameterized, AlmostConvex, agrawal2020parameter, ashok2019efficient, khodakarami2015fixed, khodakarami20171, TerrainConflictFreeFPT}. 
Specifically, Agrawal~et al.~\cite{AlmostConvex} described an elegant FPT algorithm for the art gallery problem.
They considered three variants of the art gallery problem defined by restricting guard positions and the part of the polygon that needs to be guarded.
By considering the number of reflex vertices as the parameter,
they gave a positive answer to a question posed by Giannopoulos, for those variants.
``\emph{Guarding simple polygons has been recently shown
to be W[1]-hard w.r.t. the number of (vertex or point) guards. 
Is the problem
FPT w.r.t.\, the number of reflex vertices of the polygon?}''~\cite{PanosQuestion}.
We answer the same question, with respect to \visionstable polygons and the classic variant of the art gallery problem.

\begin{restatable}[Reflex-FPT Algorithm]{corollary}{ReflexFPT}
\label{cor:ReflexFPT}
    Given a {simple} \visionstable polygon, with any fixed \visionstability.
    The one-shot algorithm is FPT with respect to the number of reflex vertices.  
\end{restatable}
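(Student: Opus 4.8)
The plan is to read off the complexity bounds already established in Theorem~\ref{thm:OneShotAlgo} (with Theorem~\ref{thm:Candidates}) and check that, once the \visionstability~$\delta$ is regarded as a fixed constant, the total running time has the shape $f(r)\cdot\poly(n)$ required of an FPT algorithm parametrized by the number of reflex vertices~$r$. There is essentially no new content to prove; the work is bookkeeping.

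First I would dispose of correctness. Fixing $\delta$ and running the one-shot algorithm with this $\delta$ hardcoded, the hypothesis that the polygon is \visionstable with \visionstability~$\delta$ means $\opt(P,-\delta)=\opt(P,\delta)$, so by Theorem~\ref{thm:OneShotAlgo} the algorithm cannot report that the \visionstability is smaller than~$\delta$; hence it returns an optimal guard set. Correctness is thus inherited directly, and only the time bound needs discussion.

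Next I would split the running time into its two parts. The preprocessing phase costs $O(\frac{r^{8}}{\delta^{4}}\log n)$ by Theorem~\ref{thm:OneShotAlgo}; for fixed~$\delta$ this is $O(r^{8}\log n)$, already of the form $f(r)\cdot\poly(n)$. For the single integer program, Theorem~\ref{thm:Candidates} bounds the candidate set, i.e.\ the variables of the program, by $O(\frac{r^{4}}{\delta^{2}})$; the witnesses producing the constraints are faces and vertices of the same (refined) arrangement and are likewise $\poly(r/\delta)$ in number, and the coefficient matrix is $0/1$. Consequently, for fixed~$\delta$, the whole integer program has size bounded by a computable function of~$r$ alone and in particular does not depend on~$n$; it can therefore be solved in time $g(r)$ for a computable~$g$ — even the trivial brute force over the $2^{O(r^{4}/\delta^{2})}$ subsets of the candidate set suffices, and Lenstra-type integer-programming-in-fixed-dimension algorithms do far better. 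Adding the two contributions, the one-shot algorithm runs in time $g(r)+O(r^{8}\log n)$, which is FPT in~$r$.

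The only point I would treat with a little care — and it is hardly an obstacle — is precisely this last observation: that the dimensions of the integer program depend only on $r$ and $\delta$, so that for fixed~$\delta$ the integer-programming step contributes a pure function of~$r$ with no polynomial factor in~$n$. Granting that, the corollary is an immediate substitution into the bound of Theorem~\ref{thm:OneShotAlgo}.
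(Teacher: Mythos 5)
Your argument is correct and is essentially the paper's own justification: the paper derives the corollary in one line by noting that the preprocessing costs $O(\frac{r^{8}}{\delta^{4}}\log n)$ and that the single integer program has size depending only on $r$ and $\delta$ (not on $n$), so for fixed $\delta$ the total time is of the form $g(r)+f(r)\log n$. Your more explicit bookkeeping (correctness inherited from Theorem~\ref{thm:OneShotAlgo}, IP solvable in time a function of $r$ alone) matches that reasoning exactly.
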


\paragraph*{Practical algorithm.}

Although the one-shot algorithm does not require algebraic methods and only polynomial preprocessing time, it is still way too slow to be considered practical. 
Note that the performance bottleneck is not solving the integer program, but computing visibilities.
As a next step, we develop the \emph{iterative \visionstable} algorithm.
It is practical and follows the basic principle of the one-shot algorithm.
Ideally, we would also like to show provable performance guarantees for the iterative algorithm.
Note that the statement that an algorithm is both practical and has theoretical guarantees must be taken with caution.
Once we have a \emph{practical} algorithm~\Practical
and a \emph{theoretical} algorithm~\Theoretical,
we can easily get a third algorithm~\Both that 
has \emph{both} properties as follows.
Run algorithm~\Practical within the running time bound of~\Theoretical.
If~\Practical does not return a solution abort and run~\Theoretical.
Here, \Theoretical serves as a \emph{safe guard} for~\Practical.
Clearly \Both performs as well in practice as~\Practical and has the theoretical bounds of~\Theoretical. 
Thus, when we say that we show theoretical performance guarantees of some algorithm, we should really ask ourselves, if we show those guarantees for the algorithm that we actually use or for some safe guards that aren't ever used in practice.

The core idea of the iterative algorithm is as follows.
We start with a very coarse arrangement~\A. 
Using the faces and the vertices of~\A, we define a candidate set $C$ and a witness set~$W$.
We compute which candidates see which witnesses.
This enables us to build an integer program, that tries to find a minimum guard set $G\subseteq C$, as described for the one-shot algorithm. 
Note that vertices and faces may serve as guards and some face-witnesses may be unguarded.
As a secondary objective function, the integer program tries to minimize the number of face-guards used in $G$ and the number of unseen face-witnesses.
If the guard set $G$ contains only point guards 
and sees all face-witnesses, the iterative algorithm reports the optimal solution.
Otherwise, it refines the arrangement~\A and goes to the next iteration.

\begin{figure}[p]
    \centering
    \includegraphics[width=0.94 \textwidth ]{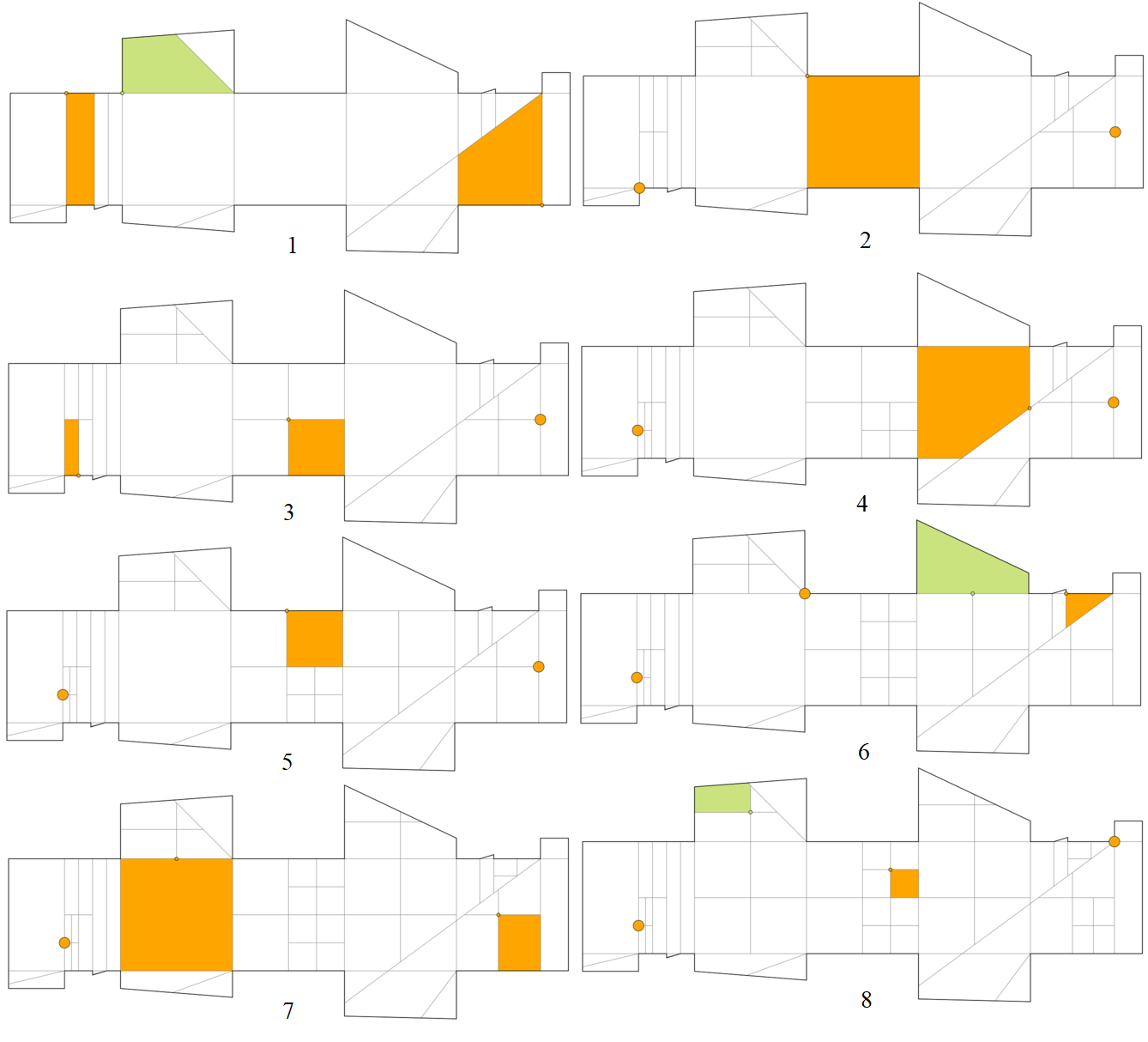}
    \caption{The first 8 iterations of the iterative algorithm on the Irrational-Guard polygon}
    \label{fig:first8p5}
\end{figure}
 
 \paragraph*{Irrational-Guard Polygon.}
In Figure~\ref{fig:first8p5} we show the first $8$ iterations of the Irrational-Guard polygon~\cite{abrahamsen2017irrational}. 
The orange points and faces represent point- and face-guards in the intermediate solution. The green faces represent faces not fully seen by the current candidate solution. Both orange and green faces are split in the next iteration. Note that for each of the orange and green faces, we draw a random vertex in the same colour,
to make also very small faces visible.

\paragraph*{Local complexity.}
    One of the bottlenecks is the large number of  possible visibilities between candidates and witnesses that we have to compute.
    Due to the low local complexity of the input polygons, most of those pairs are not seeing each other. 
    We exploit this by building a so-called \emph{\weakVisPolyTree}.
    In this tree, any point~$p$ in node $n(p)$ can see a point~$q$ in node $n(q)$, if the two nodes are siblings or in a parent-child relationship, see Figure~\ref{fig:WeakVisPolTree}.
    We give a detailed definition in Section~\ref{sub:WeakVisTree}.

Interestingly, this inspired a new structural parameter, which we call the \chordwidth. 
Given a chord~$c$ of~$P$, we denote by $n(c)$ the number of vertices visible from $c$. 
The {\em \chordwidth} ($\cw(P)$) of a polygon is the maximum $n(c)$ over all possible chords $c$.

We show that the art gallery problem is FPT with respect to the \chordwidth.
\begin{restatable}[Chord-Width-FPT]{theorem}{ChordFPT}
	\label{thm:FPT-chordwidth}
	Let $P$ be a simple polygon with \visionstability at least some fixed~$\delta$.
	Then there is an FPT algorithm for the art gallery problem with respect to the \chordwidth.
\end{restatable}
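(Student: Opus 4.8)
The plan is to combine the polynomial-size candidate set of Theorem~\ref{thm:Candidates} with a recursive decomposition of $P$ along chords and a bottom-up dynamic program whose state space depends only on $\cw(P)$ (and on $\delta$, which we treat as fixed). Write $k := \cw(P)$. \emph{First}, I would use Theorem~\ref{thm:Candidates} together with the one-shot construction (Theorem~\ref{thm:OneShotAlgo}) to reduce, for the \visionstable polygon $P$, the art gallery problem to a discrete covering problem: a candidate set $C$ of size $\poly(n/\delta)$ containing an optimal guard set, and a witness set $W$ (the faces/vertices of the refined arrangement $\A$) of polynomial size, such that a set $G\subseteq C$ guards $P$ exactly when it ``covers'' $W$ in the sense used by the one-shot algorithm. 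Since this covering problem is W[1]-hard in general, the point is to exploit that small chord-width forces $P$ to be ``thin'': the interaction between far-apart parts of $P$ is funnelled through chords that see few vertices.

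\emph{Second}, I would build a \weakVisPolyTree. Fix an edge $e_0$, let $V_0$ be the weak visibility polygon of $e_0$, let its windows (the chords where visibility from $e_0$ is blocked by a reflex vertex) cut off pockets $P_1,P_2,\dots$, each a simple polygon with a distinguished boundary chord $c_i$, and recurse on each pair $(P_i,c_i)$. The key counting observations are: (i) every vertex of $P$ visible from $e_0$ lies among the $\le n(e_0)\le k$ vertices seen by the chord $e_0$, so $V_0$ has $O(k)$ vertices and $O(k)$ windows, and the same bound holds at every node of the tree (here one needs a small monotonicity lemma: a chord of a pocket $P_i$ can be extended through its interior endpoint on $c_i$ until it meets $\partial P$, giving a chord of $P$ seeing at least as many vertices, so the relevant ``$O(k)$ vertices / $O(k)$ windows'' bound is inherited); and (ii) each window $c_i$ is seen by $O(k)$ reflex vertices of $P$, so the rays of $\A$ subdivide $c_i$ into $O(k/\delta)$ \emph{atomic intervals}, and the trace on $c_i$ of the visibility region of \emph{any} point of $P$ is a union of these atomic intervals.

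\emph{Third}, I would run a dynamic program over this tree, processed bottom-up. For a node $V$ with parent window $c$, the state is a pair $(A_{\mathrm{in}},A_{\mathrm{out}})$ of subsets of the $O(k/\delta)$ atomic intervals of $c$: $A_{\mathrm{in}}$ records which atomic intervals of $c$ are illuminated by guards placed outside the subpolygon $P_{\le V}$ rooted at $V$, and $A_{\mathrm{out}}$ records which atomic intervals of $c$ the guards chosen inside $P_{\le V}$ manage to illuminate; by observation (ii) these subsets determine exactly which witnesses of $P_{\le V}$ near $c$ are covered from outside and how much ``help across $c$'' the parent receives. Let $\mathrm{DP}[V][A_{\mathrm{in}},A_{\mathrm{out}}]$ be the minimum number of candidates of $C$ inside $P_{\le V}$ that, given the outside coverage encoded by $A_{\mathrm{in}}$, cover all witnesses of $P_{\le V}$ while producing exactly the trace $A_{\mathrm{out}}$ on $c$; there are $2^{O(k/\delta)}$ states per node. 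The recurrence processes the $O(k)$ children of $V$ one at a time, maintaining the running union of traces through each child window, plus a brute-force enumeration over the $O(k^2/\delta^2)$ candidates contained in the $O(k)$-vertex ``core'' $V$ itself; consistency is a local check because a guard in $V$ illuminates a child pocket only through that child's window and what it reaches there is determined by its trace on that window. The final answer is $\min_{A_{\mathrm{out}}}\mathrm{DP}[\text{root}][\emptyset,A_{\mathrm{out}}]$, and the total running time is $2^{O(k/\delta)}\cdot\poly(n)$, i.e.\ FPT in $k$ for fixed $\delta$.

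\emph{The hard part} will be making the interface argument airtight: showing that \emph{all} information that must pass through a window $c$ is captured by which of its $O(k/\delta)$ atomic intervals are illuminated --- in particular, that the set of points of a pocket illuminable from outside through $c$ is a monotone function of the illuminated sub-intervals of $c$ even when several outside guards jointly light up a single region (so that recording the union of their traces suffices), and that the ``$O(k)$ windows / $O(k/\delta)$ atomic intervals'' bound really is inherited by every pocket. Guards lying on or infinitesimally close to a window (which can see into both sides) must be assigned to one side with their cross-contribution recorded in $A_{\mathrm{out}}$, and the face-guard/face-witness bookkeeping of the one-shot discretization has to be threaded through the decomposition; these are the technical points that make the write-up delicate. \visionstability itself is used only through Theorem~\ref{thm:Candidates}, to guarantee that restricting to the discrete instance $(C,W)$ loses nothing.
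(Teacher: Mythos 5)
Your frame (discretize via the \visionstability candidate set, decompose $P$ along a \weakVisPolyTree, run a bottom-up dynamic program) is the paper's, and your counting claims are essentially its Lemmas~\ref{lem:WeakPolComplexity} and~\ref{lem:WindowComplexity}. The genuine gap is the interface state itself. Which witnesses of a pocket are covered by guards \emph{outside} it is not a function of which atomic subintervals of the window $c$ those guards illuminate: visibility through a window depends on the guard's actual position (the directions of its sight lines through $c$), not merely on its trace on $c$. Two outside guards, or guard sets, that light up exactly the same subintervals of $c$ can cover very different regions behind $c$ --- for instance, only a guard whose line of sight through $c$ aligns with a narrow niche deep inside the pocket sees the witnesses in that niche, while another guard with an identical trace on $c$ does not. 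Hence $\mathrm{DP}[V][A_{\mathrm{in}},A_{\mathrm{out}}]$ conflates non-interchangeable partial solutions and the recurrence is not correct; the ``monotone function of the illuminated sub-intervals'' property you flag as the hard part is in fact false, so it cannot be made airtight as stated. (A secondary inaccuracy: the trace of $\vis(p)$ on $c$ for an arbitrary candidate $p$ need not be a union of your atomic intervals, since the boundary rays of $\vis(p)$ pass through $p$ and reflex vertices and are generally not edges of \A.) Your $2^{O(k/\delta)}\poly(n)$ bound is a symptom of this over-compression.

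The paper keeps your tree but uses a cruder, correct interface: by Lemma~\ref{lem:saved-visibilities}, points in nodes that are neither siblings nor in a parent--child relation cannot see each other, so all interaction across a window is confined to a node, its parent, children and siblings. The dynamic program therefore indexes its tables by subsets of the \emph{actual candidates} lying in a node and its children --- Lemma~\ref{lem:Vertices-A-in-Weak} bounds these by $O(k^6/\delta^2)$ per node, using the \chordwidth together with the $\delta$-spaced rays of the one-shot arrangement --- and records which arrangement faces each subset sees and whether the subtree below can be completed, yielding a $2^{O(k^7/\delta^2)}n^{O(1)}$ algorithm. In other words, the information passed through a window is ``which candidates near the window are used'', not ``which intervals of the window are lit''. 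To repair your version you would have to enrich the state to remember (essentially) the identities of the guards interacting with the window, at which point it collapses back to the paper's construction; alternatively, you would need a new separator lemma replacing the false trace claim, which neither your sketch nor the paper provides.
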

The core idea is to use dynamic programming along the \weakVisPolyTree,
similar to dynamic programming algorithms for tree-width.
The challenge here is to find bounds on the number of candidates per 
node.

 \begin{figure}[H]
    \centering
    \includegraphics[width=0.8\textwidth]{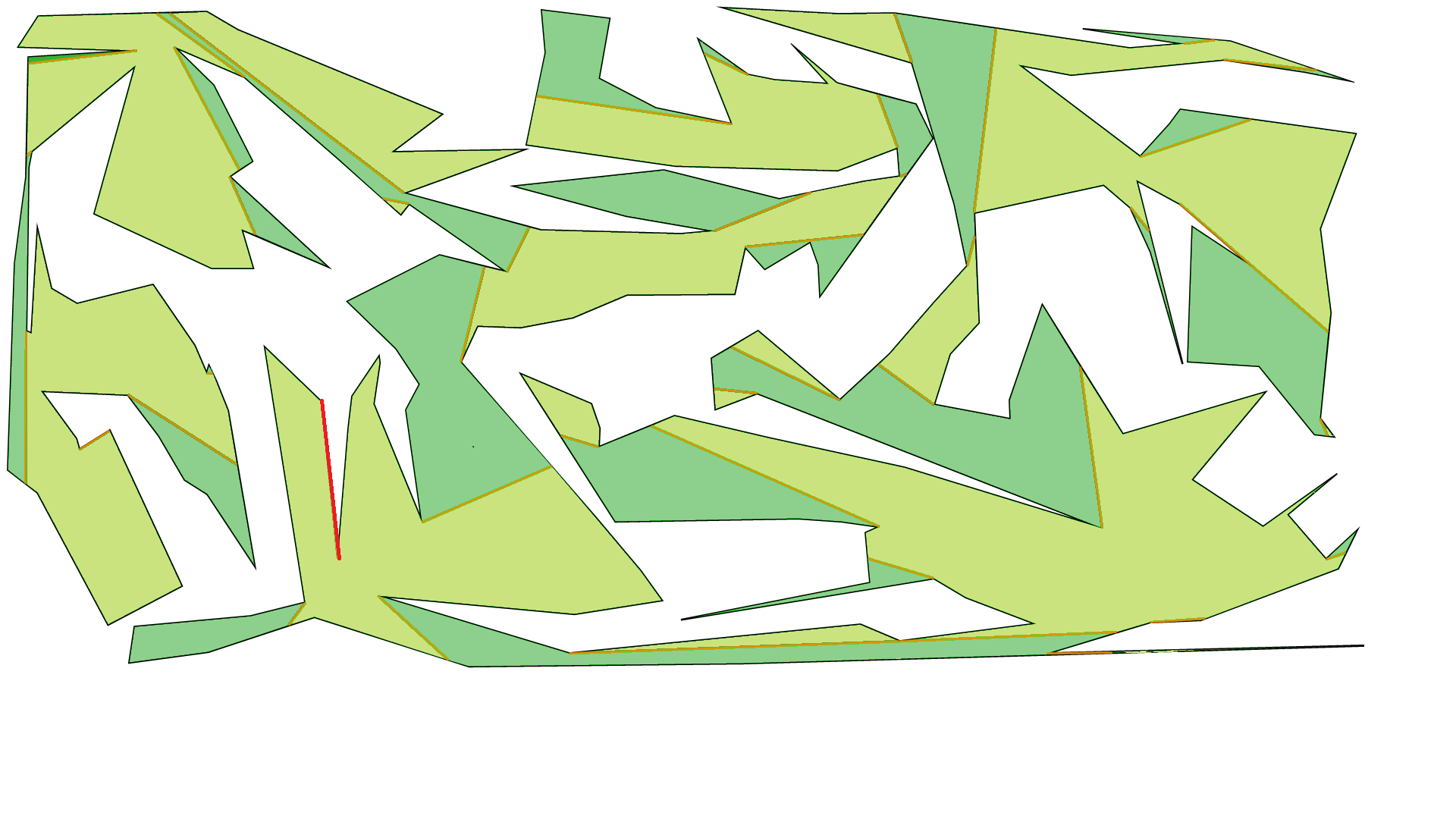}
    \caption{The polygon together with a \weakVisPolyTree. 
    The polygon has~$200$ vertices, but each node in the \weakVisPolyTree has only about~$20$ vertices. The red segment indicates the starting edge of the \weakVisPolyTree.
    }
    \label{fig:WeakVisPolTree}
\end{figure}

\paragraph*{Critical witnesses.}
Another practical idea is to reduce the total number of witnesses that we use. 
Instead of using all faces and vertices of~\A as witnesses, 
we only use some random selection, which we call \emph{critical witnesses}.
We use heuristics to update this critical witness set. 
In deciding whether to add a critical witness or not, we
are faced with some trade-offs. 
If we add very few critical witnesses, we have to make more loops until the IP solution returns a guard set that sees the entire polygon.
If we add too many critical witnesses, the set of critical witnesses grows unnecessarily fast and we have to compute many more visibilities.

\paragraph*{Visibility queries.}
One of the major bottlenecks at the beginning of this project was
the computation of weak visibility queries. That is, we often need to decide if a given face sees a given point or another face.
In a follow-up project~\cite{ConvexExpansion}, we developed a fast practical algorithm to compute weak visibility polygons.

\paragraph*{Losing performance guarantees.}
The main reason that the iterative algorithm does not have the same performance guarantees as the one-shot algorithm is as follows. It is possible that the iterative algorithm keeps splitting a certain face (and its children) many times, only to conclude much later that it was misled.
It could have found a solution much earlier by splitting
one of the larger faces.
In practice, it seems usually a very good idea to split the faces that were selected by the Integer program solution.
Especially, if those faces are small, we made progress and avoided usually unnecessary splits of big unimportant faces.
It is easily possible to design an algorithm in a way that it will not split too small faces, before also splitting occasionally bigger faces, that might be useful.
In this way, we are still able to ensure theoretical performance guarantees, see Theorem~\ref{thm:iterative}.
However, this theorem adds little to the goals of explanation, prediction, and invention.
Quite the opposite, this analysis suggests that faces should be split in a way that is harmful to practical performance.
In this paper, we describe several different versions of the iterative algorithm. 
When we use the safeguard version of the algorithm, we get the following theorem.

\begin{restatable}[Iterative Algorithm]{theorem}{ThmIterative}
     \label{thm:iterative}
    Let $P$ be {a simple} $n$ vertex polygon, with  \visionstability $\delta$.
    Then, the iterative algorithm returns the optimal solution
    to the art gallery problem. 
    It has a running time of $(\frac{n}{\delta})^{O(1)} + T$ per iteration and takes at most $(\frac{n}{\delta})^{O(1)}$ iterations.  Here $T$ denotes the time it takes to solve one integer program.
\end{restatable}

\paragraph*{Experimental results.}
We implemented and tested the iterative algorithm with a 64-bit Windows 10 operating system, an 8-core Intel(R) Core i7-7700HQ CPU at 2800 Mhz and 16 GB of main memory.
The practical implementation makes heavy use of version 4.13.1 of CGAL~\cite{cgal:eb-20a}. 
The IP solver used was IBM ILOG CPLEX version 12.10~\cite{cplex}.

We compared our implementation directly with the algorithm from Tozoni et~al.~\cite{tozoni}, as this is the currently best algorithm for which there was freely accessible code available.
{The algorithm from Tozoni et~al. was tested on the same machine described above, but on a Linux Mint operating system (using a dual-boot set-up).}
In order to get a deeper understanding, we analyzed various aspects of the running time.
One of them is the distribution of the running time w.r.t.~different subroutines.
Furthermore, we studied the influence of \visionstability on the running time. 
We also study the effect of our speed-up methods.
Lastly, we study the iterative algorithm on the irrational-guard polygon~\cite{abrahamsen2017irrational}.

\paragraph*{Comparison.}
We tested our algorithm and the algorithm of Tozoni~et al. on $5$ sets of~$30$ polygons with 60, 100, 200, and 500 vertices.
The results can be seen in Table~\ref{tab:running-times1}. 
We see that, except for size $60$, our algorithm is slightly faster. 
However, we want to point out that comparing those running times should be taken with a grain of salt.
Both algorithms rely on a software environment that is not identical. To some degree, the differences in the running time may come from performance differences from this software environment.
First, note that Tozoni~et al.~implemented their algorithm in Linux,
whereas we implemented our algorithm in Windows.
Interestingly, CGAL runs between factor~$2-3$ faster on Linux compared to Windows~\cite{CGAL-Bad-Windows}.
Secondly, our algorithm has to compute visibilities of faces instead of point visibilities. 
{However, the algorithm of Tozoni~et al. is purely sequential, while we perform these visibility computations in parallel.} 
Thirdly, while testing the implementation of Tozoni~et al., we could not use the best available IP solver, which might skew the results. 
{We used the freely available GLPK solver, while the algorithm of Tozoni~et al. was reported to have much better results with the XPRESS solver (\cite{tozoni} reports speed-ups of 2.56).}

Overall, from these experiments, our algorithm seems faster for larger polygons, but does not make a very large improvement. {However, for the above reasons, the comparison was not entirely fair.}

We left out a comparison to~\cite{engineering}, as their code is not available to us.

\def\arraystretch{1.15}
\begin{table}[htp]
\centering
\begin{tabular}{|c|c|c|c|} 
    \hline
\multirow{2}{*}{\textbf{Sizes}} & \multicolumn{3}{c|}{\textbf{Average time (s)}} \\ \cline{2-4}
 & \textbf{Tozoni et al.} & \textbf{Tozoni et al. (Our hardware)} & \textbf{The Iterative Algorithm}  \\ \hline
60 & 0.26 & 0.18 & 0.39 \\
100 & 0.94 & 0.68 & 0.52  \\
200 & 3.77 & 2.54 & 2.02 \\
500 & 35.04 & 22.34 & 18.2  \\\hline
\end{tabular}
 \caption{A comparison of the iterative algorithm without safe guards with the results from Tozoni et al.~\cite{tozoni}, both the results reported by Tozoni et al. themselves~\cite{tozoni} and results found using their implementation on our hardware. Tests were ran on 150 polygons, but our algorithm could not find the optimal solution for one polygon ({Polygon \#25 in the set of polygons of size 100}) within the time limit, so the times of 149/150 polygons are displayed in this table. 
}
    \label{tab:running-times1}
\end{table}

\paragraph*{CPU distribution.}
We analyzed the distribution of the CPU time of the iterative algorithm. The results are shown below in a pie-chart in Figure~\ref{fig:distrib-chart1}. We see that solving integer programs is the dominating factor of the running time. This shows that to improve the running time of the algorithm, we must reduce the total number or the size of the IPs. Alternatively, we can optimize the IP solver, or perhaps experiment with different IP solvers.

\begin{figure}[H]
    \centering
    \def\svgwidth{\textwidth}
    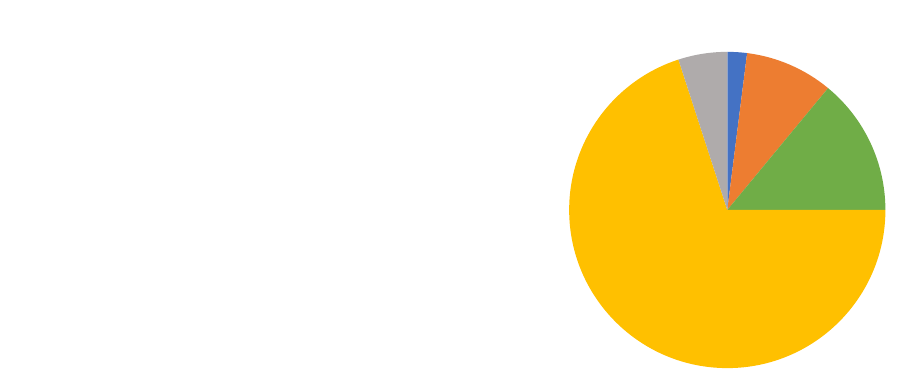
     \caption{The chart shows the CPU distribution of the Iterative Algorithm implementation for solving $30$ polygons of size $200$. Each slice in the chart represents the total CPU time spent on that part of computation for all 30 polygons.}
    \label{fig:distrib-chart1}
\end{figure}

It may appear strange that we spend so much energy on reducing the CPU time spent on speeding up visibility queries when the CPU usage is dominated by solving IPs. 
The reason is that before we implemented all of the improvements described before, the running time was dominated by weak-visibility queries.

In general, the CPU distribution needs to be regarded with a grain of salt.
It seems that there is usually one subroutine that dominates the running time.
Often conceptual improvements decrease the running time by several factors, 
which in turn makes a different subroutine appear to dominate the running time.
Thus, the fact that solving IPs says more about the components that we optimized rather than which parts are inherently more difficult.
Therefore, we consider it a success that the running time is now dominated by solving IPs.

\paragraph*{\Visionstability.}
Unfortunately, we cannot measure the \visionstability
of a polygon nor approximate it. 
To get a vague idea of the influence of the 
\visionstability on the running time, we 
define the granularity of the subdivision at the
end of the iterative algorithm.
The granularity is related to the smallest face
in the final subdivision.
Interestingly, even with polygons of the same size,
the running times vary widely.
The large correlations between the
observed running times and the granularity (shown in Table~\ref{tab:Correlations1}) indicate
that \visionstability may have a significant influence on
the total running time.

\begin{table}[ht]
    \centering
    \begin{tabular}{|c|cccc|}
        \hline
         \textbf{Size} &  $60$ & $100$ & $200$ & $500$  \\
         \hline
        \textbf{Correlation} & $0.07$ & $0.3$ & $0.1$ & $0.6$ \\
        \hline
    \end{tabular}
    \caption{The correlation coefficients between the measured granularity  and the running time, computed per size. }
    \label{tab:Correlations1}
\end{table}

\paragraph*{\WeakVisPolyTree.}
To verify the amount of visibility queries that we save by using the \weakVisPolyTree, {we measured the characteristics of the weak visibility polygon trees in the experiments done with the iterative algorithm without safe guards}. Note that computing the \weakVisPolyTree requires the computation of weak visibility polygon. In practice, this was achieved by using an efficient new algorithm, about which a follow-up paper will be published~\cite{ConvexExpansion}.
The precise percentage highly depends on the type of polygon.
See Table~\ref{tab:weakvisstats-intro} for a detailed overview.
Interestingly, the number of reflex vertices per node of the weak visibility polygon grows much slower than the input size.
This indicates that \chordwidth may be a useful practical parameter to study for geometric algorithms in polygons.
\begin{table}[ht]
    \centering
    \begin{tabular}{|c|cccc|}
        \hline
         \textbf{Size} &  $60$ & $100$ & $200$ & $500$  \\
         \hline
         \textbf{Tree size} &
         14.2 & 23.0 & 46.3 & 115.0  \\
          \textbf{Largest polygon} & 20.5 & 23.3 & 26.2 & 28.4 \\
          \textbf{Largest number of reflex vertices} & 5.9 & 6.2 & 7.0 & 9.2 \\
        \textbf{Percentage of queries saved} & 16.7\% & 35.4\% & 63.5\% & 87.3\% \\
        \hline
    \end{tabular}
    \caption{We tested 30 input polygons from the AGPLIB library~\cite{art-gallery-instances-page} of four sizes. For each size class, we see the averages of characteristics of the \weakVisPolyTree{s}.  
  }    \label{tab:weakvisstats-intro}\end{table}

\paragraph{Pseudocode.}
{The algorithm(S) consists of many components, it may help to have pseudocode to guide the reader. 
We start with the pseudocode of the one-shot algorithm.
\begin{center}
\includegraphics[page=2]{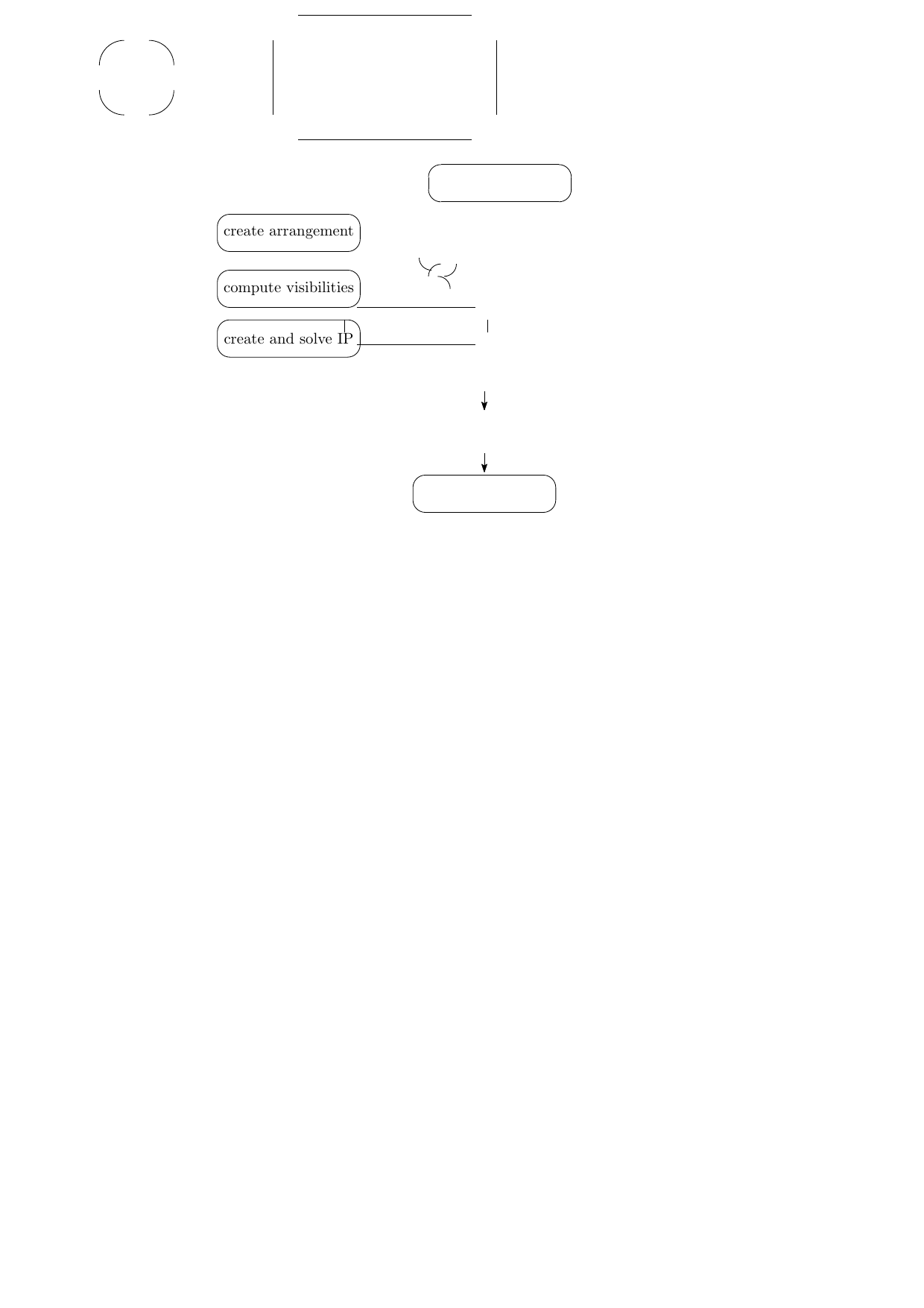}    
\end{center}

The iterative algorithm goes into a loop that alternates
between geometric computations and solving an~IP.
\begin{center}
\includegraphics[page=3]{figures/pseudocode.pdf}    
\end{center}
}

\paragraph*{Irrational guards.}
We show that the implementation of our algorithm provides a rapidly improving solution even for polygons that are not \visionstable.
Specifically, Abrahamsen~et~al.~\cite{abrahamsen2017irrational} introduced a small and simple polygon
which requires irrational guards for an optimal guarding using point-guards.
Although the iterative algorithm avoids irrational numbers, it still returns a guard set~$G_i$ for each iteration $i=1,2,3,\ldots$. 
Recall that $G_i$ consists of faces and points.
As we know the optimal solution $G^*$, we can compute $d_i = d(G^*,G_i)$,
see Figure~\ref{fig:irrational-convergence-intro}.
{(Note that $d(G^*,G_i)$ denotes the Hausdorff distance between $G^*$ and $G_i$. 
In order to compute it, it is sufficient to compute the distance of each pair of vertices of $G_i$ with $G^*$.)}

\begin{figure}[H]
    \centering
    \def\svgwidth{\textwidth}
    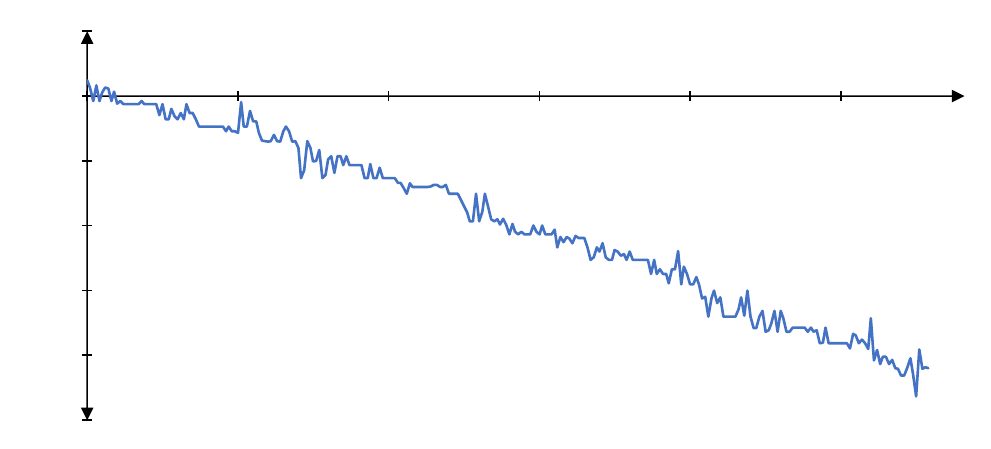
     \caption{The iterative algorithm based on the notion of \visionstability reports a sequence of solutions. The Graph  shows on the $x$-axis the iterations from~$1$ to about~$300$ and on the $y$-axis, the $\log_{2}$ of the Hausdorff distance to the optimal solution.  
     }
    \label{fig:irrational-convergence-intro}
\end{figure}

\paragraph*{Future research.}
The vast majority of the work on the art gallery problem focused on variants of the classic question. 
There are almost no positive theoretical algorithmic 
results on the original art gallery problem, with some exceptions~\cite{belleville1991computing, EfratH06, BonnetM17Approx}.
We believe that the main reason for this focus on variants is the fact that the art gallery problem is inherently continuous, as is reflected by its \ER-completeness~\cite{ARTETR,stade2022complexity}.
Now that we arguably broke that barrier, we hope that more progress will be made on the original problem.

\begin{itemize}
    \item Can we adapt the algorithm to polygonal domains with holes?
    Here, the main bottleneck seems to be adapting the visibility queries to polygons with holes.
    \item Does the iterative algorithm always converge towards the optimal solution, even if the underlying polygon is not \visionstable?
    Our experimental results suggest that we converge exponentially fast 
    to an optimal solution. 
    It is intriguing to see if this also holds true in general. 
    \item One simple way to make the one-shot algorithm more 
    practical would be to find a smaller witness and candidate set.
    What is the smallest integer program that guarantees to give
    the optimal solution for \visionstable polygons?
    The bound we gave seems to have plenty of room for improvement.
    \item It would be interesting to test the algorithm on a wider range of polygons, like orthogonal polygons, von Koch polygons, and spike polygons, as described in previous work~\cite{engineering}.
    This would clarify if our algorithm is applicable to a wider range of polygons.
    \item {The implementation is lacking for larger polygons, as running times get very large. There are several avenues for improving the algorithm so that experiments with larger polygons may be conducted. }
    
\end{itemize}

\section{Vision Stability}
\label{sec:Visstability}
In this section, we define the notion of \emph{\visionstability},
and give a justification, why we believe that typical polygons are
\visionstable. At last, we will show that, under very specific circumstances, the visibility region of a point contains the visibility region of a face. 

\begin{figure}[H]
	\centering
	\includegraphics[page=1]{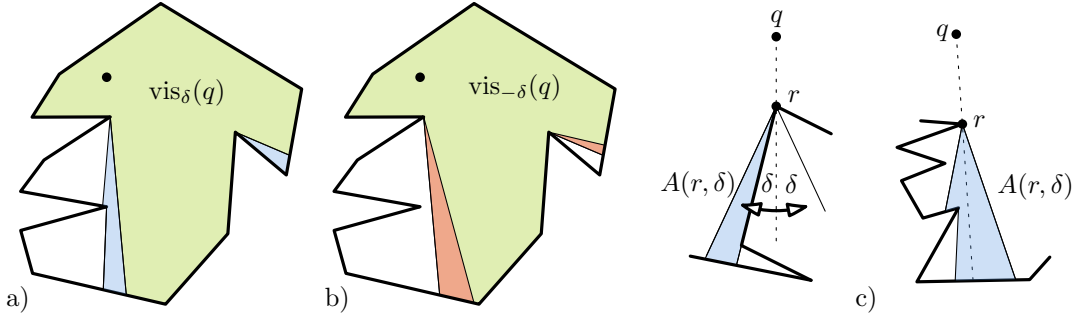}
	\caption{a) The visibility of the point $q$ in green. The blue region \enhanced the visibility. b) The red region diminishes the visibility of $q$.  c) A ray is rotated around a reflex vertex $r$. It defines a region that is either added or removed from the visibility region.}
	\label{fig:AugmentationRegionAppendix}
\end{figure}

\subsection{Definition}
In a nutshell, \enhanced and \diminished vision are artificial ways of vision where we can either ``look around a corner'' or are ``more blocked'' by a corner than we would expect.
The notion \visionstability entails that either enhancing or diminishing the vision does not change the optimal number of guards.

Given a simple polygon $P$ and a point $q$, the visibility region of $q$ is defined as $\vis(q) = \{x\in P: x \textrm{ sees $q$}\}$. 
Let $r$ be a reflex vertex of $P$ and we assume that $q$ sees $r$.
Then, given some $\delta>0$, we can define the \emph{visibility enhancing region} $A = A(q,r,\delta)$ as follows.
Rotate a ray $v$ with apex $r$ 
by some angle of $\delta$,
clockwise and counter-clockwise. 
At each time of the rotation the ray $v$ defines a maximal segment inside~$P$ with endpoint~$r$.
In some cases, the segment is the single point $r$, see Figure~\ref{fig:AugmentationRegionAppendix}~c).
The region $A(r,\delta)$ is the union of all those segments.

For some $\delta>0$, we define the $\delta$-\enhanced visibility region $\vis_{\delta}(q)$ of $q$ as $\vis(q)$ and for every suitable reflex vertex, 
we add the region $A(r,\delta)$.
We define the $\delta$-\diminished visibility region $\vis_{-\delta}(q)$ of $q$ as $\vis(q)$ after we remove the regions $A(r,\delta)$,  for every applicable reflex vertex $r$.
To be precise, we define $\vis_{\delta}(q)$ to be a closed set, both for $\delta>0$ and $\delta\leq 0$.

Given a polygon $P$, we say that $G$ is \emph{$\delta$-guarding} $P$ if 
$\bigcup_{g \in G} \vis_{\delta}(g) = P$.
We denote by $\opt(P,\delta)$ the size of the minimum 
$\delta$-guarding set.
For brevity, we denote $\opt(P,0)$, merely by $\opt(P)$ or, if $P$ is clear from the context, by~$\opt$.
We say that a polygon $P$ is \emph{\visionstable} or
equivalently has \visionstability $\delta>0$,
if $\opt(P,-\delta) = \opt(P,\delta)$.

\begin{figure}[H]
	\centering
	\includegraphics[page=2]{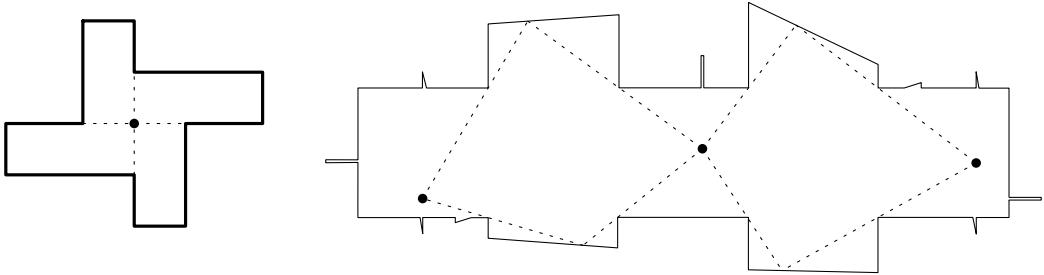}
	\caption{Left: The polygon has a unique guarding that relies on simple collinearities. Right: This polygon has a unique irrational guarding with three guards~\cite{abrahamsen2017irrational}.}
	\label{fig:NotVisionStable}
\end{figure}

\subsection{Justification}
\label{sub:justification}
In this section, we try to reflect on different aspects 
of the definition of \visionstability. 

Consider polygon $P_1$, in the left part of Figure~\ref{fig:NotVisionStable}.
It is not \visionstable.
That $P_1$ is not \visionstable
is clearly a weakness of the concept.
Note that this example relies on collinearities.
Many computational geometry papers assume general position of the underlying point set. Those assumptions are made for two reasons. 
The first is that collinearities are very unlikely, if we 
think about some random model that generates our point set.
The second reason is that collinearities 
can often be handled in practice, and the details are left out as they 
are very tedious, but add little to nothing to the underlying algorithmic concepts.
Our algorithms are also able to find the optimum 
for these types of situations in practice.

\vspace{0.1cm}

Let us now give another example of a polygon $P_2$ 
that is not \visionstable, see to the right of Figure~\ref{fig:NotVisionStable}.
See Abrahamsen et al.~\cite{abrahamsen2017irrational} for a detailed descritpion of the polygon. Here, we only note that all its vertices have rational coordinates.
Yet, the polygon $P_2$ has a unique optimal guarding with irrational guards.
Although it has plenty of collinearities, 
it can be modified not to have those collinearities. 
Polygons similar to $P_2$ exist and the irrational coordinates
required for an optimal guarding might need arbitrarily large
algebraic degree~\cite{ARTETR, stade2022complexity}.
We see the existence of such polygons as a strong argument for the need of additional assumptions, for instance about the \visionstability of the input polygon. 

In particular, those polygons 
show that \emph{without} additional assumptions 
algebraic methods are unavoidable.
Furthermore, they show that the art gallery problem is 
\ER-complete~\cite{ARTETR, stade2022complexity}.

\vspace{0.1cm}

As a third aspect for \visionstability, 
we want to argue that polygons similar to $P_2$ are very rare in practice.
Note that polygon $P_2$ was only found after four decades of research on the art gallery problem. 
It took the right approach, computer assistance and tedious trial and error to find this polygon. 
The geometric simplicity of the polygon may suggest that such a polygon would be relatively easy to construct, but this is far from true.
Thus, it is no surprise that as of this writing no second similar polygon is known yet.
There are polygons that are slight modifications of $P_2$ and the polygons that stem from the \ER-hardness proofs are the only known exceptions~\cite{ARTETR, stade2022complexity}.
The smoothed analysis by Dobbins, Holmsen and Miltzow~\cite{ArxivSmoothedART}
gives a theoretical argument why those polygons should be very rare in practice.

There is a fourth consideration. 
We will present an algorithm based on \visionstability in Section~\ref{sec:IterativeAlgo}.
As we will see in our test results in Section~\ref{sec:Tests}, the algorithm gives a sequence $(G_i)_{i\in \N}$ of guarding sets.
We observe that this sequence empirically approaches the actual optimal guarding for $P_2$ (see Figure~\ref{fig:irrational-convergence}). 
In other words, it could be that the algorithm described hence forth has the property that it always converges to the optimal solution, even if the underlying polygon is not \visionstable. 
It is a tantalizing open question, if this always happens or whether it is just a lucky coincidence that happened with~$P_2$.
Unfortunately, we don't know another polygon that we could potentially use to test this conjecture empirically. 

\begin{figure}[H]
	\centering
	\includegraphics[page=3]{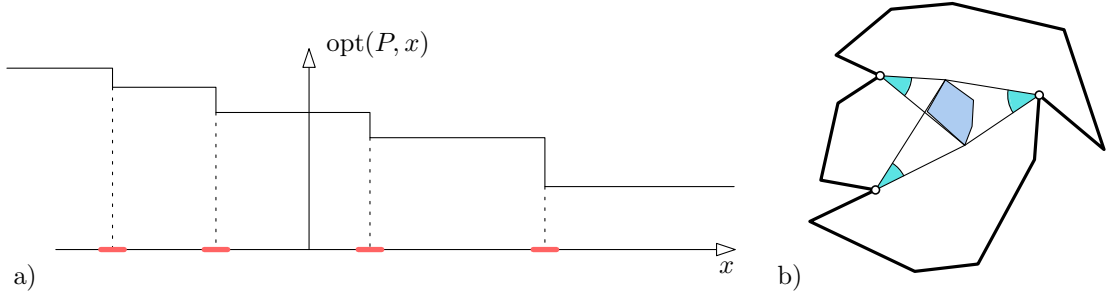}
	\caption{a) There are at most $n$ intervals of length $2\delta $ each. b) Illustration of the \power of a face.}
	\label{fig:BreakPoints}
\end{figure}

A fifth argument can be made that resembles ideas from smoothed analysis. Fix some polygon $P$ for the following discussion. Consider a world, where people would actually care for computing $\opt(P,x)$, for values of $x$ other than zero.
We define the event that 
$\opt(P,x-\delta) = \opt(P,x+\delta)$ by $E(x,\delta)$.
In this case, we say that $P$ is $x$-\visionstable, with \visionstability $\delta$.
In other words, the event $E(x,\delta)$ represents \visionstability for the task of computing $\opt(P,x)$, instead of $\opt(P,0)$.
In particular $E(0,\delta)$ corresponds to $P$ being \visionstable in the usual sense.
We show the following lemma.
\begin{lemma}\label{lem:visionstableprobability}
	Let $P$ be any simple polygon on~$n$ vertices.
	Choose $x\in [-1/2,1/2]$ uniformly at random. Then it holds that
	$\Pr(E(x,\delta)) \geq 1- 2\delta n$. 
\end{lemma}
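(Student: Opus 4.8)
The plan is to show that the function $x \mapsto \opt(P,x)$ is monotone non-decreasing in $x$ and changes value at most $n-1$ times on the whole real line (in fact at most $n$ times on $[-1/2,1/2]$), and then argue that each ``jump'' can spoil the event $E(x,\delta)$ only for $x$ in a short interval around the jump. Monotonicity is immediate from the definitions: if $x \le x'$ then $\vis_x(g) \subseteq \vis_{x'}(g)$ for every point $g$ (enhancing by a larger angle only adds area, diminishing by a larger angle only removes area), so any $x$-guarding set is also an $x'$-guarding set, hence $\opt(P,x') \le \opt(P,x)$. Since $\opt(P,x)$ is an integer between $1$ and $\opt(P,-1/2)$, and a monotone integer-valued function takes each value on an interval, the set of discontinuities (``break points'') is finite.

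The key quantitative step is bounding the number of break points by $n$. Here I would use the intuition already flagged in Figure~\ref{fig:BreakPoints}~a). The idea is that $\opt(P,x)$ can only decrease as $x$ grows, and each unit decrease must be ``witnessed'': informally, between consecutive break points the combinatorial structure of the relevant visibility regions is fixed, and a break point occurs only when some critical alignment involving a reflex vertex and the swept rays changes. The cleanest way to get the bound $n$ is probably: $\opt(P,-1/2) \le n$ (at most $n$ vertex guards, or more crudely $\opt$ is always at most $\lfloor n/3\rfloor$ by the art gallery theorem, and the diminished problem only needs more guards — so one should instead bound $\opt(P,-1/2)$ directly, e.g.\ by noting a finite guarding exists and is at most $n$), and since the function starts at some value $\le n$ and ends at a value $\ge 1$, decreasing by integer steps, there are at most $n-1$ break points. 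Let $B = \{b_1 < \dots < b_k\}$, $k \le n$, denote these break points inside $[-1/2,1/2]$.

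Now the event $E(x,\delta)$, i.e.\ $\opt(P,x-\delta) = \opt(P,x+\delta)$, fails exactly when the interval $(x-\delta, x+\delta)$ contains a break point (by monotonicity, $\opt(P,x-\delta) \ne \opt(P,x+\delta)$ iff the value changes somewhere strictly between them). For a fixed break point $b_j$, this happens only for $x \in (b_j - \delta, b_j + \delta)$, an interval of length $2\delta$. Taking the union over all $k \le n$ break points, the ``bad'' set has measure at most $2\delta n$, so for $x$ chosen uniformly on $[-1/2,1/2]$ (a set of measure $1$),
\[
\Pr\big(\neg E(x,\delta)\big) \le 2\delta n, \qquad \text{hence} \qquad \Pr\big(E(x,\delta)\big) \ge 1 - 2\delta n.
\]
The main obstacle I anticipate is making the break-point count rigorous: one must be careful that $\opt(P,x)$ really is piecewise constant (not, say, pathological) and that the number of pieces is genuinely bounded by $n$ rather than by something like $\poly(n)$ or $r/\delta$. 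I would handle this by the monotone-integer-valued-function argument above, which sidesteps any need to analyze the geometry of the sweep directly, as long as one has a clean a priori upper bound of $n$ on $\opt(P,-1/2)$; if that bound is not obviously $n$, the statement of the lemma suggests it is, so I would look for a direct construction of an $(-1/2)$-guarding set of size at most $n$ (for instance, placing a guard at each vertex and checking that diminished vertex-visibility still covers $P$, or using that $\delta < 1/2$ is small enough relative to the polygon's angular features — but since the lemma as stated allows any $\delta$, the bound must be the crude one coming from monotonicity and the trivial $\opt \le n$).
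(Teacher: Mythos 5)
Your proposal matches the paper's proof essentially step for step: monotonicity of $x\mapsto\opt(P,x)$, an integer-valued range between $1$ and $n$ giving at most $n-1$ breakpoints, and a union bound over intervals of length $2\delta$ around the breakpoints. The one point you flag as delicate, the a priori bound $\opt(P,-1/2)\le n$, is exactly what the paper also uses, asserting it without further justification ("at least one guard and at most $n$ guards"), so your treatment is no weaker than the published argument.
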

This lemma says that any simple polygon $P$ is $x$-\visionstability with high probability, for $\delta = O(1/n)$.
We want to argue that as $E(x,\delta)$ has high probability, 
then on an \emph{intuitive level}, the same should be true for $E(0,\delta)$.
The reason being that $E(x,\delta)$ and $E(0,\delta)$ may be regarded as mathematically equally valuable. 

\begin{proof}[of Lemma~\ref{lem:visionstableprobability}]
	We consider the function \[f: [-1/2,1/2] \rightarrow \N, \quad
	x \mapsto \opt(P,x).\]
	See Figure~\ref{fig:BreakPoints}~a) for an illustration of~$f$.
	The function~$f$ is monotone as visibility regions only get larger with larger~$x$, thus it becomes easier to guard.
	We need at least one guard and most~$n$ guards, if~$n$ is the number of vertices of~$P$. This is because~$n$ guards are always sufficient, even for $(-1/2)$-diminished visibility. 
	Thus~$f$ has at most $n-1$ breakpoints. 
	Given some $x$, the event $E(x,\delta)$ is equivalent to the fact that there is no breakpoint within distance $\delta$ of $x$.
	Taking the union of all the intervals of length $2\delta$ centered at the breakpoints yields
	$\Pr(E(x,\delta)) \geq  1 - 2\delta n$. \qedhere
\end{proof}

\vspace{0.1cm}

Let us mention a sixth aspect of \visionstability.
Our practical algorithm computes on the fly a crude estimate
on the \visionstability of the underlying polygon.
This estimate has a large explanatory value 
in understanding practical running times, as we will explain further in Section~\ref{sub:PractRunning}.
Interestingly the running time of equally sized polygons vary easily within a factor of~$10$.
In many cases, it seems to be the case that polygons with a 
high running time have a low \visionstability, see Section~\ref{sub:PractRunning}.

In summary additional assumptions should always be treated with some amount of caution. We consider different aspects in favor of the usage of our new assumption.
We are looking forward to a lively discussion in the research community.

\subsection{\POwer of a Face}
One of the key concepts of our algorithms is the \power of a face. 
Assume, we are given a point $g$ contained in a convex set $f$.
Clearly, $f$ sees at least as much as $g$. 
If $g$ and $f$ see the same set of reflex vertices then we can think of $f$ as ``seeing around the corner'' a little bit more than $g$. 
The \power of $f$ is a simple to calculate approximation of the degree to which $f$ ``sees around the corner''.

For the following description consider Figure~\ref{fig:BreakPoints}~b).
For each reflex vertex $r$, we define the angle $\alpha(r,f) = \alpha(r)$,
as the angle of the minimum cone with apex $r$ that fully contains~$f$.
The \emph{\power}  of the face~$f$  ($\powerF(f)$ in short) is the maximum of all the $\alpha(r)$, for $r$ visible from $f$, i.e.,
$\powerF(f) = \max_{r\in \vis(f)} \alpha(r)$.

We denote by $\chord(a,b)$ the chord in $P$ that contains the 
two distinct points $a,b$, if it exists.
We define the set $\chord(A,B) = \{ \chord(a,b): a\in A, b\in B, a\neq b, \textrm{$a$ sees $b$}\}$.
Let $R$ denote the set of reflex vertices of~$P$.
We refer to a chord $c \in \chord(R,R)$ as a \emph{reflex chord}.
Reflex chords play a major role in proving correctness of our algorithms.
One of its first appearances can be seen in the following lemma.
We say a line~$\ell$ \emph{properly intersects} a convex set~$f$ if $\ell \cap \interior(f) \neq \emptyset$.
We denote by $\interior(f)$ the interior points of~$f$.
\begin{lemma}
	\label{lem:reflex-sees-all}
	Let $P$ be a simple polygon and let $f\subseteq P$ be a convex set that is not
	properly intersected by any reflex chord.
	Furthermore let $r$ be a reflex vertex that sees at least one
	interior point of $f$.
	Then it holds that~$r$ sees the entire convex set~$f$.
\end{lemma}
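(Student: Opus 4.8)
The plan is to argue by contradiction: suppose $r$ sees some interior point $p$ of $f$ but does not see the whole face $f$. Since $f$ is convex (faces of the arrangement are convex), consider the set $\vis(r) \cap f$. This set is the intersection of $f$ with the star-shaped region $\vis(r)$, so it is itself a connected region, but under our assumption it is a proper subset of $f$. Hence its relative boundary inside $\interior(f)$ is nonempty: there is a point $x$ on the boundary of $\vis(r)$ lying in the interior of $f$. I would first recall the standard structure of $\vis(r)$: its boundary consists of polygon edges together with ``window'' chords, each window being a maximal segment of $P$ that passes through some reflex vertex $r'$ and is collinear with the segment $\seg(r,r')$ (the ray from $r$ grazing past $r'$). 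Every point of $\partial\vis(r)$ in the interior of $P$ lies on such a window.

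The key step is then to identify the window through $x$ and show it is a reflex chord that properly intersects $f$. Let $x$ lie on the window $w$ determined by the reflex vertex $r'$; so $w$ is the maximal sub-segment of the line through $r$ and $r'$ contained in $P$, and $x \in \interior(f) \cap w$. Now I would show two things. First, $w$ extended is actually a chord of $P$ through $r'$ and in fact a reflex chord: the segment $\seg(r,r')$ lies in $P$ (since $r$ sees $r'$, which holds because $x$ lies just beyond $r'$ on this line and $x\in\vis(r)$), and extending past $r'$ the window continues until it hits $\partial P$; also extending from $r$ backwards past $r$ it reaches $\partial P$ as well, so the whole line $\cap\, P$ is a chord containing the two reflex vertices $r$ and... wait — here is the subtlety: one endpoint need only be a reflex vertex. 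The cleaner route is: $r'$ is a reflex vertex, and the window is a maximal segment in $P$ passing through $r'$, so its two endpoints lie on $\partial P$ and it is a chord; being collinear with $\seg(r,r')$ where $r$ is reflex, this chord is in $\chord(R, \cdot)$ — but to land in $\chord(R,R)$ I should take the window's supporting line $\ell$ and note it contains the reflex vertex $r'$ and also passes through $r$: since $\seg(r,r')\subseteq P$, the maximal sub-segment of $\ell$ in $P$ is a chord containing \emph{both} reflex vertices $r$ and $r'$, hence a reflex chord $c$. Second, $c$ properly intersects $f$: the point $x$ is in $\interior(f)$ and on $c$, so $c$ meets $f$; and $c$ does not contain $f$ (it is one–dimensional while $f$ is two–dimensional, or: points of $f$ strictly on one side of $\ell$ near $p$ versus $x$ differ), so $c$ crosses from inside $f$ to outside, i.e. it properly intersects $f$. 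This contradicts the hypothesis that $f$ is not properly intersected by any reflex chord.

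I would be slightly careful about one degenerate possibility: the window through $x$ might be a segment that only touches $f$ at $x$ tangentially, i.e. $c$ meets $\overline{f}$ but enters $\interior(f)$ nowhere else, so it is not a ``proper'' intersection in whatever technical sense the paper uses. But this cannot happen here, because $x \in \interior(f)$ was chosen as a boundary point of $\vis(r)\cap f$ \emph{inside} the open face, so points of $f$ on both sides of the window near $x$ exist — those on the $r$-side are seen by $r$, those on the far side are not — and thus $c$ genuinely separates $f$ into two nonempty pieces. The main obstacle, and the place where I expect to spend the most care, is precisely pinning down the definition of ``properly intersected by a reflex chord'' used in the paper and matching it to ``$c$ contains an interior point of $f$ and $f \not\subseteq c$'', together with the bookkeeping to guarantee the chord $c$ has \emph{both} endpoints-or-passing-vertices reflex (so that it is genuinely a member of $\chord(R,R)$ and not merely a chord through one reflex vertex). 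Once that is settled, the rest is the standard visibility-polygon window argument.
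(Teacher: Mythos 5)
Your proposal is correct and follows essentially the same route as the paper's proof: argue by contradiction, look at the boundary of $\vis(r)$ inside $f$, and observe that this window segment lies on a reflex chord properly intersecting $f$. Your extra care in showing the window's supporting line passes through \emph{both} the blocking reflex vertex $r'$ and $r$ itself (with $\seg(r,r')\subseteq P$, so the chord is genuinely in $\chord(R,R)$) is a correct elaboration of a detail the paper leaves implicit in its one-line assertion that the boundary segment ``is part of a reflex chord.''
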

\begin{proof}
	As $\vis(r)$ is a closed region, we know that $r$ sees $f$ if and only if it sees all its interior points.
	For the purpose of contradiction assume that there is
	an interior point that is not seen by~$r$.
	Consider the boundary $b$ of $\vis(r) \cap f$.
	This segment $b$ is part of a reflex chord, which properly intersects $f$.
	This is a contradiction to the assumption.
\end{proof}

Given a convex polygon $f$ in some polygon~$P$, we denote by $\representative(f)$ a point of $f$ to represent the face.
In case that $f$ has a reflex vertex $r$, we set
$\representative(f) = r$. 
Otherwise, we choose 
the  lexicographically smallest vertex in~$f$ that is not a convex vertex of~$P$.
(In principle, we could choose any point in~$f$ arbitrarily.
We exclude convex vertices of~$P$, as this will allow
us to describe an FPT algorithm later most conveniently.
For concreteness, we pick the lexicographically smallest.
We pick a vertex of~$f$ as those will be later be part of our candidate set, as we will see later {in Section~\ref{sec:OneShotAlgo}}.)
If we are given a set $F = \{f_1,\ldots,f_k\}$ of convex polygons, then we define $\representative(F) = \{\representative(f) : f\in F\} = \{\representative(f_1) ,\ldots, \representative(f_k)\} $.

In a simplified way the following lemma states that the visibility of a convex polygon~$f$ can be replaced by the visibility of a single point~$p\in f$ with enhanced vision.
{However, there are two special cases that we need to be aware of. 
Firstly, we need to make sure that $f$ contains at most one reflex vertex. 
The reason is that every reflex vertex of $f$ may see much more than any other point of $f$.
We can deal with this by defining the representative of $f$ to be a reflex vertex if $f$ has at most one reflex vertex.
Secondly, it could be that $f$ sees only one boundary point of $f'$ and enhancing the visibility of $f$ will not make $f'$ visible at all. 
We can exclude this situation by demanding that $f$ sees an interior point of $f'$.
}
See Figure~\ref{fig:Face-Point-Degenerate} for an illustration of both special cases.

We define the visibility region $\vis(f)$ of a convex set $f$ as the union of the visibility regions of all its points, i.e., $\vis(f) = \bigcup_{q\in f} \vis(q)$.

\begin{lemma}[Face-Point-Replacement]
	\label{lem:Face-Point-Replacement}
	Assume the following conditions are met.
	\begin{itemize}
		\item We are given a simple polygon $P$.
		\item  Two closed convex regions $f,f'$ with $\powerF(f),\powerF(f') \leq \delta/2$ are given.
		\item Neither~$f$ nor~$f'$ are properly intersected by a reflex chord.
		\item The region~$f$ has at most one reflex vertex of~$P$ on its boundary.
		\item $\vis_\gamma(f)\cap \interior(f') \neq \emptyset$, for some $\gamma \in [-\delta,0]$.
		\item We denote $p = \representative(f)$.
	\end{itemize}
	Then it holds that $f'\subseteq \vis_{\gamma+\delta}(p)$.
\end{lemma}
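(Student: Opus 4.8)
The plan is to pick an arbitrary point $x \in f'$ and show $x \in \vis_{\gamma+\delta}(p)$, splitting into the two cases distinguished by the definition of the representative. First I would fix, by the hypothesis $\vis_\gamma(f)\cap\interior(f')\neq\emptyset$, a point $q_0 \in f$ and a point $y_0 \in \interior(f')$ with $y_0 \in \vis_\gamma(q_0)$; so either $q_0$ sees $y_0$ outright (if $\gamma \le 0$ the diminished region is contained in the ordinary one, so actually $q_0$ plainly sees $y_0$), or more carefully the segment $\seg(q_0,y_0)$ is blocked only ``a little''. The key geometric engine is: since neither $f$ nor $f'$ is properly intersected by a reflex chord, Lemma~\ref{lem:reflex-sees-all} applies — any reflex vertex seeing one interior point of $f$ (resp.\ $f'$) sees all of $f$ (resp.\ $f'$). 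This is what lets me transfer visibility from the witness pair $(q_0,y_0)$ to the pair $(p,x)$.

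\textbf{Case 1: $f$ contains a reflex vertex $r$ on its boundary, so $p=r$.} Here I want to show $r$ sees all of $f'$ once we allow $\delta$-enhancement (with the $\gamma$ budget). The point $q_0\in f$ sees $y_0\in\interior(f')$ through possibly-diminished vision; I would argue that the segment from $r$ to $y_0$ is obtained from $\seg(q_0,y_0)$ by a bounded rotation about the blocking reflex vertices, and that the total angular cost is at most $\power(f) \le \delta/2$ on the $f$-side plus $\power(f')\le\delta/2$ on the $f'$-side, which together with the $\gamma\le 0$ slack stays within the $\gamma+\delta$ enhancement budget. Concretely: any reflex vertex $r'$ that blocks $\seg(r,y_0)$ but not $\seg(q_0,y_0)$ must lie in the thin cone between the two segments as seen from $y_0$; since both $r$ and $q_0$ lie in $f$ and $f'$ has a representative interior point, the angular width of that cone at the relevant reflex vertex is bounded by the power of the faces. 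Then, having shown $y_0 \in \vis_{\gamma+\delta}(r)$, I use Lemma~\ref{lem:reflex-sees-all}-style reasoning on $f'$: the set of directions in which $r$'s ($\gamma+\delta$)-vision enters $f'$ is bounded by a reflex chord unless it is all of $f'$, and since $f'$ is not properly crossed by a reflex chord, $r$ (enhanced) sees all of $f'$, i.e.\ $f'\subseteq\vis_{\gamma+\delta}(p)$. This is also where the degenerate exclusion matters: ``$f$ has at most one reflex vertex on its boundary'' guarantees $p=r$ is the \emph{only} corner of $f$, so there is no second reflex vertex of $f$ competing to be the blocker of $p$'s own visibility.

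\textbf{Case 2: $f$ has no reflex vertex, so $p$ is a (non-convex-vertex) vertex of $f$ chosen lexicographically.} Since $f$ is convex and reflex-chord-free and contains no reflex vertex of $P$, every point of $f$ sees the same set of reflex vertices, and in fact $\vis(f) = \vis(q)$ for every $q\in f$ up to the enhancement; in particular $\vis(p) \supseteq$ (essentially) $\vis(q_0)$ after accounting for the power of $f$. So $p$ with $\delta/2$-enhancement already sees $y_0$ (absorbing the at-most-$\delta/2$ angular displacement from $q_0$ to $p$ across $f$, plus the $\gamma$ slack), and then the same reflex-chord argument on $f'$ (another $\delta/2$, but we are still within $\gamma+\delta$ since $\gamma\le 0$ and $\power(f)+\power(f')\le\delta$) upgrades ``$p$ enhanced sees $y_0$'' to ``$p$ enhanced sees all of $f'$''. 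I would double-check the arithmetic of the angular budget here: we spend at most $\power(f)\le\delta/2$ to move the apex from $q_0$ to $p$ within $f$, and at most $\power(f')\le\delta/2$ to spread from the single visible interior point $y_0$ to all of $f'$; summing gives $\le\delta$, and the hypothesis already grants $\gamma$ worth of pre-existing deficit, so the target $\gamma+\delta$ is exactly right.

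\textbf{Main obstacle.} The delicate part is the bookkeeping of angles: making precise that ``swinging the segment's apex across $f$'' and ``swinging its far end across $f'$'' each cost at most the respective power, that these are the \emph{only} costs (no reflex vertex strictly between the two faces can eat more than that — this is exactly where reflex-chord-freeness of $f$ and $f'$ is used, via Lemma~\ref{lem:reflex-sees-all}), and that the two contributions add rather than interact badly. I expect to handle this by looking, for each candidate blocking reflex vertex $r'$, at the two rays from $r'$ to the two endpoints of the old and new segments, bounding the angle between them by $\alpha(r') \le \power(\cdot)$, and invoking the definition of $A(r',\delta)$ (the visibility-enhancing region from a $\delta$-rotation about $r'$) to conclude the swept point lies in $\vis_{\gamma+\delta}(p)$. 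The degenerate cases flagged in the surrounding text (``$f$ only sees a reflex vertex'', ``at most one reflex vertex on the boundary of $f$'') are precisely the configurations where this angle-addition argument would otherwise fail, and the hypotheses of the lemma are exactly tuned to exclude them.
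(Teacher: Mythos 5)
Your overall strategy (pass to the representative, spend an angular budget of $\power(f)\le\delta/2$ on the $f$-side and $\power(f')\le\delta/2$ on the $f'$-side, use the $\gamma\le 0$ slack, and use reflex-chord-freeness plus Lemma~\ref{lem:reflex-sees-all} to spread from one seen interior point to all of $f'$) matches the spirit of the paper's proof, but there is a genuine gap at exactly the place you flag as the ``main obstacle'': you never establish that the view from $p$ to an arbitrary interior point $q\in\interior(f')$ is blocked by \emph{at most one} reflex vertex. This is not bookkeeping; it is the heart of the matter. The $\delta$-enhanced region $\vis_{\gamma+\delta}(p)$ is built by rotating, around each reflex vertex visible from $p$, the corresponding shadow-boundary ray by a fixed angle --- enhancement does \emph{not} compound across two successive blocking corners, so an ``add the angles at each candidate blocker $r'$ separately'' argument collapses if the geodesic from $p$ to $q$ bends twice. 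The paper devotes the bulk of its proof to ruling this out (Claim~\ref{clm:ShortestPath}): it shows $\short(p,q)$ is one or two convex chains, that every tangent line of these chains properly intersects $f$ or $f'$, and hence that two consecutive reflex vertices on the path would yield a reflex chord properly crossing $f$ or $f'$, contradicting the hypothesis. Note that this is where reflex-chord-freeness is really used; Lemma~\ref{lem:reflex-sees-all}, which you invoke for this purpose, only upgrades ordinary visibility of a reflex vertex from one interior point of a face to the whole face and cannot deliver the single-blocker property, nor the statement ``enhanced $p$ sees $y_0$ implies enhanced $p$ sees all of $f'$'' that your Case~1 needs. Once the single bend vertex $r$ is isolated, the paper's angle computation is done entirely at the apex $r$ (rays toward $p$, toward $q$, and the two visibility boundaries of $\vis(f)$ and $\vis_\gamma(f)$), which is where the $\alpha_1+\alpha_2-\alpha_0\le\delta-\gamma$ arithmetic you anticipate actually lives.

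A secondary, smaller gap: your case split is by whether $f$ carries a reflex vertex, but the degenerate configuration the paper must handle is different --- namely that \emph{only} the single point $p$ of $f$ sees any interior point of $f'$ (left picture of Figure~\ref{fig:Face-Point-Degenerate}). The paper proves (Claim~\ref{clm:a-is-reflex}, by sliding and then rotating the chord through $a$ and $b$) that in this situation $p$ is forced to be a reflex vertex, after which Lemma~\ref{lem:reflex-sees-all} gives $f'\subseteq\vis(p)$ with no angular budget spent at all. In your Case~2 you instead assert that all points of a reflex-vertex-free $f$ have ``essentially the same'' visibility up to enhancement; that assertion is close to the statement being proved (it is essentially Corollary~\ref{cor:Vision-Enclosure}) and would need its own argument, again routed through the single-blocker claim.
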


Admittedly, the lemma is already technical.
Before we prove the lemma let us point out a simple corollary,
which is slightly less technical. 
Unfortunately, Corollary~\ref{cor:Vision-Enclosure} will not be strong enough for our purposes.
\begin{corollary}
	\label{cor:Vision-Enclosure}
	Let $P$ be a simple polygon.
	Assume the following conditions are met.
	\begin{itemize}
		\item A convex region $f\subseteq P$ with $\powerF(f) \leq \delta/2$ is given and no reflex chord properly intersects~$f$.
		\item The region~$f$ has at most one reflex vertex of~$P$ on its boundary.
		\item We denote $p = \representative(f)$.
		\item A number $\gamma \in [-\delta,0]$ is given.
	\end{itemize}
	Then it holds that $\vis_{\gamma}(f) \subseteq \vis_{\gamma+\delta}(p)$.
\end{corollary}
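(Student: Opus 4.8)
).}
The plan is to obtain the corollary as a pointwise consequence of Lemma~\ref{lem:Face-Point-Replacement}, feeding the lemma a tiny disk around the point under consideration as its second region~$f'$, and then to clean up the degenerate points by a limiting argument based on the closedness of $\vis_{\gamma+\delta}(p)$. Throughout, the conditions imposed on $f$ and on $p=\representative(f)$ in the corollary are exactly the ones the lemma requires of $f$ and $p$, so the only real work is to construct a suitable~$f'$.

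First I would set up the reduction. There are only finitely many reflex chords, each a segment, and $\partial P$ consists of finitely many segments, so the set $Z := \partial P \cup \bigcup_{c\in\chord(R,R)} c$ is closed and nowhere dense in the plane. Since $\vis_\gamma(f)$ is a closed polygonal region that coincides with the closure of its interior, the subset $X := \vis_\gamma(f)\setminus Z$ is dense in $\vis_\gamma(f)$. Because $\vis_{\gamma+\delta}(p)$ is closed, it then suffices to prove $X\subseteq\vis_{\gamma+\delta}(p)$: this yields $\vis_\gamma(f)=\overline{X}\subseteq\overline{\vis_{\gamma+\delta}(p)}=\vis_{\gamma+\delta}(p)$, which is the claim.

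Now fix $x\in X$ and apply Lemma~\ref{lem:Face-Point-Replacement} with $f' := \overline{B_\varepsilon(x)}$, the closed disk of radius $\varepsilon$ centred at $x$, for a sufficiently small $\varepsilon>0$. The region $f'$ is convex. Since $x$ lies in $\interior(P)$ and off the (finitely many, closed) reflex chords, for all small enough $\varepsilon$ we have $f'\subseteq P$ and $f'$ is disjoint from every reflex chord, hence certainly not properly intersected by one. For the power bound, the power-angle $\alpha(r)$ of the disk $f'$ as seen from a reflex vertex $r$ is at most about $2\varepsilon/\textrm{dist}(r,x)$, and $x$, being an interior point, is at positive distance from every vertex of $P$; hence $\power(f')\to 0$ as $\varepsilon\to 0$, and we may take $\varepsilon$ small enough that $\power(f')\le\delta/2$. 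Finally $x\in\interior(f')$ and $x\in\vis_\gamma(f)$, so $\vis_\gamma(f)\cap\interior(f')\neq\emptyset$. All hypotheses of the lemma now hold, so $f'\subseteq\vis_{\gamma+\delta}(p)$, and in particular $x\in\vis_{\gamma+\delta}(p)$. This establishes $X\subseteq\vis_{\gamma+\delta}(p)$ and, via the reduction above, the corollary.

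The step I expect to be the most delicate is the density claim $\overline{X}=\vis_\gamma(f)$: it rests on $\vis_\gamma(f)$ being regular closed, i.e.\ having no lower-dimensional slivers, which is where the convention that $\vis_{\pm\delta}(\cdot)$ is the closure of a genuinely two-dimensional region enters. A second point that needs care, though routine, is the power estimate for small disks: a slightly enlarged disk may become visible to reflex vertices that $x$ itself does not see, but all of these remain at distance bounded away from $x$, so $\power(f')\le\delta/2$ still follows for small~$\varepsilon$.
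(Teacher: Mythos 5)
Your proposal is correct and follows essentially the same route as the paper: the paper's own proof is also a direct application of Lemma~\ref{lem:Face-Point-Replacement}, taking for $f'$ the faces of an auxiliary arrangement whose faces have power at most $\delta/2$ and are not split by reflex chords, and concluding $f'\subseteq\vis_{\gamma+\delta}(p)$ for every face whose interior meets $\vis_{\gamma}(f)$. Your only deviation is cosmetic---tiny disks around generic points plus a closure/density argument instead of arrangement faces---and the regular-closedness point you flag as delicate is implicitly needed (and silently glossed over) in the paper's one-line proof as well, since that proof likewise misses any part of $\vis_{\gamma}(f)$ confined to the arrangement's $1$-skeleton.
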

\begin{proof}[of Corollary~\ref{cor:Vision-Enclosure}]
	Construct an arrangement~\A such that every face in~\A has small \power 
	and is not split by a reflex chord.
	Let $f'$ be a face of \A. If $\vis_{\gamma}(f)$ contains an interior point of $f'$, then by Lemma~\ref{lem:Face-Point-Replacement} it holds that 
	$f'\subseteq \vis_{\gamma+\delta}(p)$. 
	This finishes the proof.
\end{proof}

\begin{figure}[H]
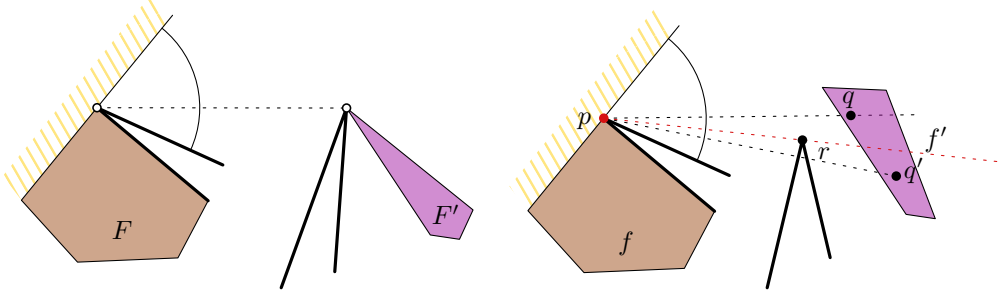

	\centering
	\includegraphics[page=8]{figures/stability.pdf}
	\quad
	\includegraphics[page=9]{figures/stability.pdf}
	\caption{Left: It is not enough for one point of~$f$, seeing one point of~$f'$. We need the stronger assumption that an interior point is seen. Right: If the interior point is seen by the reflex vertex of $f$, then this reflex vertex sees $f'$ entirely.}
	\label{fig:Face-Point-Degenerate}
\end{figure}

\begin{proof}[of Lemma~\ref{lem:Face-Point-Replacement}]
	Let $q\in \interior(f')$ be an arbitrary interior point of $f'$. 
	We will show that $q \in \vis_{\gamma+\delta}(p)$.
	This is sufficient as visibility regions are closed by definition and thus $\vis_{\gamma+\delta}(p) \supseteq \interior(f') \Rightarrow \vis_{\gamma+\delta}(p) \supseteq f'$.
	Let $a\in f$ be a point that sees some other point $b\in \interior(f')$.
	Such a pair~$(a,b)$ exists by the assumption 
	$\vis_\gamma(f)\cap\, \interior(f')\neq \emptyset$
	and the fact that $\vis_\gamma(f)\subseteq \vis(f)$, for $\gamma\leq 0$.
	We handle first the special case that $a = p$ 
	and no other point of $f$ sees any point in the interior of~$f'$.
	\begin{claim}
		\label{clm:a-is-reflex}
		In this case, $a = p$ must be a reflex vertex of~$P$.
	\end{claim}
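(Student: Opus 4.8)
The plan is to argue by contradiction: assume $a=p$ is \emph{not} a reflex vertex of~$P$, and produce a second point $a'\in f$ with $a'\neq a$ that also sees an interior point of~$f'$, contradicting the standing hypothesis of this special case. Fix a witness $b\in\interior(f')$ seen by~$a$. Since every point of~$f$ that sees~$b$ thereby sees an interior point of~$f'$, the special-case hypothesis is equivalent to $f\cap\vis(b)=\{a\}$, so $a$ is an \emph{isolated} point of this closed set. Note also that $\interior(f')\neq\emptyset$ (this is implied by the lemma's hypothesis $\vis_\gamma(f)\cap\interior(f')\neq\emptyset$), so $f'$ is full-dimensional and $b\in\interior(P)$.

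First I would rule out $a\in\interior(P)$. In that case $\seg(a,b)$ has both endpoints in~$\interior(P)$, so either it lies entirely in~$\interior(P)$ or it meets~$\partial P$ in its relative interior. If $\seg(a,b)\subseteq\interior(P)$, then for every $a'\in f$ close enough to~$a$ the segment $\seg(a',b)$ still lies in~$\interior(P)$, so $a'$ sees~$b$; since $f$ is then locally two-dimensional at~$a$ there is such an $a'\neq a$, contradicting isolation. (The degenerate case $f=\{a\}$ is handled separately: then $f$ is a single point equal to $\representative(f)$, and the hypotheses that at most one reflex vertex of~$P$ lies on~$\partial f$ and that no reflex chord properly intersects~$f$ force this point to be that reflex vertex, contradicting non-reflexivity.) Otherwise $\seg(a,b)$ touches~$\partial P$ in its relative interior; a short wedge-angle argument shows that a segment contained in~$P$ can touch~$\partial P$ in its relative interior only at reflex vertices of~$P$, and this is precisely the configuration in which $a$ lies on a \emph{window} of $\vis(b)$ emanating from such a reflex vertex~$r$ (so $r\in\seg(a,b)$). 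Using that $\interior(f')$ is open, I perturb $b$ to $b'\in\interior(f')$ so that $\seg(a,b')$ rounds~$r$ on its interior side, and, since $f$ is locally two-dimensional at~$a$, perturb $a$ to $a'\in f$ so that $\seg(a',b')\subseteq P$; then $a'\neq a$ sees $b'\in\interior(f')$, the desired contradiction. (If $\seg(a,b)$ touches~$\partial P$ at several reflex vertices, take $r$ closest to~$a$ and perturb only on the $a$-side of~$r$.) Hence $a\notin\interior(P)$.

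So $a\in\partial P$; being non-reflex, $a$ lies in the relative interior of an edge of~$P$ or at a convex vertex, and the convex set $f\subseteq P$ then lies locally on the $P$-side of the supporting line(s) of~$\partial P$ at~$a$. I would run the same analysis on $\seg(a,b)$: if it avoids~$\partial P$ except at its endpoint~$a$, a small perturbation of~$a$ into~$f$ (together with a perturbation of~$b$ inside $\interior(f')$) keeps the segment inside~$P$, giving $a'\neq a$ seeing a point of $\interior(f')$; if $\seg(a,b)$ grazes one or more reflex vertices, round the one closest to~$a$ exactly as above. The only new point is that $a$ may be perturbed only toward the $P$-side of~$\partial P$ at~$a$, but that is precisely where~$f$ lies, so the argument goes through. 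In every case we contradict isolation, so the assumption is untenable and $a=p$ must be a reflex vertex of~$P$.

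The step I expect to be the main obstacle is making the simultaneous perturbation of $a$ and~$b$ rigorous: one must check that the perturbed segment stays inside~$P$ \emph{globally}, not just near the rounded reflex vertex~$r$. This uses that $\seg(a,b)\setminus\{a\}$ meets~$\partial P$ only at isolated reflex vertices — no sub-segment of it lies along an edge of~$P$, since that would again force a reflex vertex of~$P$ where the segment leaves the edge — together with the lemma's hypotheses that no reflex chord properly intersects~$f$ and that at most one reflex vertex of~$P$ lies on~$\partial f$, which are exactly the general-position conditions preventing $\seg(a,b)$ from being pinned to~$\partial P$ on several sides at once. The remaining bookkeeping (the low-dimensional degeneracies of~$f$ and~$f'$, and the identification of windows of $\vis(b)$ through~$a$ with the reflex vertices on $\seg(a,b)$) is routine but needs care.
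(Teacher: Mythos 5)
Your overall strategy is the same as the paper's: manufacture a second visibility pair between $f$ and $\interior(f')$ by perturbing the line of sight, unless $a$ itself is the reflex vertex doing the blocking (the paper does this by sliding the chord $\chord(a,b)$ parallel towards $f$ and, if that is pinned, rotating it about an extreme reflex vertex on it). However, your write-up leaves unproved exactly the one step that carries the content, and the sketch you give for it would not close the gap. You correctly identify the obstacle as keeping the perturbed segment inside $P$ globally, i.e.\ the situation where $\seg(a,b)$ grazes reflex vertices whose exterior wedges lie on opposite sides of the line: then no rigid motion of a straight segment can round all of them on their interior sides, and your fallback (``take $r$ closest to $a$ and perturb only on the $a$-side of $r$'') is not available --- a straight segment cannot be perturbed on one side of an interior point only; pivoting at $r$ moves the far portion the opposite way, possibly into the exterior wedge of another grazed reflex vertex. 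What actually excludes this multi-pin configuration is not the hypotheses you invoke (that no reflex chord properly intersects $f$, and that at most one reflex vertex lies on $\partial f$), but the hypothesis on $f'$: if two reflex vertices lay on the chord $\chord(a,b)$, they would see each other along it, so that chord would be a reflex chord, and it passes through $b\in\interior(f')$, hence properly intersects $f'$ --- contradicting the lemma's assumption. (The hypothesis on $f$ gives nothing here, since under the special-case assumption the chord meets $f$ only in the boundary point $a$.) With that observation the argument does go through: at most one reflex vertex $r$ lies on the chord, every contact of $\seg(a,b)$ with $\partial P$ other than possibly $a$ occurs at $r$ (up to the collinear-edge degeneracies that both you and the paper gloss over), and since $b$ is interior to $f'$ while $f$ has points arbitrarily close to $a$ on its side of the supporting line, the perturbations of $a$ and $b$ can be chosen compatibly with rounding the exterior wedge at $r$. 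As written, though, the multi-pin case is neither excluded nor correctly handled, so the proof is incomplete at its crucial point.

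Two smaller remarks: your dismissal of the degenerate case $f=\{a\}$ is not correct as stated (the hypotheses do not force a one-point region to be a reflex vertex); this case simply does not arise in the lemma's applications, where $f$ is a full-dimensional face of the arrangement, an assumption the paper also makes tacitly. Also, the special-case hypothesis is stronger than $f\cap\vis(b)=\{a\}$ (no other point of $f$ sees \emph{any} interior point of $f'$), but you only use the direction you need, so this does not affect the argument.
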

	\begin{proof}[of Claim~\ref{clm:a-is-reflex}]
		Consider the chord $\ell= \chord(a,b)$. The chord
		$\ell$ does not intersect~$f$ in another point~$a'$,
		as this would imply that $a'$ sees $b$ as well.
		Thus, $a=p$ is a vertex of $f$ and let us assume without loss of generality that $\ell$ is horizontal, and $f$ is below~$\ell$.
		There must be a reflex vertex on $\ell$ as otherwise, we could slide down $\ell$ and detect a new visibility pair  $(a',b')\in f\times \interior(f')$. 
		(Recall that $b$ is an interior point of~$f'$.)
		The chord~$\ell$ properly intersects~$f'$ and 
		thus contains exactly one reflex vertex~$r$.
		This reflex vertex~$r$ must be equal to $a=p$ as otherwise, 
		we can rotate~$\ell$ around~$r$ and get a new visibility pair,
		which does not exist by assumption.
		This finishes the proof of the claim.
	\end{proof}
	Due to Claim~\ref{clm:a-is-reflex}, Lemma~\ref{lem:reflex-sees-all} implies that $\vis_{\gamma+\delta}(p) \supseteq \vis(p) \supseteq f'$.
	
	\vspace{0.1cm}
	
	Now we consider the case that $a \neq p$.
	In particular, this implies that $a$ is not a reflex vertex of~$f$.
	Recall that we want to show for every point $q \in \interior(f')$ 
	that $q \in \vis_{\gamma+ \delta}(p)$.
	First note that if the shortest path from $p$ to $q$ 
	($\short(p,q)$) is the line-segment 
	$\seg(p,q) \subseteq \vis(p) \subseteq \vis_{\gamma+ \delta}(p) \subseteq P$,
	then we are done.
	Thus, we consider the case that $\short(p,q)$
	contains at least one
	reflex vertex $r$ in its interior.
	We will show the following claim.
	\begin{claim}
		\label{clm:ShortestPath}
		The shortest path $\short(p,q)$ makes at most one bend~$r$.
		There is no reflex vertex on $\short(p,q)$ after~$r$.
	\end{claim}
	Given a polygonal path~$w$, we say~$w$ makes a \emph{bend} at $v$,
	if~$v$ is an interior vertex of~$w$ and the interior angle at~$v$ is not equal to~$\pi$.
	Note that every bend on a shortest path must correspond to a reflex vertex of~$P$. The reverse does not hold. 
	It can be that there are reflex vertices on a shortest path that do not cause a bend.
	
	\begin{proof}[of Claim~\ref{clm:ShortestPath}]

		We consider two sub cases.
        First, we consider the case that $\short(p,q)$ 
		is not properly intersecting the line $\ell(a,b)$.
		In the second case, to be handled later, $\short(p,q)$
		properly intersects the line $\ell(a,b)$ once. 
		Note that $\short(p,q)$ cannot properly intersect $\ell(a,b)$ more than once. 
		Otherwise, we could shorten the shortest path by using part of $\ell(a,b)$.

\begin{figure}[H]
			\centering
			\includegraphics[page=11]{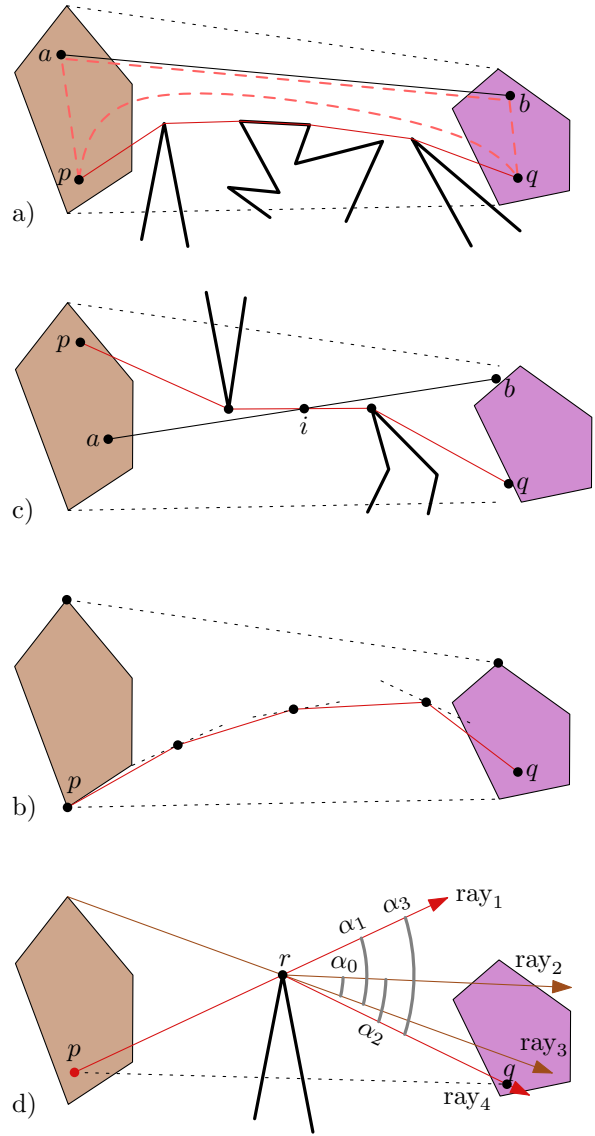}
			\caption{The brown face represents face~$f$ and the purple face represents face~$f'$.
				a) The shortest path from~$p$ to~$q$ is a convex chain.
				b) The tangent almost always intersects either~$f$, or~$f'$. 
				An exception is potentially the tangent through the point~$p$. 
				Recall that~$q$ is an inner point. 
				c) In this scenario the shortest path from~$p$ to~$q$ 
				consists of two convex chains. 
				d) Various rays and angles are illustrated.}
			\label{fig:One-Reflex}
		\end{figure}
  
		\textbf{Subcase~1. \ }
		Consider the path $pabq$. 
		This path is fully contained inside $P$, as $f$ and $f'$ 
		are convex and $a$ sees $b$.
		When we keep locally shortening this path, we 
		converge to the shortest path. 
		This mental experiment shows that the shortest
		path forms a convex chain, see Figure~\ref{fig:One-Reflex}~a).
        {(A polygonal path $P$ is a convex chain if $P$ is a subset of the boundary of the convex hull of $P$.)}

		For every point~$t$ on $\short(p,q)$, we can define a tangent $\tangent(t)$. 
		It holds for every point~$t$ that $\tangent(t)$ properly intersects~$f$ or~$f'$, except at the very beginning, see Figure~\ref{fig:One-Reflex}~b).
		Now, suppose for the purpose of contradiction that $\short(p,q)$ makes two (consecutive) bends. Say at reflex vertices $r_1,r_2$.
		Then  the line $\ell(r_1,r_2)$ is a tangent for some~$t$ and thus properly intersects either~$f$ or~$f'$.
		By assumption of the lemma, there is no reflex chord 
		that properly intersects either~$f$ or~$f'$. 
		The line $\ell(r_1,r_2)$ defines a reflex chord naturally.
		This finishes the proof of this subcase.
		
		\textbf{Subcase~2. \ }
		Now, we consider the second case{, when} the shortest path $\short(p,q)$ properly intersects $\ell(a,b)$
		in the unique point~$i$.
		Note that we can now decompose $\short(p,q)$ into $\short(p,i)$ and $\short(i,q)$. 
		By the same argument as in Subcase~1, we conclude that $\short(p,i)$ and $\short(i,q)$ are convex chains.
		We define the path $\alpha$ as the concatenation of $\short(p,i)$ with $\seg(i,b)$.
		In the same way we define $\beta$ as the concatenation of $\seg(a,i)$ and $\short(i,q)$.
		At first note that $\alpha$ and $\beta$ are fully contained in $P$ and are convex chains.
		Again, by the same argument as in Subcase~1, it holds that every tangent to $\alpha$ and $\beta$ are properly intersecting either~$f$ or~$f'$.
		(Again with the exception at the first segment of $\alpha$.)
		If all bends of $\short(p,q)$ are on the same side of $\ell(a,b)$, then we
		can literally repeat the argument from Subcase~1.
		So we assume that there is at least one bend on either side of $\ell(a,b)$.
		Let $r_1,r_2$ be the two consecutive bends of $\short(p,q)$ such that $\seg(r_1,r_2)$ properly intersects~$\ell(a,b)$, see Figure~\ref{fig:One-Reflex}~c).
		Now, note that $\chord(r_1,r_2)$ is a reflex chord that properly intersects either~$f$ or~$f'$, as $\ell(r_1,r_2)$ is a tangent to both~$\alpha$ and~$\beta$. This is a contradiction to the assumption that neither~$f$ nor~$f'$ are properly intersected by a reflex chord as stated in the assumption of the lemma. This finishes the proof of Claim~\ref{clm:ShortestPath}.
	\end{proof}
	
	Due to Claim~\ref{clm:ShortestPath}, we restrict our attention
	to the case that $\short(p,q)$ contains \emph{exactly} one bend, denoted by~$r$.
	Recall that~$r$ must be a reflex vertex of~$P$.
	By Lemma~\ref{lem:reflex-sees-all}, we know that~$r$ sees~$f'$ completely,
	as~$r$ sees~$q$, which is an interior point of~$f'$.
	For an illustration for the remainder of the proof, consider Figure~\ref{fig:One-Reflex}~d).
	We define the rays $\ray_1,\ray_2,\ray_3,\ray_4$, which will help us to prove our claim.
	All rays have apex $r$.
	The ray $\ray_1$ has the direction $r-p$.
	The ray $\ray_2$ describes the boundary of $\vis_{\gamma}(f)$ w.r.t.~$f'$. 
	The ray $\ray_3$ describes the boundary of $\vis(f)$ w.r.t.~$f'$.
	The ray $\ray_4$ has the direction~$q-r$.
	
	Now, we define the angles $\alpha_0, \alpha_1,\alpha_2,\alpha_3$ as follows. 
	The angle $\alpha_0$ is the angle between $\ray_2$ and $\ray_3$.
	By definition of diminished visibility regions $\alpha_0 = -\gamma$.
	(Recall that $\gamma$ is negative or zero by definition.)
	The angle $\alpha_1$ is the angle between $\ray_1$ and $\ray_3$.
	It holds that $\alpha_1 \leq \delta/2$, as $\powerF(f)\leq \delta/2$.
	The angle $\alpha_2$ is the angle between $\ray_2$ and $\ray_4$.
	It holds that $\alpha_2 \leq \delta/2$, as $\powerF(f')\leq \delta/2$.
	The angle $\alpha_3$ is the angle between $\ray_1$ and $\ray_4$.
	A simple calculation yields $\alpha_3 = \alpha_1+\alpha_2 - \alpha_0 \leq \delta + \gamma$. 
	This implies that $q \in \vis_{\gamma + \delta}(p)$ and finishes the proof of the lemma.
\end{proof}

\section{One-Shot \visionstable Algorithm}
\label{sec:OneShotAlgo}
In this section, we will describe an algorithm to solve 
the art gallery problem. We will show the following theorem.

\OneShot*

We call the algorithm from the previous theorem the \emph{one-shot \visionstable} algorithm.
In case that the algorithm does not return the optimal solution, the underlying polygon had \visionstability smaller than $\delta$.
In that case, we can simply half $\delta$ and repeat the algorithm.
In this way, we can find the optimal solution in the same polynomial running time solving additional $O(\log 1/\delta)$ integer programs in the worst case.
{To see this, assume we start with $\delta_0 = 1$ and the final value is $\delta_k = 2^{-k}$. 
Thus, we will have used $k = \log (1/\delta)$ halving steps.
}

We note that the integer program has size only dependent on $r$ and $\delta$, independent of the input size $n$.
Thus, for fixed $\delta$, it holds that the one-shot algorithm is fixed parameter tractable with respect to the number of reflex vertices.
\ReflexFPT*

The one-shot algorithm is not actually practical{, as demonstrated by our experiments in Section~\ref{sec:Tests}}. 
However, we will describe a second algorithm later, with similar theoretical performance guarantees, which we prove to be able to solve large instances of the art gallery problem practically.
Furthermore, we want to point out that, while solving an integer program may take an exponential amount of time in theory, in our practical experiments it played a minor role for the total running time (see Section~\ref{sec:Tests}). 
This corresponds to experiences that other groups of authors report as well~\cite{engineering}.

\paragraph*{Description of the one-shot \visionstable algorithm.}

\begin{figure}[H]
	\centering
	\includegraphics[page=3]{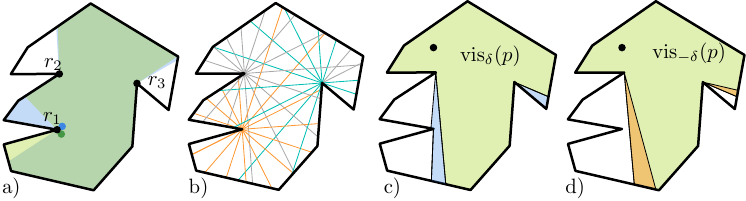}
	\caption{a) We include all chords of all pairwise visible reflex vertices.
		b) We shot rays from every reflex vertex. The angle between two consecutive rays is at most $\delta/2$.
		The two sets of segments define together the arrangement $\A$.}
	\label{fig:RayArrangement}
\end{figure}

As a first step, we construct an arrangement~$\A$.
We shoot rays from every reflex vertex and include all those rays
to~$\A$. 
The angle between any two consecutive rays is at most~$\delta/2$. 
We also add all reflex chords to $\A$.
At last, we subdivide all faces that are incident to more than one reflex vertex.
Note that we add at most {$O(r^2 + r/\delta) =  O(r^2/\delta)$} segments.
This finishes the description of~$\A$, see Figure~\ref{fig:RayArrangement}.
All faces of $\A$ have an \power of at most $\delta/2$.
Note that $\A$ has 
$O((r^2/\delta)^2 ) = O(r^4/\delta^2) $ vertices, edges and faces.
This does not account for the convex vertices of~$P$.

We define the candidate set~$C$ as the set of all 
vertices ($\vertex(C)$) {of \A}
that are not convex vertices of~$P$
and
all faces ($\face(C)$) of~$\A$.
We define the witness set~$W$ {as the} set of vertices ($\vertex(W)$) and a set of faces  ($\face(W)$).
For every face of \A, pick an interior point to define the \emph{vertex-witnesses}~$\vertex(W)$.
The \emph{face-witnesses} are composed of all faces of~\A.
We call candidates $c\in \face(C)$ \emph{face-candidate}s and
$c\in \vertex(C)$ \emph{vertex-candidate}s.
Similarly, we distinguish \emph{face-witness}es and \emph{vertex-witness}es.

Next, we compute for every candidate-witness pair $(c,w) \in C\times W$ whether the candidate sees the witness completely.
There are $O(r^8/\delta^4)$ such pairs in total. 
Using appropriate data structures, this can be computed in $O(\log n)$ time per pair~\cite{guibas1987linear}, where $n$ denotes the number of vertices of the underlying polygon.

We are now ready to build an integer program.
We call this integer program the \emph{one-shot~IP}.
For every candidate $c$, we create a variable $\const{c}$.
For every face-witness $w$, we create a variable $\const{w}$.
We denote variables with multi-character symbols.
As optimization function, we sum all the variables. 
However, the face-candidates receive a higher weight and the face-witnesses receive a lower weight. 
\[f = \sum_{c\in \vertex(C)} \const{c} \quad + \quad (1+\eps) \sum_{c\in \face(C)} \const{c} \quad + \quad \eps \sum_{w\in \face(W)} \const{w}.\]
We try to minimize the function $f$.
Choosing $1/\eps = |C|+|W| + 1$ is sufficiently small. 
The function $f$ counts primarily the number of guards that we use. 
The factor $(1+\eps)$ ensures that vertex-candidates are preferred over face-candidates.
The sum over the face-witnesses counts the number of face-witnesses that we do not see by a single guard in the solution.

For every witness $w$, we denote by $\VIS(w)$ the set of candidates that see $w$ completely.
We add the constraints
\[ \sum_{c \in  \VIS(w) } \const{c} \geq 1 \quad, \forall w\in \vertex(W).\]
For the face-witnesses we add the constraints
\[ \const{w} + \sum_{c\in \VIS(w) } \const{c}   \geq 1 \quad, \forall w\in \face(W).\]
Thus, the constraint to see all face-witnesses is relaxed. However, we need to pay in the objective by \eps, for every witness face that we do not see.
We also need to add constraints that ensure that every variable is in the set $\{0,1\}$.
The algorithm solves the integer program. 

The variables that are set to $1$, by the integer program give
rise to three sets. A set of \emph{vertex-guards} ($\const{c} =1,\  c\in \vertex(C)$), a set of face-guards ($\const{c} =1,\  c\in \face(W)$) and a set of unseen face-witnesses ($\const{w} =1,\  w\in \face(W)$).
If all guards that are found are vertex-guards 
and all the face-witnesses are seen by those guards
the algorithm reports those point-guards. 
Otherwise, 
the algorithm reports that it has not found an optimal solution.
In particular, this implies, as we will show, that the input polygon did not have \visionstability~$\delta$.

\paragraph*{Two stage integer program.}
While the IP that we describe here works theoretically, when we use a very small value of $\eps$, solvers may have trouble in practice.
In other words, they cannot solve the ILP.
Luckily, there is a simple trick to overcome this, by solving the IP in two stages.
The \emph{stage~$1$ IP} is the one-shot IP, with $\eps=0$.
In other words, the integer program is impartial towards face-guards and vertex-guards. 
Furthermore, it does not require any face-witness to be seen. 
Let $s$ be the value of stage~$1$~IP.
In the \emph{stage~2 IP}, we add the constraint that the total number of guards should be~$s$.
We define a new objective function~$f$ that counts all the face-guards and unseen face-witnesses. 
We aim to minimize the objective function.

Alternatively, we can also multiply the objective function with~$ 1 / \eps$. 
{In this way, the solver does not have to deal with small numbers, but instead with big numbers.
It seems IP solvers are more attuned to working with 
big integer numbers than small rational numbers.}

\paragraph*{Correctness.}
First, note that, by the description that we gave, the algorithm runs indeed in $O(\frac{r^{8}}{\delta^4}\log n )$ time and solves exactly one integer program as claimed.

\vspace{0.1cm}

We will first show that the result of the one-shot algorithm is \emph{reliable}. That is, whether the underlying polygon is \visionstable or not, we can trust that the algorithm returns the
correct result.
Thereafter, we will show that the algorithm will indeed report the optimal guarding for \visionstable polygons.

\vspace{0.1cm}

We show that if all guards $G$ found by the integer program are point-guards and all face-witnesses are seen, then the algorithm indeed returned the optimal solution.
Here, we are not assuming that $P$ is \visionstable.
As $G$ guards the entire polygon it holds that
$|G|\geq \opt$.
Let us now consider a set $F$ of point-guards of optimal size.
For every $p\in F$, we denote by $\face(p)$ the face of~\A that contains~$p$.
For some set $S$ of points, 
we denote by $\face(S) = \{\face(p): p \in S\}$.
Note that $\face(F)$ sees all vertex-witnesses.
It may be that there is one or several face-witnesses~$w$ that require several faces of $\face(F)$ to be completely guarded.
Such a face-witness~$w$, would be counted as unseen face-witness.
Still the one-shot IP could use exactly those face-guards $\face(F)$.
As the one-shot IP computed a set of point guards~$G$,
it implies $|G| < |F| + 1  = \opt + 1$.
Note that $\eps ( |G| + |W|) < 1$, by definition of~$\eps$.  
Thus, if the one-shot IP returns a solution using point-guards only and all face-witnesses are seen, we have~$|G| = \opt$.
This shows that the algorithm is reliable.

\vspace{0.1cm}

Now, we show that if $P$ has \visionstability $\delta$ then the one-shot algorithm will return the optimal solution.
Let us denote by $s\in \Q$ the value of the one-shot IP.

First, we will show that $s\leq \opt$.
As we know that $\opt = \opt(P,-\delta)$ there exists a finite set $G$,
with $|G| = \opt$ such that $G$ is $(-\delta)$-guarding $P$.
The set $G$ is not necessarily among the vertex-candidates.
Using the definitions above, $\representative(\face (g))$  denotes the representative vertex of the face of $\A$ that $g$ is contained in. 
Note that $\face(G)$ is also $(-\delta)$-guarding $P$.
(Recall that the visibility of a face~$f$ is defined as
$\vis_{\gamma}(f) = \bigcup_{p\in f} \vis_{\gamma}(p)$.)
Thus, in particular for every face-witness~$w$, there exists a 
face $f\in \face(G)$, which sees an interior point of~$w$.
By Lemma~\ref{lem:Face-Point-Replacement}, it holds that there
is a vertex $v\in \representative(\face(G))$, such that  $w\subseteq \vis(v)$. 
We apply the lemma with $\gamma  = -\delta$.
An interior point of~$w$ is visible by~$G$, as~$G$ is $(-\delta)$-guarding $P$.
This implies $s\leq \opt$.

The second step is to show that $s\geq \opt$.
Let $G$ be the guards returned by the one-shot IP.
We construct a new set of point-guards $S$ of size at most $s$ that is $\delta$-guarding $P$. As $\opt(P,\delta) = \opt$, it holds that 
$s \geq \opt$.
The guards~$G$ contains potentially face-guards and 
may not guard all face-witnesses.
However all the vertex-witnesses are guarded.
We define $S = \representative(\face(G))$.
In other words, for every guard~$g$, consider the face $g'= \face(g)$ that~$g$ is contained in. 
This is $g$ itself, in case that~$g$ is a face.
In case that $g$ is contained in several faces, pick an arbitrary one.
Then consider the representative of that face~$g'$.
The set~$S$ is the union of all of those representatives. 
We claim that~$S$ is $\delta$-guarding~$P$.
To this end let $w$ be a face-witness.
Then there is at least one guard in $G$ that sees at least one \emph{interior} point of $w$, as $G$ sees all vertex-witnesses.
(Recall that every face-witness contains one vertex-witnesses by definition.)
By Lemma~\ref{lem:Face-Point-Replacement}, it holds that there
is a vertex $v\in S$, 
such that  $w\subseteq \vis_\delta(v)$.
(Here, we apply the lemma with $\gamma = 0$.)
Thus, $\representative(\face(G))$ is $\delta$-guarding $P$.
Thus, $s\geq \opt(P,\delta) = \opt$ implies that 
$s \geq \opt$.

This finishes the proof of Theorem~\ref{thm:OneShotAlgo}. \hfill \qed

\section{Iterative \Visionstable Algorithm}
\label{sec:IterativeAlgo}
The practical bottleneck of the one-shot algorithm is not solving the integer program, but the polynomial-time preprocessing step.
Our iterative algorithm has several \emph{ideas}, which give huge improvements in practice, while maintaining the theoretical 
performance guarantees.

The first idea is to maintain a coarse arrangement~$\A$ and only refine $\A$ where needed in an iterative fashion.
Thus, the name iterative algorithm.
The refinement is by splitting appropriate faces.
In order to identify faces to be split, 
we will need to modify the integer program. 
We will modify the integer program such that we will either find a face to be split, find the optimal solution, or conclude that the polygon has not \visionstability~$\delta$.
In the last case, we can continue with $\delta/2$ as \visionstability.
Splitting faces only where needed gives a huge speed-up, as we show in our experiments in Section~\ref{sub:speedupeffects}.
The next idea is to reduce the witness set.
When we think about witnesses, we noticed that we actually do not necessarily need all witnesses for the integer program. 
We identify a much smaller set of \emph{critical witnesses}. In the integer program formulation, we require only the critical witnesses to be seen. 
At a later stage, we check if the guards that we compute in this way are actually guarding everything.
We are dynamically increasing the size of the critical witness set.
The main advantage is that this does not require us to compute all pairs of visibilities, but only between the guards and the arrangement as well as between the critical witnesses and the arrangement.
See Figure~\ref{fig:iterations-critical} for an illustration.

\begin{figure}[H]
	\centering
	\includegraphics[page=2]{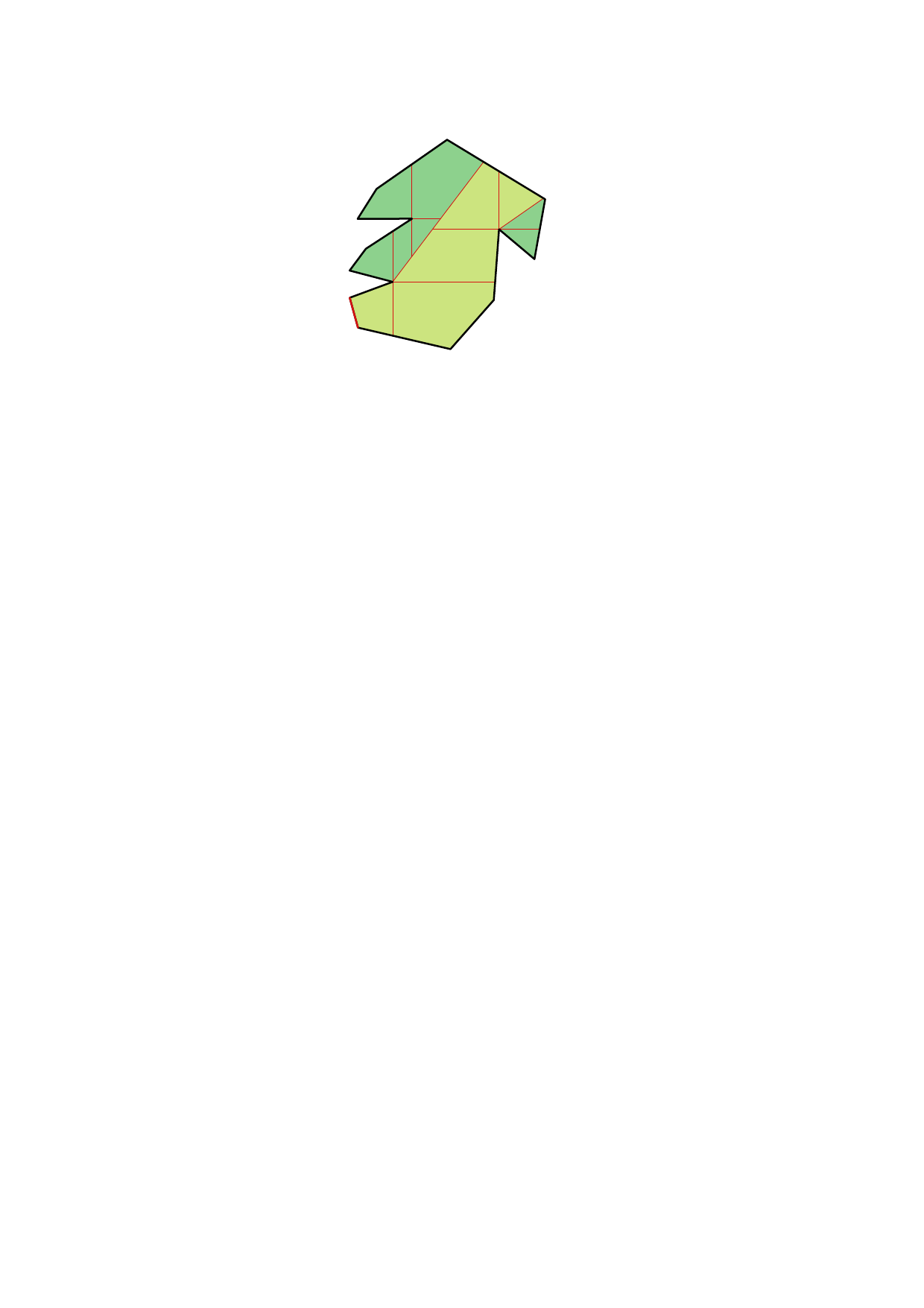}
	\caption{We compute all visibilities between the critical witnesses $W^*$ and the candidates $C$ and then all visibilities between the guards $G$ and the witnesses $W$. {These} are much fewer than the visibilities between all candidates and all witnesses.}
	\label{fig:iterations-critical}
\end{figure} 
Using critical witnesses reduced the number of visibilities that we had to compute largely. 
However, after this improvement, there were still a large number of visibilities that we had to compute, which slowed down the algorithm.
We noticed that a large percentage of these queries were between pairs of vertices or faces that were quite far from each other in the polygon.
This lead to our last idea. 
The idea is to compute a subdivision of the polygon, which we call the {\em \weakVisPolyTree}, see Figure~\ref{fig:WeakVisPolTree}. 
It captures locality of the polygon. 
Using the \weakVisPolyTree, we can prevent a sizeable fraction of the visibility queries to be even asked.

Let us point out that the iterative algorithm maintains the theoretical performance guarantees as it would, in the worst-case, subdivide $\A$ until it is as fine as the arrangement from Section~\ref{sec:OneShotAlgo}.
However, in practice most faces will be split much fewer times.
We will show the following theorem.
\ThmIterative*

In the following {subsections}, we describe the iterative algorithm in more detail. 
Often there are several possible choices.
We always describe a ``normal protocol'', which is the choice that we consider for Theorem~\ref{thm:iterative}.
We also define other protocols, as those give huge practical improvements, albeit we cannot prove theoretical performance for them anymore.

\label{sub:Initialization}
\subsection{Initialization}
In the initialization phase, we are constructing an arrangement~\A, 
which we will refine in subsequent steps.
At first, we add all defining chords~$c$ of the \weakVisPolyTree to~\A.
Furthermore, we shot horizontal and vertical rays from each reflex vertex. We stop the ray as soon as it hits the boundary of the polygon or another edge of~\A. 
Note that there could have been other choices been made. 
In practice, it is easy particularly easy to shoot horizontal and vertical rays.
See Figure~\ref{fig:Saved-Vis}~c).
We shoot the rays in arbitrary order.
There are $O(r)$ defining chords, where $r$ is the number of reflex vertices.  This is because each chord, other than the first one, is incident to at least one reflex vertex and no two chords are incident to the same reflex vertex. Furthermore, we are shooting $O(r)$ rays, in horizontal and vertical directions.
As, we are not introducing any crossings, 
we conclude that \A has a complexity of $O(r)$,
where $r$ is the number of reflex vertices of the polygon.
And all faces of \A are convex.

\begin{figure}[H]
	\centering
	\includegraphics[page=3]{figures/iterative.pdf}
	\caption{a) square split,
		b) angular split,
		c) (reflex) chord split,
		d) extension split,
		e) visibility line split}
	\label{fig:splits}
\end{figure}

\subsection{Splits}
\label{sub:splits}
In this {subsection}, we assume that some other part of the algorithm identified some appropriate faces to be split.
We choose from five different types of splits, called
\emph{square split, angular-split, (reflex) chord split, extension split} and \emph{visibility line split}, see Figure~\ref{fig:splits}. 
We choose randomly among the different type of splits, where the probabilities {were} chosen empirically based on the experiments in Section~\ref{sec:Tests}.

\begin{figure}[H]
	\centering
	\includegraphics[page = 6]{figures/iterative.pdf}
	\caption{a) The two orange guards are guarding the entire polygon and the polygon is \visionstable. Still, even after an arbitrary number of 
		square-splits, there remains a face that is neither from the top nor from
		the right guard completely visible. 
		b) If a face is incident to two reflex vertices, it may guard  an entire polygon that otherwise can only be guarded with two guards. Square splits are efficient to  ensure that the face is only incident to at most one reflex vertex.
		c) Directions that we use for angular splits.}
	\label{fig:square-split}
\end{figure}

\paragraph*{Square splits.}
Square splits divide a face using a horizontal and vertical segment,
such that the height and width of the new faces are halved.
These type of splits have several advantages.
They are really easy to compute. 
They ensure that each new face is only incident to at most one reflex vertex.
They are efficiently reducing the \power of the new faces, in the case that the face was not adjacent to a reflex vertex. 
When a face is adjacent to a reflex vertex, its \power is not reduced by a square split.
This is because one of the newly formed faces will still be adjacent to the same reflex vertex.
Square-splits also have another disadvantage: it seems non-trivial to upper bound the number of square splits needed to ensure that every face has \power at most~$\delta$.
For an example, see Figure~\ref{fig:square-split}~a).

\paragraph*{Angular splits.}
In angular splits, we shoot rays from a reflex vertex, 
as to reduce the \power of the face. 
One fundamental problem with angular splits is that we are not allowed to use trigonometric functions (sinus, cosinus) in the real RAM model of computation.
But those functions are also problematic in practice, as we inherently need to use rounding. 
As a work around, we consider the $2^k$ rays  as indicated in Figure~\ref{fig:square-split}~{c}).
We define the \emph{granularity} of the angular splits as $\lambda = 2^{-k}$.
Recall that $\alpha/2 \leq \sin(\alpha) \leq \alpha$ for $\alpha \in [0,\pi/4]$. 
Thus, the angle $\alpha$ between any two rays is  bounded by 
\[\pi \lambda \leq  \alpha  \leq 4\pi \lambda.\]
We use the granularity to get an estimate of the \power of a face.
We initialize $k$ as $4$ and thus the granularity with $1/16$. Afterwards, we increment $k$ by $1$ when necessary.

The biggest advantage of using angular splits is that we can easily upper bound how many of them we use in the worst-case.
The reason being that angular splits are 
part of the arrangement from the one-shot~algorithm.

\paragraph*{Reflex chord splits.}
We check if there exists a reflex chord that intersects the face. If so, we use it to split the face in two (see Figure~\ref{fig:splits}~c).
If no such reflex chord exists, we try a different type of split.
Reflex chords are very important for the correctness of Lemma~\ref{lem:Face-Point-Replacement}. 
Thus, we need to include them.
It is not very efficient to compute them though,
and we have not encountered a practical situation where missing
them would be relevant. Thus, we choose those splits only with low probability.
However, we do them, if no other type of split is possible.

\paragraph*{Extension splits.}
Given a reflex vertex~$r$, we can consider the rays with apex~$r$ parallel to the two incident edges of~$r$.
These rays are commonly called \emph{extensions}.
If we want to make an extension split, 
we check if there is an extension that properly intersects the given face.
Those splits can be useful, if an optimal guard happens to lie precisely on an extension.

\paragraph*{Visibility line splits.}
The way we split a face when doing a visibility line split depends on the nature of the face~$f$ that  we are splitting. 
If we are using~$f$ as a face-guard, we try to find a witness point~$w$ that is seen by~$f$, but not seen by any of the other candidates in the current solution set. 
Furthermore, while~$f$ can see $w$, we find~$w$ such that~$w$ cannot fully see~$f$. 
Then we intersect $\vis(w)$ with~$f$, and use this intersection to split~$f$. 
If no such $w$ exists, we try a different type of split instead.
When~$f$ is an unseen face-witness, we find the candidate guard~$g$ from the current solution set that can only see part of~$f$. 
Then similarly to before, we split~$f$ using the intersection with $\vis(g)$.
As is discussed in Section~\ref{sub:PractRunning}, the reason that we include this type of split is that in specific cases, this split will massively reduce the amount of iterations necessary to find the solution for a polygon with very low \visionstability.

\paragraph*{Unsplittable faces.}
In order to show performance guarantees, we need to make sure that we are not doing too many splits. For that purpose, we declare faces \emph{unsplittable}, if all of the following conditions are met.
\begin{itemize}
	\item the face is incident to at most one reflex vertex.
	\item no reflex chord or extension split is possible.
	\item the face is not splittable by angular splits of granularity $\lambda$.
	(Here, $\lambda$ is the current granularity.)
\end{itemize}
The last item, implies that the \power of the face is at most $4\pi \lambda$.
In particular, If a face is unsplittable then all conditions of Lemma~\ref{lem:Face-Point-Replacement} and Corollary~\ref{cor:Vision-Enclosure} on the faces are met for $\delta \geq  8\pi \lambda $.
(Recall that Lemma~\ref{lem:Face-Point-Replacement} and Corollary~\ref{cor:Vision-Enclosure} give sufficient conditions for certain visibility regions to be contained in one another.)

\paragraph*{Protocol.}
There are many different ways that we can decide which split to make. Here, we explain two protocols, the normal protocol and the square split protocol. 
In the \emph{normal protocol}, we do square splits only when the face is incident to more than one reflex vertex.
This will happen only very few times and usually only at the beginning of the algorithm.
We choose the angular split with probability $0.6$. Then, with probability $0.2$, we choose the visibility line split. Finally, we choose the other two remaining split types {(chord and extension split)} with equal probability.
In case a split is impossible, we will first try the other splits (except for square split), before deciding that a face is unsplittable.
In the \emph{square split Protocol}, we only use square splits.
We know some situations, where square-splits are not good and will make the algorithm run into an infinite loop, see Figure~\ref{fig:square-split}~a).
However, for {many} polygons, square splits give a significant performance boost. {However, the algorithm is not very robust due to the possibility of infinite loops.}

\subsection{Critical Witnesses}
\label{sub:CriticalWitnesses}
The arrangement~\A has many vertices and faces that are not particularly important to be guarded. 
The idea of the critical witness set is to identify a subset $W^*\subseteq W$ of critical witnesses which are relevant.
At the beginning, 
we initialize  the \emph{critical witness} set~$W^*$ by randomly picking~$10$ percent of vertices and faces, for each weak visibility polygon separately. 
{The precise starting size of the critical witness set is not very important, as long as it is roughly equally distributed over the whole polygon.}
Later, through out the iterations, we add to the critical witnesses set as necessary, by using the guards~$G$ given by the integer program.
See Section~\ref{sub:IntegerProgram} for a detailed description of 
the integer program.
We compute all the vertices and faces that $G$ sees. This also gives us the sets of unseen face-witnesses and vertex-witnesses $U$ which were not marked as critical before. 
We then randomly choose a small constant size subset of vertices and faces from $U$ that we add to $W^*$. 
In the practical implementation of the program, the size of this subset of $U$ that we add to $W^*$ depends on the number of cores of the processor of the system that it is run on. 
Using the number of cores is plausible, because our practical implementation runs the visibility queries in parallel.
Note that if we were to mark all unseen witnesses as critical this would lead to very large numbers of visibility queries, thus defeating the purpose of using the critical witness set. 
It would also increase the size of all subsequent integer programs that we need to solve.
It is important to find a good balance between adding too few critical witnesses and adding too many. 

Every time we update the critical witness set, we re-run the IP to check if we can find a better solution given the critical witnesses. 
We keep adding to the critical witness set as long as there are unseen witnesses left that are not marked as critical. 
This means that the witness set will keep {growing}, which makes sure that the algorithm is deterministic.
We then check if we need to update the new critical witness set again using this new-found guard set. We keep doing these \emph{critical cycles}, 
until we find a guard set that can see the entire polygon. 
Only then we split the faces and continue with the next iteration. 

A face can only be removed from the critical witness set, if it is split.
For every critical witness face, we also add a critical vertex to the interior of that face.
Critical witness vertices are only removed from the critical witness set,
if they are interior to a face that is removed.

To summarize, we do not need to compute all the visibilities 
between all the candidates and all the witnesses. 
Instead, we compute all visibilities between the critical witnesses $W^*$ and candidates $C$ and then all visibilities between the guards $G$ and the witnesses $W$.
As there are much fewer guards than candidates and also much fewer critical witnesses $W^*$ than witnesses overall, this saves a lot of running time in practice.

\paragraph*{Protocol.}
Although the use of critical witnesses gives a big improvement in practice, we cannot show theoretical performance guarantees using critical witnesses.
For later reference, we say that we use the \emph{critical witness} protocol in case that we use critical witnesses.
Otherwise, we do not use critical witnesses.
We also define the \emph{delayed critical witness protocol}.
In this protocol, we use critical witnesses and we add all faces, which have a{n} \power larger than the granularity $\sqrt{\lambda}$.
In this way, we balance theoretical and practical performance.

\subsection{Building the integer program.}
\label{sub:IntegerProgram}
In order to ensure that we keep getting closer to an optimal solution, we need to use in total at most two integer programs per iteration. 
The \emph{normal IP} and the \emph{big IP}.
We want one of the following three scenarios to happen.
\begin{enumerate}
	\item We find an optimal solution and we can confirm that it is optimal.
	\item We find at least one face that we can split and thus make progress.
	\item We find that our current granularity $\lambda$ is too high and we can update the granularity. 
\end{enumerate}
Recall that the granularity is an estimate for the \visionstability.
In case that we use critical witnesses, we are also content to find unseen witnesses that we can add to the critical witness set.

\paragraph*{Normal IP.}
The normal IP is the same
as the one-shot IP.
In case that we use critical witnesses the normal IP only adds constraints and variables for the critical witnesses.

\paragraph*{Big IP.}
Before we decrease the granularity, we want to make sure that
we have not missed a face that we may want to split.
We construct the big IP with the purpose of finding a solution involving at least one splittable face.
Let us denote by $s\in \Z$ the value of the normal IP rounded down, which is the number of used guards.
Given a set $S$ of faces and vertices, we denote the  subset of splittable faces by~$\splittable(S)$.
We define the objective function as 
\[f = \sum_{x\in \splittable(W\cup C)}  \const{x}.\]
We aim to maximize the number of splittable faces that are used.
They appear either as guards or unseen witness faces.
We add several constraints.
Every variable is either~$0$ or~$1$.
We require that the total number of candidates equals~$s$.
\[ \sum_{c \in C} \const{c}  = s\]
We also require all vertex-witnesses to be seen.
This leads to the following constraint, 
for every $w\in \vertex(W)$.
\[\sum_{c\in \VIS(w)} \const{c}\geq 1 \]
Recall that $\VIS(w) \subseteq C$ denotes the set of candidates that see $w$ completely.
As we want to identify exactly those witness faces , that we can split, we add the following constraint, for every splittable witness-face~$w\in \splittable(W)$. The $\epsilon$ here is the same as in the Normal IP.
\[ 1 - \left( \eps \sum_{c\in \VIS(w) } \const{c} \right)  \geq \const{w}\]
This constraint ensures that
\[   \sum_{c\in \VIS(w) } \const{c} \geq 1 \quad  \Rightarrow  \quad \const{w} = 0 \]
In other words, if there is a guard that sees $w$, then the corresponding variable $\const{w}$ must be set to~$0$.

The solution to the normal IP and to the big IP yields
naturally a set of guards ($G = \{ c\in C : \const{c} = 1 $)
and a set of unseen witnesses ($U = \{ w\in W : \const{w} = 1 $).
Reversely, if we have given a set of guards and unseen witnesses, this gives a feasible solution to the normal IP and to the big IP.

\vspace{0.1cm}

\paragraph*{IP Protocol.}
Let us first describe the \emph{normal IP protocol}.
We will always use the normal IP. 
Let us denote by $G$ the set of guards returned by the normal IP.
(If there are faces or vertices that are not seen 
by the guards~$G$,
we may add some of them to the critical-witness set, see Section~\ref{sub:CriticalWitnesses}.)
If~$G$ contains only vertex-guards 
and we checked that $G$ sees the entire polygon, then the algorithm reports~$G$ as the optimal solution.
If either~$G$ or the unseen witnesses contain at least one splittable face, we split that face and continue with the next iteration.
If none of the above happens, we run the big IP.
Again, we check for the same set of events.
In case that again none of the events above happen,
the algorithm 
updates our estimate granularity~$\lambda$.

In practice, it seems that using the big IP rather slows down the running time. Although we may subdivide the ``wrong faces'' too many times, it seems not too often.
Even worse, running the big IP seems to only prevent us from progressing faster.
We define the \emph{simple IP protocol} as follows.
Only run the normal IP and split all the faces that are selected as unseen face-witnesses or face-guards, 
according to Section~\ref{sub:splits}.
Repeat with the new arrangement.

\paragraph*{Granularity Update Protocol.}
We {only consider one protocol } for the granularity update.
In the \emph{normal granularity update protocol}, we replace $\lambda$ by $\lambda/2$. 
{We do this whenever there is an unsplittable face.}


\begin{figure}[H]
	\centering
	\includegraphics[page=5]{figures/iterative}
	\caption{a) \& b) The points $p$ and $q$ cannot see each other. Otherwise, $q$ would belong to the node $n = \vis(c)$.
		c) The arrangement $\A$ is initialized by shooting horizontal rays from the reflex vertices.}
	\label{fig:Saved-Vis}
\end{figure}

\subsection{\WeakVisPolyTree}
\label{sub:WeakVisTree}
Consider the polygon in Figure~\ref{fig:WeakVisPolTree}.
Despite the fact that the polygon has many vertices, it seems
to have locally low complexity.
The idea of the weak visibility tree is to exploit this 
low local complexity in order to reduce the number of visibility queries to be answered.

To build the \weakVisPolyTree $T$ of a simple polygon $P$, we start with an arbitrary edge~$e$ on the boundary of $P$.
We compute the weak visibility polygon $\vis(e)$ of $e$, which is the root of~$T$.
This forms the \weakVisPolyTree $T_1$.
We use the algorithm by Hengeveld, Miltzow and Staals~\cite{ConvexExpansion} to compute the weak visibility polygon.
In a first iteration of the paper we used the algorithm by Abrahamsen~\cite{abrahamsen2013constant}, but we soon found out that this was too slow for practical use.
We construct $T_{i+1}$ from $T_i$ as follows.
Take any edge~$e'$ of some $\vis(e)$, which is not part of the boundary of $P$, we compute the weak visibility polygon~$\vis(e')$ with respect to the polygon $P\setminus T_i$. Those weak visibility polygons are the children of $\vis(e)$.
We continue inductively to compute the children of every weak visibility polygon. Note that every node of $T$ is a weak visibility polygon~$W$ of some defining chord~$c$. Obviously, $c$ splits $P$ into two polygons $Q$ and $Q'$, as $c$ is a chord. The weak visibility polygon~$W$ is completely contained in either~$Q$ or~$Q'$. To be precise, we consider each node of~$T$ to be closed and contain its boundary.

\begin{lemma}[Saved Visibilities]
	\label{lem:saved-visibilities}
	Let $p,q$ be in the interior of two different nodes $\node(p),\node(q)$ of $T$. 
	If $\node(p)$ and $\node(q)$ are neither siblings nor in a parent-child relationship then $p,q$ cannot see each other.
\end{lemma}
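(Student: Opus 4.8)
The plan is to argue by analyzing how the weak visibility tree $T$ is constructed, and to track which side of each defining chord a node lies on. First I would set up notation: let $\node(p)$ be a node at depth $d_p$ with ancestor chain $N_0 \supseteq \ldots$ (where $N_0 = \vis(e)$ is the root), and similarly for $\node(q)$. Since $\node(p)$ and $\node(q)$ are not in an ancestor–descendant relationship, their ancestor chains split at some node $N$, the lowest common ancestor. Let $N_p$ and $N_q$ be the two distinct children of $N$ that are ancestors of (or equal to) $\node(p)$ and $\node(q)$ respectively. Since $\node(p)$ and $\node(q)$ are not siblings, at least one of these containments is proper; but the key case to handle is when $N_p$ and $N_q$ are themselves the relevant children.

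The core geometric observation is the recursive structure: when we form the children of a node $\vis(c)$, we restrict to the polygon $P \setminus \vis(c)$, and each child is a weak visibility polygon of an edge $e'$ of $\vis(c)$ that is a chord of $P$. I would prove, by induction on depth, the following invariant: every descendant of the child $\vis(e')$ of $\vis(c)$ is contained in the closed subpolygon of $P$ cut off by the chord $e'$ on the side \emph{away} from $\vis(c)$ (i.e. the component of $P \setminus e'$ not containing the interior of $\vis(c)$). Granting this invariant, consider the LCA $N = \vis(c)$ with distinct children $\vis(e_p')$ and $\vis(e_q')$, where $e_p', e_q'$ are two \emph{distinct} non-boundary edges of $\vis(c)$. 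Then $p$ lies (in the interior) in the subpolygon $Q_p$ cut off by $e_p'$ away from $\vis(c)$, and $q$ lies in $Q_q$ cut off by $e_q'$ away from $\vis(c)$. Since $e_p' \neq e_q'$ are distinct chords on the boundary of the single polygon $\vis(c)$, the regions $Q_p$ and $Q_q$ have disjoint interiors (they are separated by $\vis(c)$, or at least by the pocket structure). Any segment from $p$ to $q$ would have to pass through $\vis(c)$ and cross both chords $e_p'$ and $e_q'$; but crossing $e_p'$ to leave $Q_p$ and crossing $e_q'$ to enter $Q_q$, together with the fact that $\seg(p,q) \subseteq P$, forces the segment to re-enter the pocket $\vis(e_p')$ — more carefully, one shows $\seg(p,q)$ must contain a point of $\vis(e_p')$ that then witnesses $q \in \vis(e_p')$ (via the edge $e_p'$), contradicting that $q$ lies in a different subtree. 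This is exactly the picture in Figure~\ref{fig:Saved-Vis}~a),~b): if $p$ and $q$ saw each other, then $q$ would have been absorbed into the node $\node(p)$ or one of its ancestors.

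I expect the main obstacle to be making the ``separated by the chord'' argument fully rigorous, in particular handling the closed-vs-interior subtleties and the case distinction between siblings and non-siblings cleanly. Specifically: (i) confirming the inductive invariant that descendants stay on the far side of the defining chord, which needs the definition of weak visibility polygon of an edge relative to $P \setminus \vis(c)$ and the fact that such a polygon does not ``leak back'' across its defining edge; and (ii) the combinatorial geometry showing that two \emph{distinct} non-boundary edges of a single weak visibility polygon $\vis(c)$ cut off subpolygons with disjoint interiors, and that a straight segment cannot travel between them while staying in $P$ without being seen from one of those edges. Once (i) and (ii) are in place, the contradiction is immediate: $\seg(p,q) \subseteq P$ crossing $e_p'$ means some point of $e_p'$ sees $q$, so $q \in \vis(e_p')$, but $\node(q)$ lies outside $\vis(e_p')$'s subtree — contradiction. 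The sibling case is the degenerate sub-case where $\node(p)$ and $\node(q)$ themselves are the two children, and the excluded hypothesis (not siblings) is precisely what rules out the one configuration where $p$ and $q$ \emph{could} legitimately see each other across the shared boundary of $\vis(c)$.
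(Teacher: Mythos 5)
Your overall mechanism is the right one and matches the paper's (the paper simply considers $\short(p,q)$, notes it must cross the defining chord $c$ of some intermediate node $n$, and concludes that if $p$ saw $q$ then $q$ would lie in $n=\vis(c)$, contradicting $q\in\interior(\node(q))$ and the disjointness of node interiors). However, your write-up has two concrete problems. First, you assume $\node(p)$ and $\node(q)$ are \emph{not} in an ancestor--descendant relationship and build everything on a lowest common ancestor with two distinct children; the lemma only excludes the parent--child case, so the configuration where one node is a proper ancestor of the other at tree distance at least two (grandparent and beyond) is allowed and is simply not covered by your argument. It is handled by the same idea, but with the chord of an intermediate node on the path (e.g.\ the defining chord of the parent of $\node(q)$), and you never address it.

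Second, and more seriously, your punchline points the visibility implication in the wrong direction. You argue that because $\seg(p,q)$ crosses $e_p'$, a point of $e_p'$ sees $q$ and hence $q\in\vis(e_p')$. But $\vis(e_p')$ is by construction the weak visibility polygon of $e_p'$ \emph{with respect to the subpolygon cut off on $p$'s side} (the region $Q_p$), so it is entirely contained in $Q_p$; a point $q$ on the near side of $e_p'$ can never lie in it, no matter what sees it. Indeed, if your implication were valid it would equally show that two points in sibling nodes cannot see each other, which the lemma deliberately does not claim. The correct application is on $q$'s side: the segment crosses $e_q'$ at some point $x$, the subsegment $\seg(x,q)$ cannot re-cross the chord and stays in $Q_q$, so $q$ is weakly visible from $e_q'$ inside $Q_q$ and hence $q\in\vis(e_q')=N_q$; this contradicts $q\in\interior(\node(q))$ whenever $\node(q)\neq N_q$, and in the remaining case $\node(q)=N_q$ the hypothesis (not siblings) forces $\node(p)\neq N_p$, so the symmetric argument with $p$ and $e_p'$ gives the contradiction. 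With that case split and direction fixed (and the ancestor case added), your separation invariant (i) does the rest, and the proof coincides with the paper's.
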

\begin{proof}
	For an illustration of this proof consider Figure~\ref{fig:Saved-Vis}~a)~\&~b).
	Consider the shortest path $s = \short(p,q)$ from $p$ to $q$.
	We will show that $s$ is not a line-segment and thus
	conclude that $p$ and $q$ cannot see each other.
	Note that $s$ will traverse several nodes of $T$.
	In particular, it will cross a defining chord $c$ and its 
	corresponding node $n = \vis(c)$.
	Without loss of generality, {$n$ is an ancestor of $q$}.
	Now, if $s$ would be a line-segment, than $q$ would be in $n = \vis(c)$.
	But this is impossible as $q$ is in the interior of $\node(q)$,
	as stated in the assumption of the lemma.
\end{proof}

Thus, whenever, we want to compute the set $\vis(p)$, we only need to consider the parent children and siblings of $\node(p)$.
This is true whether $p$ is a face or a point.
We build a shortest path map for every weak visibility polygon~\cite{guibas1987linear}.
{
These shortest paths can be later used in order to answer face visibility queries.
See Guibas et al.~\cite{guibas1987linear} for details on the shortest path map and how to use it for visibility queries between edges and faces.}
Note that, while Guibas et al.~\cite{guibas1987linear} gives better theoretical performance guarantees, we use a new algorithm
by Hengeveld, Miltzow and Staals~\cite{ConvexExpansion}, 
as it works much better in practice.
{Note that the \weakVisPolyTree approach only works for simple polygons. When~$P$ has holes, the assumptions about children and siblings may not hold.}

\subsection{Correctness}
\label{sub:iterative-correct}
In this paragraph, we prove Theorem~\ref{thm:iterative}.
To be precise, we use the normal split protocol, the normal IP protocol, the normal granularity update protocol, and no critical witnesses.

The proof is divided into three parts.
First, we show that the algorithm is reliable (even if the input polygon is not \visionstable, the reported result is correct).
Second, we show that the algorithm makes progress in each step.
Third, we will show an upper bound on the total number of iterations.

\paragraph*{Reliability.}
We prove reliability in the same way that we proved it for the one-shot algorithm. For the benefit of the reader, we repeat the argument.
We show that if all guards $G$ found by the integer program are point-guards and all face-witnesses are seen, then the algorithm indeed returned the optimal solution.
Here, we are not assuming that $P$ is \visionstable.
As $G$ guards the entire polygon it holds that
$|G|\geq \opt$.
Let us now consider a set $F$ of point-guards of optimal size.
For every $p\in F$, we denote by $\face(p)$ the face of the arrangement~\A that contains~$p$.
For some finite set $S \subseteq P$, we denote by $\face(S) = \{\face(p): p \in S\}$.
Note that $\face(F)$ sees all vertex-witnesses.
It may be that there is one or several face-witnesses~$w$ that require several faces of $\face(F)$ to be completely guarded.
Such a face-witness~$w$, would be counted as unseen face-witness.
Still the normal IP could use exactly those face-guards $\face(F)$.
As the normal IP computed a set of point guards~$G$,
it implies $|G| < \opt + 1$.
Note that $\eps ( |G| + |W|) < 1$, by definition of~$\eps$.  
Thus, if the normal IP returns a solution using point-guards 
only and all face-witnesses are seen, we have~$|G| = \opt$.
This shows that the algorithm is reliable.

\paragraph*{Progress per iteration.}
In this paragraph, we will show that the iterative algorithm makes progress
in every iteration.
To be specific, we will show that after every iteration one of the following events will happen.
\begin{enumerate}
	\item \label{itm:opt}
	We find an optimal solution and we can confirm that it is optimal.
	\item \label{itm:split}
	At least one face is splittable.
	\item \label{itm:update-delta}
	We find that $\delta < 8\pi \lambda$.
\end{enumerate}
In case~\ref{itm:opt}, we are done, as the algorithm is reliable and the algorithm terminates.
In case~\ref{itm:split}, we split those faces according to the description given in Section~\ref{sub:splits}.
In the last case, we have the granularity $\lambda$.
It remains to show $\delta < 8\pi \lambda$.
We show the contraposition.
\begin{claim}\label{clm:HalfLambda}
	We denote by $\delta$ the \visionstability of the underlying polygon~$P$.
	Let $\delta \geq 8\pi \lambda$ and neither the big IP, nor the normal IP return a splittable face.
	Then the normal IP will return the optimal solution.
\end{claim}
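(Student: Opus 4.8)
The plan is to mirror the "$s \geq \opt$" direction from the proof of Theorem~\ref{thm:OneShotAlgo}, but working with the coarse arrangement~$\A$ of the current iteration rather than the fine one-shot arrangement. First I would record the structural consequence of the hypothesis that neither the normal IP nor the big IP returns a splittable face: this means that in the optimal LP/IP solution found by the normal IP, every face that appears either as a face-guard or as an unseen face-witness is \emph{unsplittable}. By the "Unsplittable faces" discussion, an unsplittable face is incident to at most one reflex vertex, is not properly intersected by a reflex chord, admits no extension split, and has power at most $4\pi\lambda \leq \delta/2$ (using $\delta \geq 8\pi\lambda$). Hence all such faces satisfy the hypotheses of Lemma~\ref{lem:Face-Point-Replacement} and Corollary~\ref{cor:Vision-Enclosure}.

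Next I would argue $s \leq \opt$. As in the one-shot proof, since $\opt = \opt(P,-\delta)$, there is a $(-\delta)$-guarding set $G_0$ with $|G_0| = \opt$; then $\face(G_0)$ is also $(-\delta)$-guarding, and by Lemma~\ref{lem:Face-Point-Replacement} applied with $\gamma = -\delta$, the representatives $\representative(\face(G_0))$ see every face-witness $w$ (since some face in $\face(G_0)$ sees an interior point of $w$). The subtlety here compared to the one-shot algorithm is that the faces of the coarse $\A$ need not all be unsplittable, so I would instead first refine conceptually: the set $\face(G_0)$ together with the faces it must guard are exactly the faces we \emph{would} want to split if they were splittable; since the big IP failed to find a splittable face, the relevant faces are already unsplittable, so the lemma applies. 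This gives a feasible integer solution of value $\opt$, hence $s \leq \opt$.

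For the reverse inequality $s \geq \opt$, let $G$ be the guards returned by the normal IP and set $S = \representative(\face(G))$. Every face-guard in $G$ is unsplittable by hypothesis, so $\representative$ is well-defined and the power/reflex-chord conditions hold. For any face-witness $w$: either $w$ is seen by $G$, in which case some guard sees an interior point of $w$ (every face-witness contains a vertex-witness, which is a constraint of the normal IP); or $w$ is unseen, hence unsplittable, and again some vertex-witness inside it is seen. Either way Lemma~\ref{lem:Face-Point-Replacement} with $\gamma = 0$ yields a representative $v \in S$ with $w \subseteq \vis_\delta(v)$. Thus $S$ is $\delta$-guarding $P$ with $|S| \leq s$, so $s \geq \opt(P,\delta) = \opt$ by vision-stability. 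Combining, $s = \opt$; and since the big IP could not exhibit a splittable face-guard, there is an integer optimum of the normal IP using only vertex-guards that sees all face-witnesses, which by reliability the algorithm reports as optimal.

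The main obstacle I anticipate is bookkeeping the interaction between "splittable" and "unsplittable" faces cleanly: the coarse arrangement contains splittable faces, and I must be careful that the optimal solution witnessing $s \leq \opt$ can be taken to use only unsplittable faces — this is precisely what the failure of the big IP buys us, but spelling out why the big IP's objective (maximizing used splittable faces) returning $0$ implies \emph{every} optimal-size solution avoids splittable faces (rather than just one) requires a short argument, likely via the structure of the big IP's feasible region and the fact that $\face(G_0)$ gives a feasible point. A secondary technical point is handling the degenerate case in Lemma~\ref{lem:Face-Point-Replacement} (a face seeing only a reflex vertex); since unsplittable faces have at most one reflex vertex on their boundary, the representative is that vertex and Lemma~\ref{lem:reflex-sees-all} covers it.
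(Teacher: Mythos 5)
Your overall plan coincides with the paper's: prove $s\geq\opt$ and $s\leq\opt$ separately, using that $\delta\geq 8\pi\lambda$ makes every \emph{unsplittable} face satisfy the hypotheses of Lemma~\ref{lem:Face-Point-Replacement} and Corollary~\ref{cor:Vision-Enclosure}, and that the failure of the big IP forces the relevant faces to be unsplittable. However, your execution of the $s\geq\opt$ direction has a genuine gap. You set $S=\representative(\face(G))$, i.e.\ you also move every \emph{vertex}-guard to the representative of a face containing it, and you then invoke Lemma~\ref{lem:Face-Point-Replacement} for \emph{every} face-witness $w$ (``either way \ldots''). In the coarse arrangement of the iterative algorithm this breaks twice. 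First, a face containing a vertex-guard need not be unsplittable (the hypothesis only controls the \emph{returned} face-guards and unseen face-witnesses), so its power can exceed $\delta/2$ and it can be crossed by reflex chords; neither the Lemma nor the Corollary then licenses the replacement, and the representative of such a face may genuinely fail to $\delta$-see what the vertex-guard sees. Second, Lemma~\ref{lem:Face-Point-Replacement} also imposes $\power(w)\leq\delta/2$ and the no-reflex-chord condition on the \emph{witness} face $w$; this holds only for the unseen (hence unsplittable) witnesses, not for witnesses that are seen but splittable, so the uniform appeal to the Lemma is not valid in the seen case.

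The paper's proof avoids both problems: it keeps vertex-guards fixed ($\representative(v)=v$) and replaces only the unsplittable face-guards, and it treats the witnesses in two groups --- those already seen by the IP solution are handled by Corollary~\ref{cor:Vision-Enclosure} applied to the \emph{guard} face only (no condition on $w$ is needed, which is what yields $U_1\subseteq U_0$), while the unseen witnesses are unsplittable and are handled by Lemma~\ref{lem:Face-Point-Replacement}. Your $s\leq\opt$ direction has the analogous weakness for seen-but-splittable witness faces, but there it is repairable, since feasibility of the normal IP only requires the vertex-witnesses to be seen, and that follows from the Corollary applied to the faces of $\face(G_0)$, which are unsplittable by the big-IP argument; your observation that this last step (the big IP maximizing splittable faces returning none forces every feasible solution to avoid them) needs a short explicit argument is fair, and the paper is equally terse there, including the issue that the big IP's cardinality constraint requires padding $\face(G_0)$ up to $s$ guards.
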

\begin{proof}[of Claim~\ref{clm:HalfLambda}]
	We denote by $s$ the value of the normal IP. We will show that $s = \opt$.
	Before we go into details of the proof, note that we defined unsplittable faces exactly so that we  can apply Lemma~\ref{lem:Face-Point-Replacement} and Corollary~\ref{cor:Vision-Enclosure}.
	
	The first step is to show that $s\geq \opt$.
	Let $G_0$ be the guards returned by the normal IP.
	Let us denote by $U_0\subseteq W$ the set of faces that are not seen by $G_0$. 
	By assumption of the lemma, all faces in $U_0$ are unsplittable.
	We construct a new set of point-guards $G_1$ of size at most $s$ that is $\delta$-guarding $P$. 
	As $s \geq |G_0| \geq |G_1| \geq \opt(P,\delta) = \opt$, it holds that 
	$s \geq \opt$.
	The guard set~$G_0$ contains potentially face-guards.
	Note that all face-guards of~$G_0$ are unsplittable and 
	thus their \power is at most $\delta/2$.
	We define $G_1 = \representative(G_0)$.
	That is, for a vertex $v \in G_0$, we define $\representative(v) = v$.
	For a face $f\in G_0$ we define $\representative(f)$ as in Section~\ref{sec:Visstability}.
	Let us denote by $U_1\subseteq W$ the set of unsplittable faces that are not $\delta$-guarded by $G_1$. 
	As all face-guards of $G_0$ are unsplittable, we can apply Corollary~\ref{cor:Vision-Enclosure}. This implies that
	$U_1\subseteq U_0$ and thus contains only unsplittable faces.
	To this end let~$w\in U_1$ be any face-witness. 
	We will show that $w$ is actually $\delta$-guarded by $G_1$.
	By construction, there is an interior vertex $v \in \interior(w)\cap W$ {that} is guarded by some guard~$g_0 \in G_0$.
	Let $g_1 = \representative(g_0)$.
	Then, the \power of $w$ and $g_0$ are at most $\delta/2$.
	Furthermore, neither $w$ nor $g_0$ is properly intersected by a reflex chord. 
	Both faces are incident to at most one reflex vertex.
	Thus, we can apply Lemma~\ref{lem:Face-Point-Replacement}.
	This implies that $w \subseteq \vis_\delta(g_1)$.
	
	To summarize,~$G_1$ is finite vertex set, which is $\delta$-guarding~$P$. Thus, $s \geq |G_1|\geq \opt(P,\delta) = \opt$.
	
	\vspace{0.2cm}
	
	As a second step, we show that $s\leq \opt$.
	As we know that $\opt = \opt(P,-\delta)$ there exists a set $G_0$
	of point-guards with $|G_0| = \opt$ such that $G_0$ is $(-\delta)$-guarding $P$.
	The set $G_0$ is not necessarily among the vertex-candidates.
	Note that the set of faces $G_1 = \face(G_0)$ containing $G_0$ is also $(-\delta)$-guarding~$P$.
	(Recall that the visibility of a face~$f$ is defined as
	$\vis_{\gamma}(f) = \bigcup_{p\in f} \vis_{\gamma}(p)$.)
	We denote by $U_1\subseteq W$ the set of witnesses that are not seen by any guard in $G_1$.
	Note that, as $G_0$ is $(-\delta)$-guarding $P$ it holds that $U_1$ contains only face-witnesses.
	Thus, $(G_1,U_1)$ defines a feasible solution to the normal IP and the big IP. 
	This implies that all faces of $G_1$ and $U_1$ must be unsplittable. 
	Otherwise, the big IP would have identified those splittable faces.
	Now, let $G_2 = \representative(G_1)$ be the set of representatives of $G_1$.
	We will show that $G_2$ is guarding $P$.
	Let $U_2\subset W$ be the set of unseen face-witnesses of $G_2$.
	By Corollary~\ref{cor:Vision-Enclosure}, it holds that $U_2 \subseteq U_1$. This in turn implies that $U_2$ contains only unsplittable faces.
	Let $w \in U_2$. We show that there is a guard in $G_2$ that actually sees $U_2$.
	By assumption, there is an interior vertex $v\in\interior(w)$.
	And there is a guard $g_1 \in G_1$ that sees $v$.
	Let $g_2 = \representative(g_1) \in G_2$ be the representative of $g_1$.
	By the above, it can be checked that all conditions of Lemma~\ref{lem:Face-Point-Replacement} are met. 
	This implies that $w \subseteq \vis(g_2)$.
	Thus, $G_2$ guards the entire polygon~$P$.
	In particular, $|G_2| \leq \opt$ and $G_2$ defines a feasible solution for the normal IP. Thus, $s\leq \opt$.
\end{proof}

\paragraph*{Total number of iterations.}
The idea is to upper bound the number of iterations with the number of faces in the final arrangement~$\A^*$ of the iterative algorithm.
By the previous paragraph, we split at least one face in every step of the algorithm, except if we lower the  granularity. 
Every edge in $\A^*$ comes from one of the following sources:
\begin{enumerate}
	\item \label{itm:ExtensionSegemnts}
	It is part of an extension segment, see Figure~\ref{fig:splits}~d).
	\item \label{itm:WeakSegemnts}
	It is an edge that comes from the boundary of a weak visibility polygon.
	\item \label{itm:SquareSegemnts}
	It comes from a square split in order to prevent that a
	face contains more than one reflex vertex, see Figure~\ref{fig:square-split}~b).
	\item \label{itm:AngularSegemnts}
	It comes from an angular split, see Figure~\ref{fig:splits}~b).
	\item It comes from a reflex chord split, see Figure~\ref{fig:splits}~c).
\end{enumerate}
By definition there are $O(r)$ segments of type~\ref{itm:ExtensionSegemnts},~\ref{itm:WeakSegemnts} and~\ref{itm:SquareSegemnts}, here $r$ denotes the total number of reflex vertices of~$P$.
Let $\lambda^*$ be the final value of the granularity.
Then there are at most $r/\lambda^*$ segments of type~\ref{itm:AngularSegemnts}.
Here we consider the maximal segment starting at the reflex vertex and not just the edge that split the specific face.
By the the paragraph above, we know that this implies that
the \visionstability~$\delta$ is at most $8\pi \lambda$.
In other words, there are at most~$O
(\frac{r}{\delta})$ segments of type~\ref{itm:AngularSegemnts}.
By definition there are at most $O(r^2)$ reflex chords.
As any two segments intersect at most once, we have at most
\[O\left(\left( r + r/\delta + r^2 \right)^2\right) = O(r^4/\delta^2 ) = \left(\frac{n}{\delta}\right)^{O(1)}\]
vertices. As the arrangement is planar, 
this also describes the total number of edges and faces asymptotically.

Each iteration solves at most two IPs and takes polynomial time. This finishes the proof of Theorem~\ref{thm:iterative}. \hfill \qedhere

\section{Chord-Visibility Width}
\label{sec:ChordWidth}
In this section, we define the notion of \emph{\chordwidth}, 
and give a justification 
why we believe it to be an interesting parameter in practice. 
Then, we describe a fixed parameter time algorithm for the art gallery problem with respect to the \chordwidth.
We denote with \emph{vertex-guarding} a variant of the art gallery problem where the guards are restricted to lie on the vertices of the input polygon.

We will show the following two theorems.
\begin{theorem}
	\label{thm:vertex-guardingFPT}
	Let $P$ be a simple polygon.
	Then there is an FPT algorithm for vertex-guarding $P$
	with respect to the \chordwidth.
\end{theorem}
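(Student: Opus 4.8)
The plan is to reduce vertex-guarding to a structured set-cover problem carried by the \weakVisPolyTree~$T$ of Section~\ref{sub:WeakVisTree}, and to solve it by a bottom-up dynamic program on~$T$. Since the guards are confined to the $n$ vertices of~$P$, the problem is purely combinatorial, so no \visionstability assumption enters (this is what makes the statement cleaner than Theorem~\ref{thm:FPT-chordwidth}). To make the ``see all of $P$'' requirement finite, I would, inside the region of each node~$N$ of~$T$, form the arrangement of all \emph{visibility windows} --- the segments obtained by extending inside $P$ the ray from a vertex through a reflex vertex --- restricted to the vertices and reflex vertices that, by Lemma~\ref{lem:saved-visibilities}, can possibly see into $N$ (those of $N$, its parent, its children, and its siblings), and keep one representative point per cell. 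Since within $N$'s region this arrangement contains every window of every guard that sees part of $N$, each cell lies entirely inside or entirely outside each $\vis(v)$; hence a vertex set sees all of $P$ exactly when it sees every cell-representative, and the resulting witness set $W$ has polynomial size. It remains to solve minimum set cover of $W$ by the regions $\vis(v)$ over vertices~$v$ of~$P$.

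For the dynamic program, root~$T$; for a node~$t$ let $c_t$ be its defining chord, let $R_t$ be the side of $c_t$ containing the subtree of $t$, and let $A_t$ consist of the polygon vertices and witnesses of $W$ lying in $\vis_P(c_t)$. Two facts drive everything: $|A_t| = \poly(\cw(P))$ (a chord sees at most $\cw(P)$ vertices by definition, and, combined with the locality of Lemma~\ref{lem:saved-visibilities}, the part of a node seen from one chord carries only $\poly(\cw(P))$ witness cells); and any visibility pair whose two endpoints lie on opposite sides of $c_t$ has \emph{both} endpoints in $A_t$, since the connecting segment crosses $c_t$. Consequently every vertex and witness of $R_t$ not in $\vis_P(c_t)$ can only be covered by guards inside $R_t$, and only guards in $A_t$ can help cover anything outside $R_t$. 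The table $\mathrm{DP}[t]$ is indexed by a pair of subsets of $A_t$ --- the elements of $A_t$ already covered by subtree guards, and the subtree guards that themselves lie in $A_t$ --- and stores the minimum number of vertex-guards placeable inside $R_t$ that realize this interface while covering everything in $R_t$ outside $\vis_P(c_t)$. This gives $2^{\poly(\cw(P))}$ states per node.

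The delicate part is the combine step at an internal node~$t$ with children $t_1,\dots,t_k$, where the usual independence of subtrees fails: Lemma~\ref{lem:saved-visibilities} explicitly allows \emph{siblings} to see each other, so a guard placed in $R_{t_i}$ may cover vertices or witnesses of $R_{t_j}$ or of $t$'s own region. As $k$ may be large, I cannot take a $k$-fold product of child tables; instead I would process $t_1,\dots,t_k$ left to right, maintaining an accumulator that records which interface elements are covered so far and which already-placed guards can still ``reach forward'' through $t$'s region into a later sibling or up through $c_t$. The key point to verify is that everything touched by such a forward reach lies in $A_{t_i}\cap A_{t_j}$ or in $A_t$ --- again because the relevant segment crosses the corresponding chords --- so the accumulator ranges over subsets of $\poly(\cw(P))$-size sets and the whole computation stays $2^{\poly(\cw(P))}\cdot n^{O(1)}$. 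Checking this confinement, and ensuring that the coverage of $t$'s own region and of the $A_{t_i}$-elements is counted exactly once, is the main bookkeeping burden and the only genuinely subtle point; the rest is standard tree-DP mechanics, after which the optimum is read off the root table.
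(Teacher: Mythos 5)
Your proposal is correct and follows essentially the same route as the paper: decompose along the \weakVisPolyTree, discretize the witnesses via the arrangement of vertex--reflex chords (so each cell is seen entirely or not at all by each vertex of $P$), bound the local complexity by the \chordwidth using Lemmas~\ref{lem:saved-visibilities}, \ref{lem:WeakPolComplexity} and~\ref{lem:WindowComplexity}, and run a bottom-up dynamic program in time $2^{\poly(k)}n^{O(1)}$. The only substantive difference is bookkeeping: the paper indexes its table simply by the guard subset on the at most $k^2+k$ vertices of a node and its children, which lets sibling-to-sibling visibility be resolved at the parent and dispenses with your chord-interface sets $A_t$ and the left-to-right accumulator sweep over siblings.
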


The running time of this algorithm is $2^{k^{3}}n^{O(1)}$, where $k$ is the \chordwidth.

\ChordFPT*

The running time of this algorithm is $2^{k^{7}/\delta^2} n^{O(1)}$, where $k$ is the \chordwidth.
We want to point out that the theoretical running times of both algorithms make it prohibitive to use those algorithms in practice. 
Thus, both algorithms are a theoretical contribution to the art gallery problem.

\subsection{Definition and Justification}
Given a chord $c$ of~$P$, we denote by $n(c)$ the number of vertices visible from $c$. 
The {\em \chordwidth} ($\cw(P)$) of a polygon is the maximum $n(c)$ over all possible chords $c$.
We refer to a recent paper by Klute, M Reddy and Miltzow, illustrating chord visibility width~\cite{klute2021local}.

The \chordwidth is a way of capturing 
the local complexity of a polygon, 
with respect to the notion of visibility.
Clearly, not all polygons in practice have small \chordwidth.
We like to think about \chordwidth as {a} measure on local complexity.
It is noteworthy that many synthetic polygons that are created in a random process have much smaller \chordwidth than they have reflex vertices.
On the other hand, polygons~$P$ constructed in many hardness reductions have typically~$\cw(P)$ roughly proportional to the total number of vertices.

\subsection{FPT algorithms}
For the rest of this section, we fix $k$ to denote the \chordwidth of $P$.
As a warm-up, we prove Theorem~\ref{thm:vertex-guardingFPT}.
Let $T$ be, for the rest of this section, the \weakVisPolyTree as defined in Section~\ref{sub:WeakVisTree}.
We aim to describe a dynamic programming algorithm 
on the tree~$T$.
We note the following lemmas as a preparation.
\begin{lemma}
	\label{lem:WeakPolComplexity}
	Let $u$ be a node of $T$, then $u$ has at most $k = \cw(P)$ vertices.
\end{lemma}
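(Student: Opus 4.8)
The plan is to use the defining-chord description of the nodes, which the construction already gives us: every node $u$ of $T$ equals the weak visibility polygon $\vis(c)$ of its defining chord $c$ (for the root, $c$ is the starting edge $e$; for every other node, $c$ is a window of the parent and $\vis(c)$ is computed inside the sub-polygon $P' = P\setminus(\text{union of the ancestors of }u)$). Since $u$ is a simple polygon, its number of vertices equals its number of edges, and every edge of $u$ is either a sub-segment of an edge of $\partial P$ (or of $\partial P'$) or a \emph{window} — a constructed edge, which is a chord of $P$. So it suffices to (i) bound how many vertices of $u$ are vertices of $P$, and (ii) bound how many vertices of $u$ are created by the windows.

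For (i), every vertex of $u$ that is a vertex of $P$ lies on $\partial u$ and is therefore weakly visible from $c$ (as $u=\vis(c)$ is closed), so the number of such vertices is at most $n(c)\le \cw(P)$. For (ii), I would use the standard structure of weak visibility polygons: each window $w$ of $u$ is anchored at a reflex vertex $r$ of $P$ that is weakly visible from $c$, and its other endpoint $p$ lies in the relative interior of an edge $gh$ of $P$ (the degenerate case where $w$ is collinear with $gh$ creates no new vertex and, as usual, may be set aside). The idea is to charge $p$ to the endpoint of $\{g,h\}$ lying on the \emph{hidden} side of $p$ — the first $P$-vertex one meets continuing along $\partial P$ away from $\vis(c)$ — and argue this charging is injective and disjoint from the vertices already counted in (i), because each edge of $P$ receives at most one window far-endpoint per side. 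Combining (i) and (ii) and avoiding double-counting yields the bound $\cw(P)$. (An alternative is to exploit that $r$ and $p$ are consecutive along $\partial u$ and pair each such pair with two distinct vertices weakly visible from $c$.)

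The step I expect to be the main obstacle is precisely this bookkeeping in (ii): fitting the window-created vertices inside the same $n(c)\le\cw(P)$ budget rather than a constant multiple of it requires a genuinely injective assignment that also does not collide with the $P$-vertices of (i). A secondary point to check carefully is that taking the weak visibility polygon inside the sub-polygon $P'$ (for non-root nodes) does not introduce vertices that escape the count; here one uses that, near $u$, the boundary $\partial P'$ consists only of the single defining chord $c$ together with pieces of $\partial P$ — by Lemma~\ref{lem:saved-visibilities}, $u$ cannot reach the windows of far-away ancestors — so the quantity $|V(u)|$ is governed entirely by the structure of $P$ around $c$, which is exactly what $n(c)$ measures.
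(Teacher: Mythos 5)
Your step (i) is fine, but the bookkeeping in (ii) — which you yourself flag as the main obstacle — does not work as described, and it is a genuine gap rather than a detail. The budget $n(c)\le \cw(P)$ counts only vertices \emph{visible} from $c$. Your proposed charge sends the far endpoint $p$ of a window to the $P$-vertex on the \emph{hidden} side of $p$, i.e.\ to a vertex lying in the pocket cut off by that window; such a vertex is by construction \emph{not} visible from $c$, so it is not one of the $n(c)$ vertices and charging to it does not keep you inside the $\cw(P)$ budget — you only get a bound of the form $n(c)+\#\text{windows}$. Charging instead to the visible-side endpoint $g$ fails for the opposite reason: $g$ is weakly visible from $c$ and is itself (generically) a vertex of $u$, so it is already consumed by your count in (i), exactly the collision you were worried about. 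The parenthetical alternative (pairing $r$ and $p$ with two distinct visible vertices) runs into the same collision and is not carried out. In fact no charging scheme can rescue the literal inequality ``$\le k$'' under the reading ``$n(c)=$ number of vertices \emph{of $P$} visible from $c$'': generically the vertex set of $u=\vis(c)$ consists of all visible $P$-vertices \emph{plus} one new vertex per window (plus, for non-root nodes, an endpoint of the defining chord that is itself a window endpoint of the parent), so the count strictly exceeds $n(c)$ as soon as a window endpoint lies in the interior of an edge.

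The paper's own proof is a two-liner: every vertex of $u$ is visible from $c$, hence the bound follows from the definition of \chordwidth. In other words, it counts \emph{all} vertices of $u$ — window endpoints included — against $n(c)$, implicitly reading ``vertices visible from $c$'' broadly rather than as ``vertices of $P$ visible from $c$''; it does not attempt the distinction you make, and so never faces your charging problem. The constructive way to close your version of the argument is to give up the exact constant: each window is anchored at a reflex vertex of $P$ that is visible from $c$, and each such reflex vertex anchors $O(1)$ windows, so charging every window endpoint to its anchoring reflex vertex yields $|V(u)|=O(k)$ (say at most $3k+O(1)$). That weaker bound is all that is actually used downstream (the tables in Theorem~\ref{thm:vertex-guardingFPT} and Lemma~\ref{lem:Vertices-A-in-Weak} only need a bound depending on $k$ alone), so either adopt the paper's broader reading of $n(c)$ or state and prove the $O(k)$ version; as written, your proof does not establish the stated inequality.
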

\begin{proof}
	Let $c$ be the defining chord of $u$.
	Every vertex of $u$ is visible from $c$.
	The lemma follows from the definition of \chordwidth.
\end{proof}

Given a node $u$ of~$T$, it consists geometrically of several edges and vertices. 
Some of those edges are part of the boundary of~$P$. Other edges are in the interior of~$P$.
We call those second type of edges \emph{windows} of $u$.

\begin{lemma}
	\label{lem:WindowComplexity}
	Let $u$ be a node of $T$, then $u$ has at most $k = \cw(P)$~windows.
	In other words, every node in $u$ has at most $k$ children
	and thus also at most $k$ siblings.
\end{lemma}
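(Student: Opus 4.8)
The plan is to derive the bound directly from Lemma~\ref{lem:WeakPolComplexity} together with the recursive construction of the \weakVisPolyTree~$T$. First I would recall that $u$ is by definition the weak visibility polygon $\vis(c)$ of its defining chord $c$, and that as such $u$ is a simple polygon whose boundary splits into two kinds of edges: those contained in $\partial P$, and the remaining ones, which are precisely the \emph{windows} of $u$. Each window is a single edge of this polygon — it is a straight chord of $P$, not a concatenation of collinear segments — so the number of windows of $u$ is at most the number of edges of $u$. Since in a simple polygon the number of edges equals the number of vertices, and since Lemma~\ref{lem:WeakPolComplexity} gives that $u$ has at most $k=\cw(P)$ vertices, it follows that $u$ has at most $k$ windows.

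Next I would translate this into the statements about children and siblings. By the recursive definition of $T$, the children of $u$ are in one-to-one correspondence with the windows of $u$: for every window $e'$ of $u$ (an edge of $u$ not lying on $\partial P$) we create exactly one child $\vis(e')$, computed inside $P\setminus\vis(c)$, and every child of $u$ arises in this way. Hence $u$ has at most $k$ children. For the sibling bound: if $u$ is the root it has no siblings and the claim is vacuous; otherwise, applying what we just proved to the parent $u'$ of $u$, the node $u'$ has at most $k$ children, so $u$ has at most $k-1\le k$ siblings.

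The only points that require a little care — and where a sloppy argument could slip — are (i) that a window really is a single edge of the polygon $u$, which is what makes ``number of windows $\le$ number of edges'' legitimate, and (ii) that the correspondence ``windows of $u$ $\leftrightarrow$ children of $u$'' is exactly a bijection (no window without an associated child and vice versa), which is immediate from the construction in Section~\ref{sub:WeakVisTree} but should be stated explicitly. Neither is a genuine obstacle; once Lemma~\ref{lem:WeakPolComplexity} is in hand, Lemma~\ref{lem:WindowComplexity} is essentially an immediate corollary, so I do not expect any hard step here.
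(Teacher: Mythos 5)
Your handling of the children/siblings part is fine (children correspond bijectively to windows by construction, and the sibling bound follows by applying the window bound to the parent), but the core step of your window bound does not hold up. You bound the number of windows by the number of edges of $u$, equate that with the number of vertices of $u$, and then invoke Lemma~\ref{lem:WeakPolComplexity}. The quantity that Lemma~\ref{lem:WeakPolComplexity} can actually control---and the only quantity the \chordwidth speaks about, since $n(c)$ counts vertices of $P$ visible from $c$---is the number of vertices of \emph{$P$} lying in $u$; this is also how that lemma is used elsewhere in the paper (in the proof of Theorem~\ref{thm:vertex-guardingFPT}, $\vertex(u)$ is explicitly the set of vertices of $P$ inside $u$). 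The polygon $u=\vis(c)$, however, has additional corners: the far endpoint of a window typically lies in the relative interior of an edge of $P$, so it is a vertex of the polygon $u$ but not a vertex of $P$. Consequently ``number of edges of $u$ equals number of vertices of $u$, which is at most $k$'' is not available: the total corner count of $u$ is roughly the number of $P$-vertices visible from $c$ \emph{plus} the number of windows, so bounding the windows by the total vertex count of $u$ is circular, and the literal reading of Lemma~\ref{lem:WeakPolComplexity} that you rely on is stronger than what its proof establishes (and false in general with the exact constant $k$).

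The paper's argument avoids this detour entirely: every window contains a reflex vertex of $P$, and that reflex vertex is visible from the defining chord $c$ of $u$; charging each window to such a reflex vertex bounds the number of windows by $n(c)\le \cw(P)=k$, with no appeal to Lemma~\ref{lem:WeakPolComplexity} at all. If you want to keep your structure, replace the edge-counting step by this charging argument: a window is a maximal boundary edge of $u$ interior to $P$, it is created by (and contains) a reflex vertex of $P$ that sees $c$, and distinct windows can be charged to reflex vertices visible from $c$, of which there are at most $k$.
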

\begin{proof}
	Any window of $u$ contains at least one reflex vertex,
	which is visible from the defining chord of $u$.
	The lemma follows from the definition of \chordwidth.
\end{proof}

We are now ready to prove Theorem~\ref{thm:vertex-guardingFPT}.

\begin{proof}[of Theorem~\ref{thm:vertex-guardingFPT}]
	Let us denote by $V$ the set of~$n$ vertices of~$P$
	and by~$R$ the set of reflex vertices of~$P$.
	First we construct an arrangement~\A defined by $\chord(V,R)$
	and the boundary of~$P$.
	Note that $\chord(V,R)$ consists of $O(n^2)$ line-segments.
	Thus, \A consists of at most $O(n^4)$ edges and line-segments.
	(Note that we could give even better bounds using \chordwidth, but they are not needed.)
	Let $\face(A)$ denote the set of faces of \A.
	Let $u$ be some node of the weak visibility polygon~$T$, as described in Section~\ref{sub:WeakVisTree}.
	Let $\vertex(u)$ be the set of vertices of $P$ inside $u$.
	We create for each node~$u$ of~$T$ a table denoted by $\tafel_1(u)$.
	We have an entry for every set $S\subseteq \vertex(u)$ and 
	we list all the faces $F\subseteq \face(A)$ that are visible from~$S$.
	Note that by definition of~\A, it holds that every face is either completely seen or not at all from a vertex of~$P$.
	Constructing $\tafel_1(u)$ takes at most $2^k n^{O(1)}$ time and space, by Lemma~\ref{lem:WeakPolComplexity}.
	
	As a next step, we create for every node $u$ 
	a second table~$\tafel_2(u)$.
	Let $W$ be the set of vertices of the node $u$ and all its children.
	Clearly $W$ has at most $k^2 + k$ vertices, see Lemma~\ref{lem:WeakPolComplexity} and Lemma~\ref{lem:WindowComplexity}.
	For every subset $F\subseteq W$, we store the faces of $\face(A)$
	it sees and whether it is possible to completely see all the faces
	in all descendants of $u$, using $F$.
	And if it is possible, then we also compute how many guards are needed at least, including the guards in~$S$.
	This is trivial for the leaves of~$T$ as the leaves have no children.
	Now consider a node~$u$ with at most~$k$ children $u_1,\ldots,u_k$.
	We consider the case that the tables of $\tafel_2(u_1),\ldots,\tafel_2(u_k)$ are already created. 
	Then we can create the table $\tafel_2(u)$ in $O(2^{k^3}n^{O(1)})$, as follows.
	For every set~$F$, check for all the children, using the previous entries of $\tafel_2(u_i)$, if the subtree below~$u_i$ can be completely seen, and if yes, how many vertices are needed.
	As $T$ has only $O(n)$ nodes, we can create all tables in 
	$2^{k^3}n^{O(1)}$ as well.
	In particular, this also creates $\tafel_2$ also for the root of~$T$.
	We go through all the entries of the root.
	We check which of them sees all the faces of the root as well.
	Among those entries, we choose one with the minimum number of guards used.
\end{proof}

Note that, by definition, in vertex-guarding, all the vertices of~$P$ form a candidate set that contains the optimal solution and we used this candidate set in a crucial way in the previous proof.
Now, as a corollary of Theorem~\ref{thm:OneShotAlgo}, all the vertices of the arrangement~\A (as defined in Section~\ref{sec:OneShotAlgo}) form a candidate set, which contains the optimal number of guards for the art gallery problem.
We will do dynamic programming along this new candidate set to prove Theorem~\ref{thm:FPT-chordwidth}.
The proof of Theorem~\ref{thm:FPT-chordwidth} follows along the same lines as the proof of Theorem~\ref{thm:vertex-guardingFPT}. 
We need the following lemma as preparation.

\begin{lemma}
	\label{lem:Vertices-A-in-Weak}
	Let $\delta$ be the \visionstability of $P$ and $u$ a node of the \weakVisPolyTree of~$P$.
	Then $u$ contains at most $O(k^6 / \delta^2)$ vertices of~\A.
\end{lemma}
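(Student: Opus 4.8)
The plan is to bound the number of vertices of the arrangement $\A$ (from Section~\ref{sec:OneShotAlgo}) that lie inside a single node $u$ of the \weakVisPolyTree by counting how many of the \emph{segments} defining $\A$ can possibly intersect $u$, and then using the fact that two segments cross at most once. Recall that $\A$ is built from three kinds of segments: the $O(r/\delta)$ rays shot from reflex vertices (at most $O(1/\delta)$ per reflex vertex), the $O(r^2)$ reflex chords in $\chord(R,R)$, and the $O(r^2)$ subdivision segments for faces incident to more than one reflex vertex. The key observation is that a segment can only intersect $u$ if it is "anchored" at a reflex vertex visible from $u$'s defining chord, or is a reflex chord both of whose endpoints are so visible.

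First I would invoke Lemma~\ref{lem:WeakPolComplexity} and Lemma~\ref{lem:WindowComplexity}: the node $u$ has at most $k$ vertices and at most $k$ windows, hence $u$ contains at most $k$ reflex vertices of $P$ on its boundary; moreover, any reflex vertex of $P$ whose ray-fan reaches into the interior of $u$ must have a line of sight crossing into $u$ through one of its $\le k$ windows (or $u$ itself lies on the ray). The subtle point is that a ray emanating from a reflex vertex $r$ \emph{outside} $u$ can still pass through $u$. To control this, I would argue that such a ray must enter $u$ through a window $w$ of $u$, and that a window is a chord-segment visible from the defining chord of $u$; the reflex vertices "relevant" to $u$ in this sense are exactly those visible from $u$ or from one of its $O(k)$ windows. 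Each window, being (part of) a chord, sees at most $k = \cw(P)$ vertices, hence at most $k$ reflex vertices. Counting: the reflex vertices whose rays can meet $u$ number at most $k + k\cdot k = O(k^2)$; this contributes at most $O(k^2 / \delta)$ ray-segments meeting $u$. Similarly, a reflex chord $\chord(r_1,r_2)$ meets $u$ only if the chord passes through $u$, which again forces $r_1$ and $r_2$ (or the chord's crossing points with windows) to be among these $O(k^2)$ reflex vertices, giving $O(k^4)$ reflex chords meeting $u$. The face-subdivision segments are incident to reflex vertices and are likewise bounded by $O(k^2)$.

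Then the vertices of $\A$ inside $u$ are intersection points of these segments, so their number is at most the square of the total count of relevant segments:
\[
O\!\left(\left(\frac{k^2}{\delta} + k^4 + k^2\right)^{2}\right) = O\!\left(\frac{k^4}{\delta^2} + k^8\right).
\]
This is slightly weaker than the claimed $O(k^6/\delta^2)$, so I would sharpen the reflex-chord count: a reflex chord contributing a \emph{vertex} of $\A$ inside $u$ must actually cross another segment inside $u$, and both contributing segments pass through $u$; I expect that a more careful accounting — noting that a chord passing through $u$ is determined by its $\le k$ visible reflex-vertex endpoints on \emph{each} side and that we only need chords with at least one crossing inside $u$ — reduces the chord contribution to $O(k^3)$ or lets us absorb the $k^8$ term into $k^6/\delta^2$ using $1/\delta \ge 1$ together with the fact that we only care about chords and rays that genuinely reach $u$. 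The main obstacle I anticipate is precisely this bookkeeping: rigorously pinning down which reflex vertices and chords "reach" $u$ through its windows, and showing the count is $O(k^2)$ reflex vertices and $O(k^3)$ chords rather than a cruder polynomial. Once that structural claim is in hand, the final bound $O(k^6/\delta^2)$ falls out of the square-of-segment-count estimate, mirroring exactly the arrangement-size computation at the end of Section~\ref{sec:OneShotAlgo} but localized to one weak-visibility node.
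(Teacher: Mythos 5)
Your overall route is the same as the paper's: bound the number of maximal segments of \A that reach the node~$u$ by exploiting the locality of the \weakVisPolyTree (you do this through the windows of~$u$, the paper through $u$'s parent and children; both give $O(k^2)$ relevant reflex vertices and hence $O(k^2/\delta)$ angular rays meeting~$u$), and then square the segment count. The gap is exactly the point you flag as your ``main obstacle'' and leave unresolved: the reflex-chord count. Counting all pairs of relevant reflex vertices gives $O(k^4)$ chords, hence only $O(k^8 + k^4/\delta^2)$ arrangement vertices, which does not imply the claimed $O(k^6/\delta^2)$; and your fallback of absorbing the $k^8$ term via $1/\delta \ge 1$ fails, because $k$ and $1/\delta$ are independent parameters --- for constant $\delta$ and large $k$ one has $k^8 \gg k^6/\delta^2$. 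So as written the plan proves a weaker (incomparable-in-general) bound, not the lemma.

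The missing observation, which is how the paper closes this step, is that every reflex vertex $r$ sees at most $k$ vertices of~$P$: any chord through $r$ sees everything that $r$ sees, so the $\le k$ bound from the \chordwidth applies to $r$ itself, and therefore $r$ lies on at most $k$ reflex chords of $\chord(R,R)$. Combined with the anchoring argument you already have (any reflex chord meeting $u$ contains a point of $u$ that sees both of its defining reflex vertices along the chord, so at least one of them is among the $O(k^2)$ reflex vertices visible from $u$ or its windows), this drops the chord count to $O(k^2)\cdot k = O(k^3)$. The total number of segments meeting $u$ is then $O(k^3 + k^2/\delta)$, and squaring gives $O(k^6 + k^5/\delta + k^4/\delta^2) = O(k^6/\delta^2)$ since $\delta$ is bounded by a constant. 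With that single additional fact your plan coincides with the paper's proof; without it, the stated bound is not reached.
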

\begin{proof}
	We first count the number of maximal segments of \A that are
	at least partially inside $u$.
	Note that every reflex vertex of~$P$ emits at most $1/\delta$ such chords, by the angular ray shooting.
	Furthermore, every reflex vertex~$r$ can see also at most~$k$
	other reflex vertices. 
	This bounds the segments from $\chord(R,R)$ associated with~$r$ by~$k$ as well.
	Furthermore, the number of reflex vertices in~$u$, its parent or one of its children is upper bounded by $(k+2)k$.
	Thus, there are at most $2k^2(k+1/\delta)$ many segments of \A intersecting~$u$.
	As any two segments intersect at most once, we get that there are at most~$O(k^6 / \delta^2)$ vertices of \A inside~$u$
\end{proof}

We note that the bound in the lemma can clearly be improved.
However, the algorithm would still be prohibitively slow. 
We are focusing here on a simpler exposition rather on getting the best theoretical bounds.

We are now ready to prove Theorem~\ref{thm:FPT-chordwidth}.

\begin{proof}[of Theorem~\ref{thm:FPT-chordwidth}]
	The algorithm is almost identical to the algorithm in Theorem~\ref{thm:vertex-guardingFPT}.
	The difference is in the candidate set that we use in each node. Instead of size $k$ it has now size $O(k^6/\delta^2)$.
	Thus, each table has size $O(k^7/\delta^2)$.
	This changes the running time to $2^{O(k^7/\delta^2)}n^{O(1)}$.
\end{proof}

\section{Test Results}
\label{sec:Tests}

We tested the practical implementation of the iterative algorithm, described in Section~\ref{sec:IterativeAlgo}, in several ways. 
The goal of the first experiment, described in the Section~\ref{sub:PractRunning}, was to find out the practical running time of the implementation and how it relates to input factors such as size, \chordwidth and \visionstability. 
These tests were performed on random, simple input polygons of different sizes up to 500 vertices. 
The input polygons were obtained from the AGPLIB library~\cite{art-gallery-instances-page}, a library used for other papers on the art gallery problem as well~\cite{engineering}. 
We will describe the polygons in more detail in Section~\ref{sub:PractRunning}. 
Furthermore, in Section~\ref{sub:speedupeffects}, we discuss how the two speedup methods, the \weakVisPolyTree and the critical witnesses, improve the running time.
An additional experiment was conducted to show that the practical implementation gives iteratively improving guard solutions, even for the irrational guard polygon that requires irrational guards, presented in~\cite{abrahamsen2017irrational}. More information about this experiment can be found in  Section~\ref{sub:convergence}.

Section~\ref{sub:Distrib} shows the CPU time distribution, showing what percentage of CPU-time is spent on doing which tasks.

The experiments {concerning the newly introduced algorithms} were ran on a computer with a 64-bit Windows 10 operating system, a 8-core Intel(R) Core i7-7700HQ CPU at 2800 Mhz 
and 16 GB of main memory. 
{The experiments with the algorithm from Tozoni et~al. were done on the same computer described above, but on a Linux Mint operating system (using a dual-boot set-up).}

The practical implementation heavily makes uses of version 4.13.1 of CGAL~\cite{cgal:eb-20a}. 
The IP solver used was IBM ILOG CPLEX version 12.10~\cite{cplex}. 

\subsection{Practical running times and correctness}
\label{sub:PractRunning}
To find out how the implementation performs in practice, we tested the algorithm on several input polygons. As mentioned previously, the input polygons were taken from the AGPLIB library~\cite{art-gallery-instances-page}. We had access to random simple polygons of four different size classes: 60, 100, 200 and 500 vertices. 
We tested on 30 different instances per size class.
An example of one of the 200-vertex polygons and its solution is shown in Figure~\ref{fig:200v-example}.

\begin{figure}[H]
	\centering
	\includegraphics[width=\textwidth]{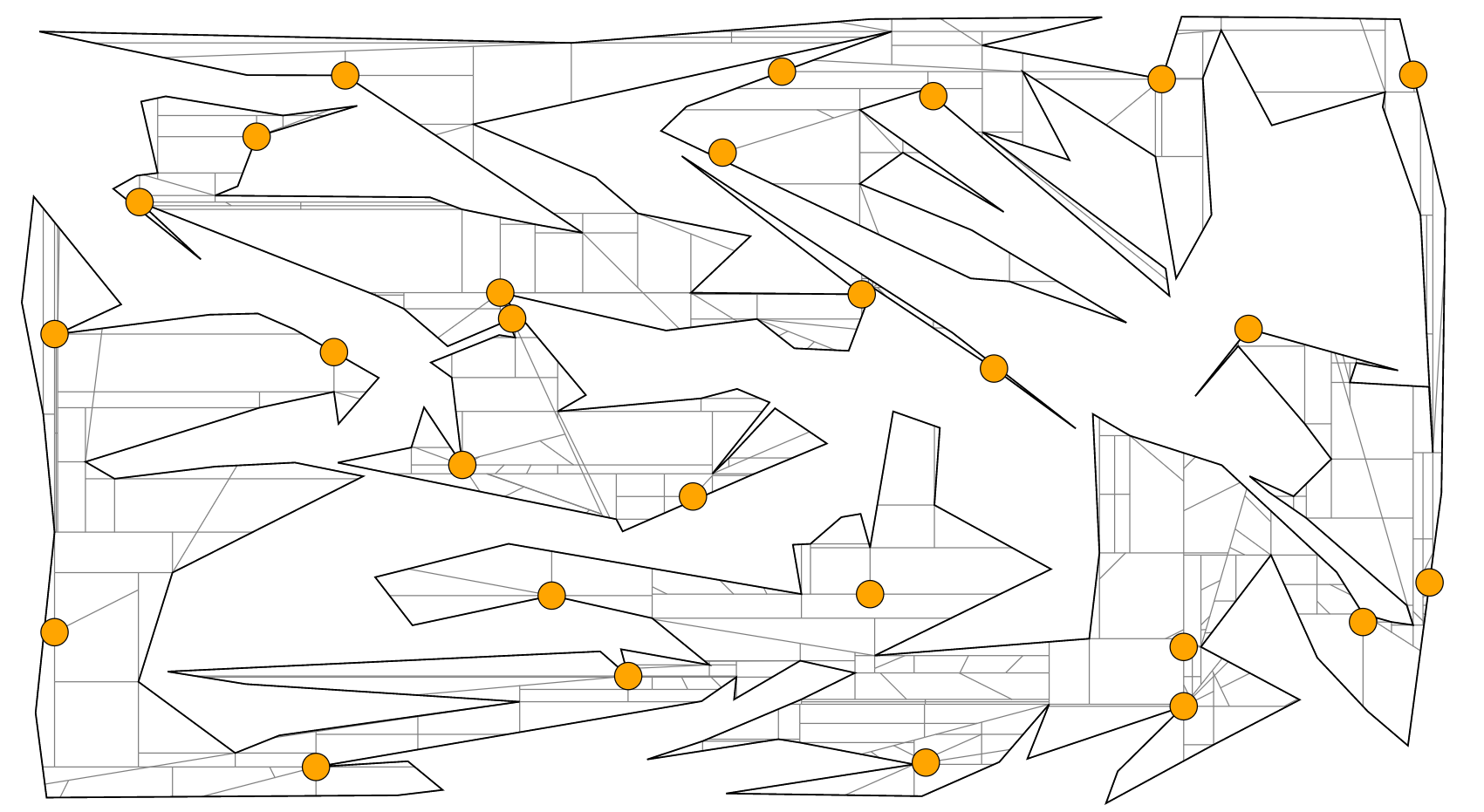}
	\caption{An example of a 200-vertex input polygon and its solution, as found by the iterative Algorithm without safe guards. We have also included the final arrangement at the end of the algorithm. 
	}
	\label{fig:200v-example}
\end{figure}

For these tests, we used two different versions of the iterative algorithm.
{
The first version is the most efficient algorithm, including several optimizations that make it perform much better in practice, at the cost of losing the performance guarantees. This version is referred to as the iterative algorithm without safeguards. The optimizations used are discussed in the paragraph below.
 The second version is the algorithm with performance guarantees from Theorem~\ref{thm:iterative}. 
}
The correctness  of the practical implementation was verified using the implementation provided by~Tozoni et al.~\cite{tozoni}. 
This implementation was also tested on the same input data-set, which makes it possible for us to cross-check our results to see whether we have found a solution of the correct size. 
\def\arraystretch{1.15}
\begin{table}[tph]
	\centering
	\begin{tabular}{|c|c|c|c|} \hline
		\multirow{2}{*}{\begin{tabular}{c}\textbf{Sizes}\end{tabular}} & \multicolumn{3}{c|}{\textbf{Average time (s)}} \\ \cline{2-4}
		& \textbf{Tozoni et al.} & \textbf{Tozoni et al. (Our hardware)} & \textbf{The Iterative Algorithm}  \\ \hline
		
		60 & 0.26 & 0.18 & 0.39 \\
		100 & 0.94 & 0.68 & 0.52  \\
		200 & 3.77 & 2.54 & 2.02 \\
		500 & 35.04 & 22.34 & 18.2  \\ \hline
	\end{tabular}
	\caption{A comparison of the iterative algorithm without safe guards with the results from Tozoni et al.~\cite{tozoni}, both the results reported by Tozoni et al. themselves~\cite{tozoni} and results found using their implementation on our hardware. Note that all reported times include pre-processing times as well. Tests were ran on 150 polygons, but our algorithm could not find the optimal solution for one polygon within the time limit, so the times of 149/150 polygons are displayed in this table. 
	}
	\label{tab:running-times}
\end{table}

\begin{figure}[tbph]
	\centering
	{
    	\includegraphics[width=\textwidth]{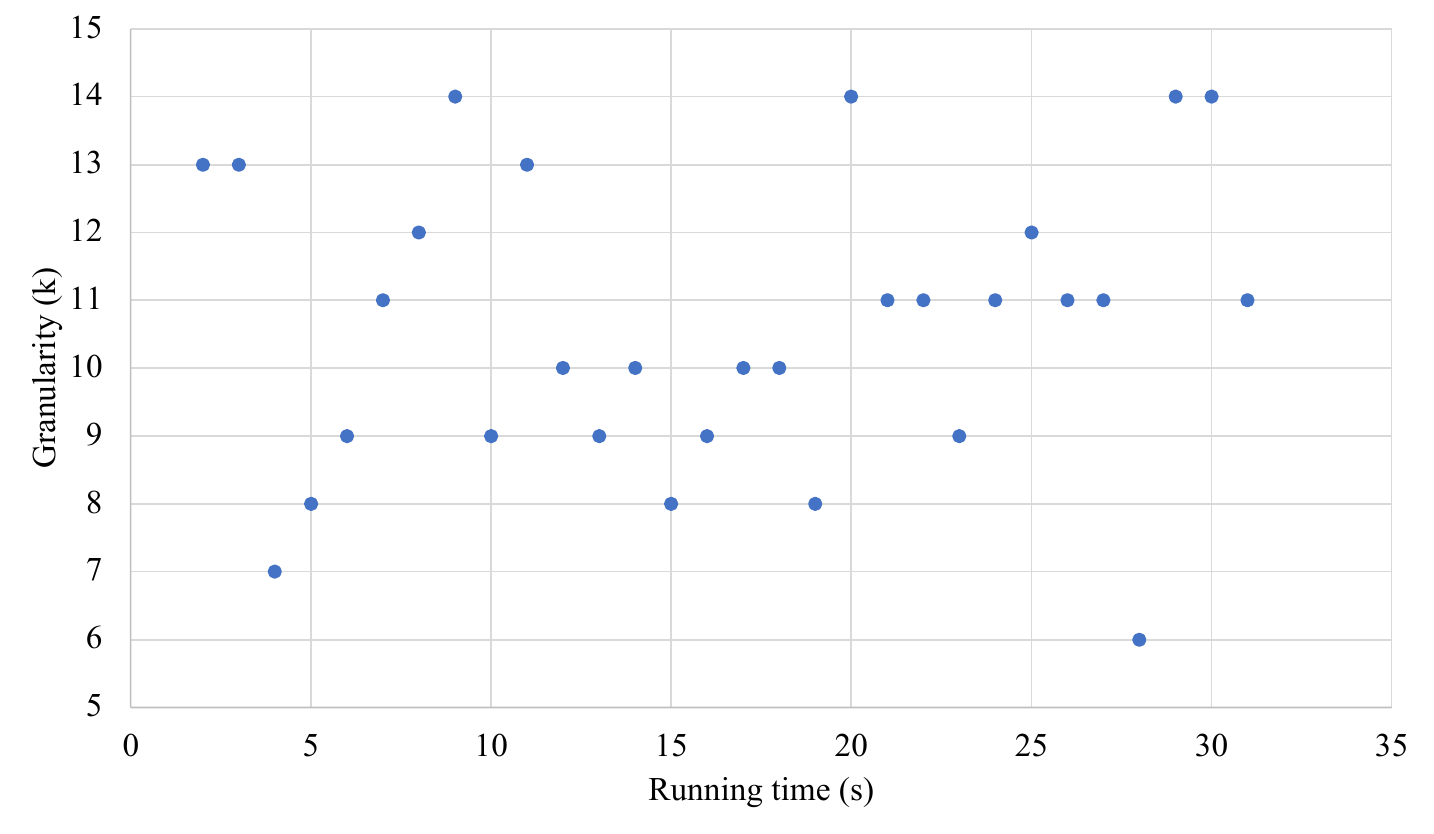}
    			\vspace{-24pt}
    }
	\caption{
	A scatter plot containing on the $x$-axis the running time and on the $y$-axis the granularity obtained from our experiments with the iterative algorithm without safe guards.
 The polygons used are the $30$~polygons of size~$500$. The granularity~$\lambda$ on the $y$-axis is displayed as $k$ in {$\lambda = 2^{-k}$}. 
	}
	\label{fig:scatter500}
\end{figure}

\paragraph*{Iterative Algorithm without Safe Guards.}
For this version, we opted for the most efficient protocols in practice. This means that we forsake some of the theoretical guarantees that we do not need in practice. 
See Section~\ref{app:Intermediate-Arrangement} for an illustration.
For these tests, we used the critical witness protocol. When splitting, we used the normal split protocol. Finally and most importantly, we applied the simple IP protocol, to make sure to make as few unnecessary splits as possible. 
Using this variant, we found some reasonable practical results. We were able to find the optimal solutions
for all tested instances, except for one polygon of size ({Polygon \#25 in the set of polygons of size 100}).
We put a thirty minute timeout on our experiments, and this is the only polygon that took longer than this time limit. Furthermore, when looking at this polygon, we noticed that it has an exceptionally low \visionstability. 
When running the algorithm with an increased probability of performing visibility line splits, we found that we could find the optimal solution in the same time-frame as the other polygons of size~$100$. 
However, increasing this probability had a strong negative effect on the average running times for the other 149 polygons.
Because of this, we chose to leave this polygon out of the testbed, rather than to adapt the implementation specifically for one polygon. 
If we left the polygon in the testbed, the average runtime would be infinite if we use our original parameters. 
In case that we were to use parameters tailored to that polygon the runtime would be misleading.

The averages running times for each of the size classes are shown in Table~\ref{tab:running-times}. Note that the times we report include the pre-processing time of computing the weak visibility decomposition before the first iteration, which are very short, even for the larger polygon sizes. 
This table also shows the results by Tozoni et al.~\cite{tozoni} pertaining to polygons of these sizes. Because their implementation was made publicly available, we were able to test it on the same system that we used to run our own experiments. 
To make the comparison more fair, we decided to display these times in the table, alongside the results found in 2016. 
However, note that we were limited to use a worse, free IP solver (GLPK instead of XPRESS) because we did not have a license for the best solver available that can be used for the implementation. 

We acknowledge the fact that the results from Tozoni et al.~\cite{tozoni} were improved on by de~Rezende et al.~\cite{engineering}. 
{However, we were not able to find the individual running times from this improved version corresponding to these sizes, and this updated version was not made publicly available. We could only find average running times.}
This table shows that running times of our algorithm are faster, except for the smallest size of 60.
The algorithm, amongst other factors, is sensitive to the vision stability of a polygon. This means that few polygons are very hard to solve and thus may overly influence the average.
Additional details on the running times can be found in Figure~\ref{fig:scatter500}, where we show each of the running times of the polygons with size 500 plotted against the largest value of the {inverse granularity necessary to find the optimal solution, where granularity $\lambda = 2^{-k}$}.

\paragraph*{Iterative Algorithm with Safe Guards.}
This is the version of the algorithm from Theorem~\ref{thm:iterative}. We did not make use of critical witnesses. 
For splits, we used the normal split protocol, in which we mostly focus on doing splits that guarantee a decrease in the \power of the face. Furthermore, we  used the normal IP protocol.
This means that we use the normal IP at every iteration, and the big IP only if the normal IP did not find a splittable face. We also use the normal granularity update. 

After running the tests on polygons of size~$60$, we found that the algorithm finds the solution in an efficient manner in $25$ out of $30$ instances. For the other~$5$ instances, no solution was found after~$60$ minutes. 
Looking at the results from the iterative algorithm without safe guards, we found that these~$5$ polygons seem to have rather low vision stability. 
To estimate the \visionstability, we use the minimum granularity $\lambda$ of angular-splits of a face that we split, as a heuristic. The polygons that we were able to solve using this variant had granularity within the $\frac{1}{16}$ - $\frac{1}{128}$ range. From testing the other variant, described in the previous paragraph, we found that the~$5$ polygons that this algorithm could not solve had granularity values between~$\frac{1}{256}$ and~$\frac{1}{2048}$.

When we use the big IP, we split 
until there is no optimal solution with splittable faces. 
In these instances with low vision stability, this means that we will make many seemingly unnecessary splits. 
This leads to infeasible running times.

\begin{table}[tbhp]
	\centering
	\begin{tabular}{|c|cccc|}
		\hline
		\textbf{Size} &  $60$ & $100$ & $200$ & $500$  \\
		\hline
		\textbf{Correlation} & $0.07$ & $0.3$ & $0.1$ & $0.6$ \\
		\hline
	\end{tabular}
	\caption{The correlation coefficients between the {measured granularity ($k$ in $\lambda = 2^{-k}$)} and the running time, computed per size. }
	\label{tab:Correlations}
\end{table}

\paragraph*{Correlation of Granularity and Running Time.}
As mentioned before, the iterative algorithm is sensitive to the \visionstability of the input polygon. 
To test this, we saved the smallest granularity $\lambda$ necessary to find the solution. 
We did this for the iterative algorithm without safe guards. 
We then computed the correlation coefficients between the running times and the inverse of the granularity $\frac{1}{\lambda}$. 
These coefficients are shown in Table~\ref{tab:Correlations}.  
These are fairly strong correlations. Note that the minimum granularity~$\lambda$ might not be the best indication of the \visionstability of a polygon. 
We actually have no efficient way to compute the \visionstability efficiently.
A larger polygon might need a very fine subdivision in one part of the polygon, but can be relatively coarse in other parts. 

Besides this, there are several other random factors which influence the running time. 
For example, the IP chooses an arbitrary optimal solution out of several possible options and the splits we do at each iteration are also chosen randomly. Additionally, the \chordwidth of the \weakVisPolyTree has some effect on the running time. 
We believe that these factors account for the fluctuation in the correlations.

\paragraph*{Standard deviation in running times for individual polygons.}
In the previous paragraph, we explained that the differences between the running times for polygons can be quite large, and can be explained by several different factors, including the \visionstability of  the input polygon.
However, it would also be interesting to know the standard deviation of the running times if we test the same polygon multiple times. 
For this experiment, we ran the most efficient version of the algorithm, without Safe Guards, $15$ times on each of the~$30$ input polygons of size~$100$ and measured the running times and computed the averages standard deviations of each input polygon.
These statistics are visualized in a graph in Figure~\ref{fig:stdev}. 
We see that for some polygons the standard deviations can be quite large.
Several random factors in the iterative algorithm could account for these {large} standard deviations.
Firstly, the \weakVisPolyTree has a small degree of randomness in its construction. This is because we choose a random first edge. 
However, the results showed that for the $15$ different instances of the same polygon, the complexity of the \weakVisPolyTree did not change much. 
In fact, the number of weak visibility polygons and the largest weak visibility polygon were often very similar.
Next, we randomly choose the type of split that we perform.
For some polygons, this may be largely responsible for the deviation from the average. 
{If} a solution relies on collinearity, an extension or chord split can often create a candidate vertex at the necessary position.
Because of the randomness of the splits, we see that, for the same polygon, the minimal granularity that the algorithm used varies.
When we look at the instances with lower granularity, it seems that extension splits and chord splits were used more often than the instances with higher granularity for the same polygon. Figure~\ref{fig:lowgranexample}~(a) and (b) show a simple example of how this could happen. 
This suggests that we can lower the standard deviation and {increase the} minimal granularity by doing these types of splits more often. Finally, when several solutions are available, the IP solver will choose a random one. This means that the solver sometimes chooses a face that is useful to split and other times it may choose a face that is not so interesting.

\begin{figure}[h]
	\centering
	\def\svgwidth{\textwidth}
	\input{figures/New_Variance.pdf_tex}
	\caption{We tested~$30$ polygons of size~$200$ from the AGLIP \cite{art-gallery-instances-page}. Each polygon was tested $15$ times. This graph shows the standard deviation of the running times for each of the~$30$ polygons. Note that the time here excludes the pre-processing time, which is why the average is lower than the 2.02 seconds presented in Table~\ref{tab:running-times}.
	}
	
	\label{fig:stdev}
\end{figure}

\begin{figure}[h]
	\centering
	\includegraphics[page=18]{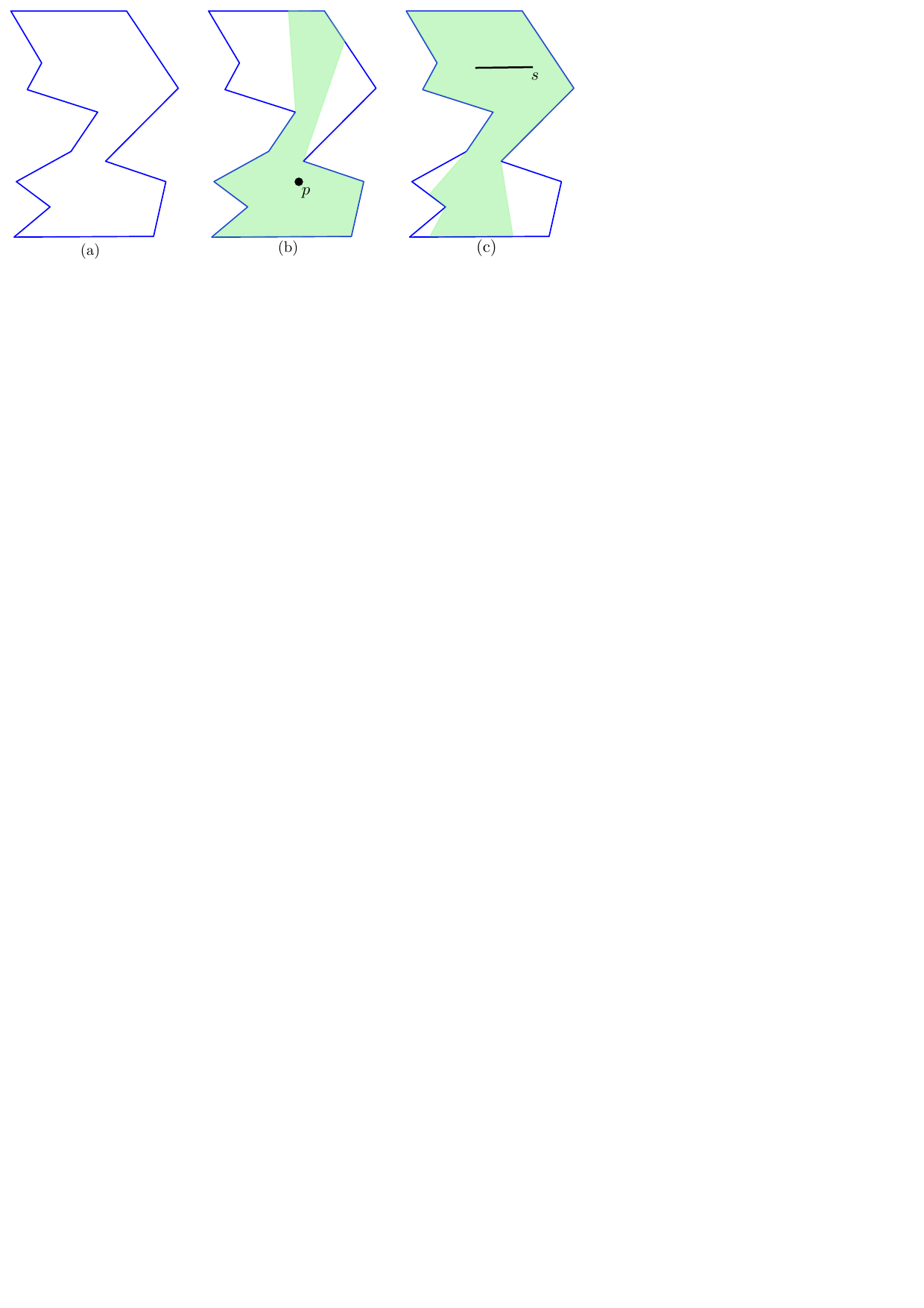}
	\caption{In (a), we perform several angular splits in a row. We are unlucky, and never choose to do an extension split. The minimum granularity gets quite low before we find a solution. In (b) we are lucky and do two extension splits in a row. The minimum granularity is never updated.}
	\label{fig:lowgranexample}
\end{figure} 

\subsection{Effect of the speed-up methods}
\label{sub:speedupeffects}
To find out the added value of the two speedup methods, we will look at different statistics that were measured in the experiments. We first look at the effect of the \weakVisPolyTree and then at the effect of the use of the critical witnesses. We show that both methods offer large speedups.

\paragraph*{The value of the \weakVisPolyTree.} In the experiments from the previous section, we also measured several things about the weak visibility polygon trees computed for the different input polygons. 
In particular, we tracked several characteristics of the \weakVisPolyTree: the number of weak visibility polygons in the tree, the largest number of {non-reflex} vertices, the largest number of reflex vertices of the largest weak visibility polygon in the tree and finally the percentage of visibility queries that we skip while using the \weakVisPolyTree.
Table~\ref{tab:weakvisstats} summarizes these characteristics for each size class. 

We see that the tree size seems to almost grow linearly with the size of the polygon.
However, the other two statistics grow much more slowly.
This attests to the effectiveness of the \weakVisPolyTree because it means that even the larger polygons are divided into relatively small weak visibility polygons.
This prevents the computation of many unnecessary visibilities throughout the course of the iterative algorithm. We can see this clearly when we look at the percentage of queries we save when we use the \weakVisPolyTree. This percentage goes up very quickly the larger the size of the input polygons.
\begin{table}[ht]
	\centering
	\begin{tabular}{|c|cccc|}
		\hline
		\textbf{Size} &  $60$ & $100$ & $200$ & $500$  \\
		\hline
		\textbf{Tree size} &
		14.2 & 23.0 & 46.3 & 115.0  \\
		\textbf{Largest polygon} & 20.5 & 23.3 & 26.2 & 28.4 \\
		\textbf{Largest number of reflex vertices} & 5.9 & 6.2 & 7.0 & 9.2 \\
		\textbf{Percentage of queries saved} & 16.7\% & 35.4\% & 63.5\% & 87.3\% \\
		\hline
	\end{tabular}
	\caption{We tested 30 input polygons from the AGPLIB library~\cite{art-gallery-instances-page} of four sizes. For each size class we see the averages of characteristics of the \weakVisPolyTree{s}.  
	}
	\label{tab:weakvisstats}
\end{table}

\paragraph*{Critical witnesses speedup.} To find if our second speedup method was also as effective, we conducted an additional experiment.
We tested the same polygons of sizes $60$, $100$, $200$ and $500$ without critical witnesses.
Table~\ref{tab:critspeedup} compares the running times of the versions with and without critical witnesses. 
Additionally, we see the number of witness points and faces used on average. For the method without critical witnesses, we list the number of total witnesses while the method with critical witnesses shows the number of critical witnesses only.
The table shows that the critical witnesses cause speedups factors of $2.2$ and larger for the different size classes.
When we compare the size of the critical witness sets to the total witness sets, we see that the ratios are about $8 : 1$ and $10 : 1$ for points and faces respectively.
This means we have to compute fewer visibilities, which gives a speed-up. However, in order to keep the critical witness set up-to-date we may need to solve several extra integer programs. This is clearly a trade-off, especially considering solving integer programs dominates the running time of the algorithm (see Section~\ref{sub:Distrib}). However, using critical witnesses still gives us some speed-up, which means that for the current state of the implementation, they are still an improvement.

\def\arraystretch{1.45}
\begin{table}[ht]
	\centering
	\begin{tabular}{|c|c|c|c|c|c|c|c|} \hline
		\multicolumn{1}{|c|}{\multirow{2}{*}{\textbf{Sizes}}} & \multicolumn{3}{c|}{\textbf{Average time (s)}} & \multicolumn{2}{c|}{\textbf{Witness points}} & \multicolumn{2}{c|}{\textbf{Witness faces}} \\ \cline{2-8} 
		\multicolumn{1}{|c|}{} & \multicolumn{1}{l|}{\textbf{With}} & \multicolumn{1}{c|}{\textbf{Without}}
		& \multicolumn{1}{c|}{\textbf{Speedup}}
		& \multicolumn{1}{c|}{\textbf{With}} & \multicolumn{1}{c|}{\textbf{Without}} & \multicolumn{1}{c|}{\textbf{With}} & \multicolumn{1}{c|}{\textbf{Without}} \\ \hline
		
		60	&	0.39	&	0.83	&	2.2	&	36.7	&	272.3	&	18.37	&	166.3	\\
		100	&	0.52	&	2.34	&	2.9	&	62.97	&	497.90	&	33.06	&	301.48	\\
		200	&	2.02	&	9.32	&	2.5	&	133.5	&	1125.53	&	63.8	&	683.37	\\
		500	&	18.2	&	69.92	&	2.5	&		378.77 &	3080.90	&	173.4	&	1851.45	\\
		
		\hline
		
	\end{tabular}
	\caption{A comparison of the iterative algorithm with and without the use of critical witnesses, tested 30 polygons of sizes $60$, $100$, $200$ and $500$.
		We compare the running times and the number of witnesses used in the IP. Depending on the version of the algorithm, the witnesses shown are either the number of critical witnesses or the total number of witnesses.  }
	\label{tab:critspeedup}
\end{table}

\subsection{Convergence to the optimal solution}
\label{sub:convergence}
In this experiment, we ran the iterative algorithm for~$30$ minutes for the irrational-guard-example from~\cite{abrahamsen2017irrational}. 
See Section~\ref{app:Intermediate-Arrangement} for an illustration of the first~$20$ iterations.
We used the square split protocol, critical witnesses and the simple IP protocol. The results with angular splits and reflex chord splits are similar in spirit, but the convergence is slower. Because we only used square splits, there was no need for the big IP as we did not update the granularity. 
This polygon is illustrated on the right of  Figure~\ref{fig:NotVisionStable}. 
Even though the algorithm will not be able to find the optimal solution,
we get a set of guards $G_i$ at the end of iteration~$i$. 
Note that $G_i$ is often a combination of point-guards and face-guards.
As we know the optimal solution~$F$, 
we can compute the Hausdorff distance $d_i = \disth(G_i,F)$ at the end of each iteration.
The Hausdorff distance is a metric that measures the distance between compact sets.

See Figure~\ref{fig:irrational-convergence}~a) for an illustration of the convergence speed per iteration.
Interestingly, the distance approaches the optimum very quickly.
Additionally, Figure~\ref{fig:irrational-convergence}~b) shows the Hausdorff distance plotted against the cumulative running time. 
Here, we see that, when plotted against the running time, the Hausdorff distance does not decrease as dramatically. 
This is because the later iterations take more time. 
That later iterations take more time is plausible.
The face splits at every iteration introduces new candidates and witnesses. 
This means that the IP has to deal with more candidate variables at each iteration and that we have to compute a larger number of visibilities. 
Section~\ref{app:Intermediate-Arrangement} includes images produced by the program showing several iterations of the algorithm obtained during this experiment.

\begin{figure}[p]
	\centering
	\def\svgwidth{\textwidth}
	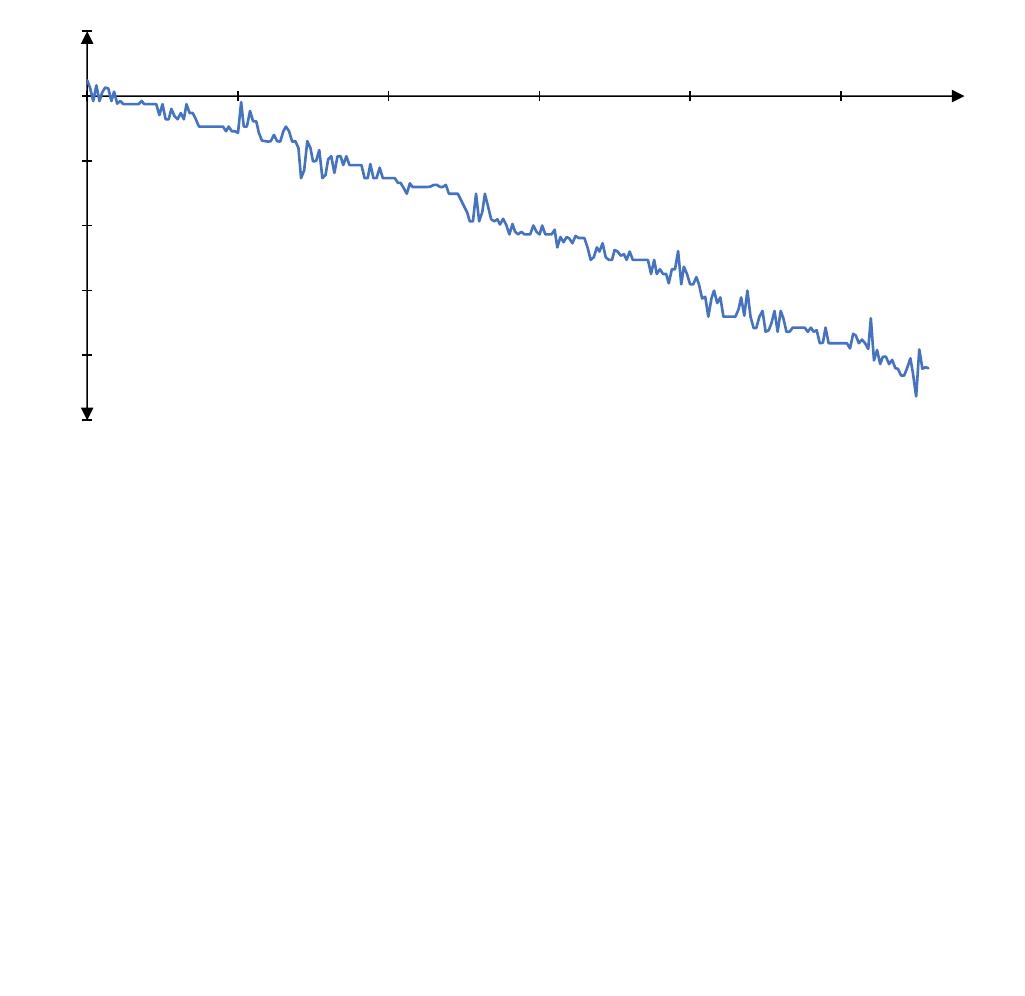
	\caption{The iterative algorithm based on the notion of \visionstability reports a sequence of solutions. Graph a) shows on the $x$-axis the iterations from~$1$ to about~$300$ and on the $y$-axis, the $\log_{2}$ of the Hausdorff distance to the optimal solution. Graph b) shows on the $x$-axis the cumulative running time and on the $y$-axis, the $\log_{2}$ of the Hausdorff distance to the optimal solution. All times are in seconds. }
	\label{fig:irrational-convergence}
\end{figure}

\begin{figure}[H]
	\centering
	\def\svgwidth{\textwidth}
	\input{figures/CPU_Distribution.pdf_tex}
	\caption{The chart shows the CPU distribution of the iterative algorithm implementation for solving $30$ polygons of size $200$.}
	\label{fig:distrib-chart}
\end{figure}

\subsection{Distribution of CPU usage}
\label{sub:Distrib}
This section describes how the workload of the algorithm is distributed on the CPU. To achieve this, {we} analyzed the CPU usage when computing solutions for all $30$ polygons of size $200$.
The CPU usage was comparable on polygons of different sizes.
The version of the algorithm analyzed here is the best performing variant: the Iterative Algorithm without safe guards. 

We performed this analysis using the Visual Studio profiler~\cite{profiler}.
Figure~\ref{fig:distrib-chart} shows the results found by the profiler. 
The CPU usage distributions for other polygons that we tested are very similar.
We divided the algorithm into several parts. 
These different parts are shown in a pie-chart in Figure~\ref{fig:distrib-chart}. 
The first part, in blue is the pre-processing time, which as described in Section~\ref{sub:Initialization}, consists of setting up the \weakVisPolyTree. 
The part shown in orange are point visibility queries, computing whether a point sees a witness face or witness point. Practically, this is achieved by using the existing CGAL methods, which is based on the triangular expansion method presented in~\cite{hemmer}. 
The chart shows in purple the percentage of the CPU time spent on face visibility queries, the question of whether a candidate face guard can see a face or point witness. 
We also compute these visibilities when checking whether we should add new critical witnesses, as we must make sure that a candidate solution sees the complete polygon. These queries are solved using a new weak visibility computation method that we developed in a follow-up project~\cite{ConvexExpansion}.
The yellow part shows the CPU time spent on solving the IP using the CPLEX solver. 
Finally, in gray, other, smaller tasks are combined into one section. 
These tasks typically consist of updating the intermediate arrangement, splitting faces using our different split techniques and keeping track of the results.
We see that most of the CPU time is spent on solving the IP. 
This means that, to improve the running time of the iterative algorithm, it is desirable to improve this part of the implementation. Perhaps this can be done by  reducing the number of IPs that the program needs to solve. For these experiments, we used the default parameters of CPLEX. {A first option} would be to look into different solvers. 
Another way of improving this would be to experiment with different parameters of the IP solver.
This could be achieved in many different ways.
For example, we could be more generous in adding critical witnesses.
{Also, more splits could be} made per iteration.
This would probably lead to less overall iterations.
It seems also plausible to use techniques that avoid using the IP solver
and use for instance linear programs as a proxy for the IP.

\subsection{GitHub code and results}
The practical implementation is available at the following GitHub repository~\href{https://github.com/simonheng/AGPIterative}{\url{https://github.com/simonheng/AGPIterative}}. 
The folder \emph{results} in the repository contains an excel file which reports the test results discussed in Section~\ref{sub:PractRunning}. 
The code provided in the repository are C++ source and header files, in addition to a Visual Studio solution file. 
Bear in mind that the code is dependent on CGAL version 4.13.1~\cite{cgal:eb-20a}, IBM ILOG CPLEX version 12.10~\cite{cplex}, the boost library and Libxl version 3.9.0.0 (used to read and write the excel result files). 
In order to be able to compile and run the project, the above dependencies must be installed.

\paragraph*{Acknowledgments.}
We thank Christopher Bouma for letting us use his computer to run our tests. 
We thank Sofia Rosero Abad for her graphic designs.
Furthermore, we would like to thank Marjan van den Akker and Rogier Wuijts  for interesting discussions on using IP solvers.
Special thanks goes to Matthew Drescher and Emile Palmieri-Adant for attempting to implement previous versions of the algorithm.
We thank Pedro de~Rezende for helping us to access previous implementations and links to relevant literature.
Finally, we thank \'{E}douard Bonnet 
and Mikkel Abrahamsen for helpful feedback on the presentation.

\section{Intermediate Arrangements}
\label{app:Intermediate-Arrangement}
\paragraph*{Irrational-Guard Polygon.}
Below we show the first $20$ iterations of the Irrational-Guard polygon. Section~\ref{sub:convergence} shows that this iteratively updating set of guards converges to the optimal solution. The orange points and faces represent point- and face-guards in the intermediate solution. The green faces represent faces not fully seen by the current candidate solution. Both orange and green faces are split in the next iteration. Note that for each of the orange and green faces, we draw a random vertex in the same colour. This is so that we can still see these faces when they become very small.
\\
\begin{minipage}[b]{0.48\textwidth}
	\centering
	\includegraphics[width=\textwidth]{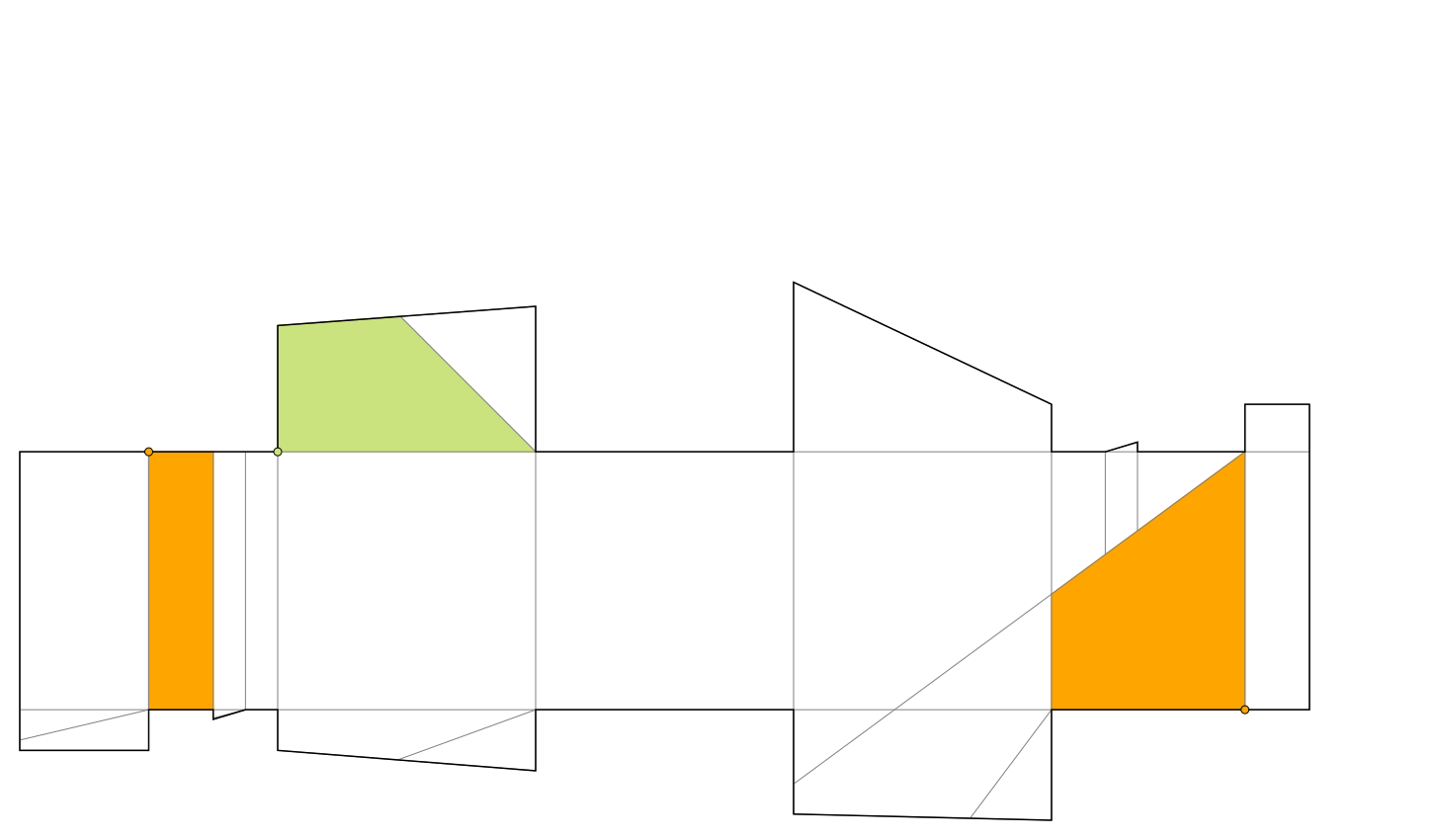}
	1
\end{minipage}
\hfill
\begin{minipage}[b]{0.48\textwidth}
	\centering
	\includegraphics[width=\textwidth]{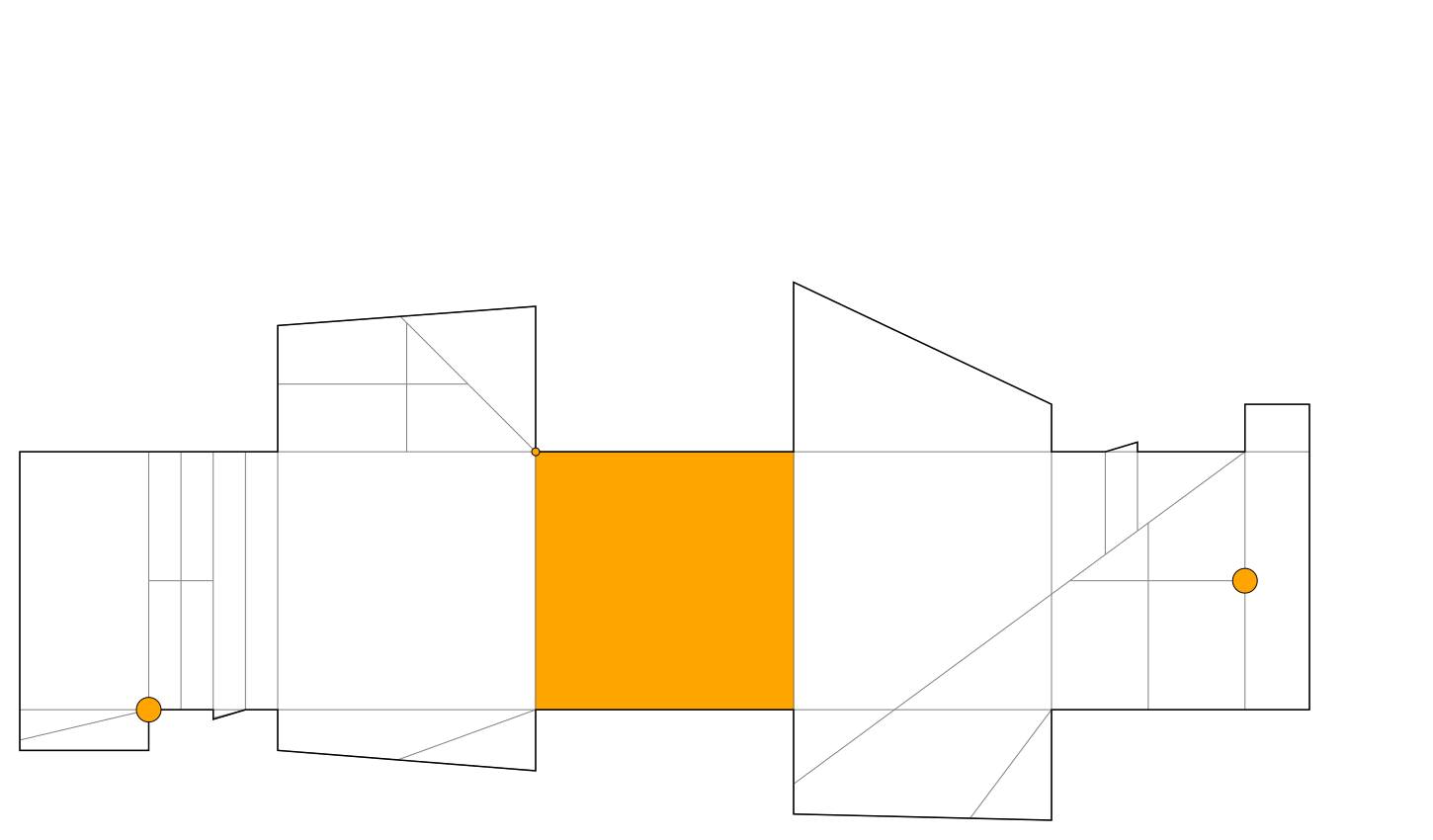}
	2
\end{minipage}
\begin{minipage}[b]{0.48\textwidth}
	\centering
	\includegraphics[width=\textwidth]{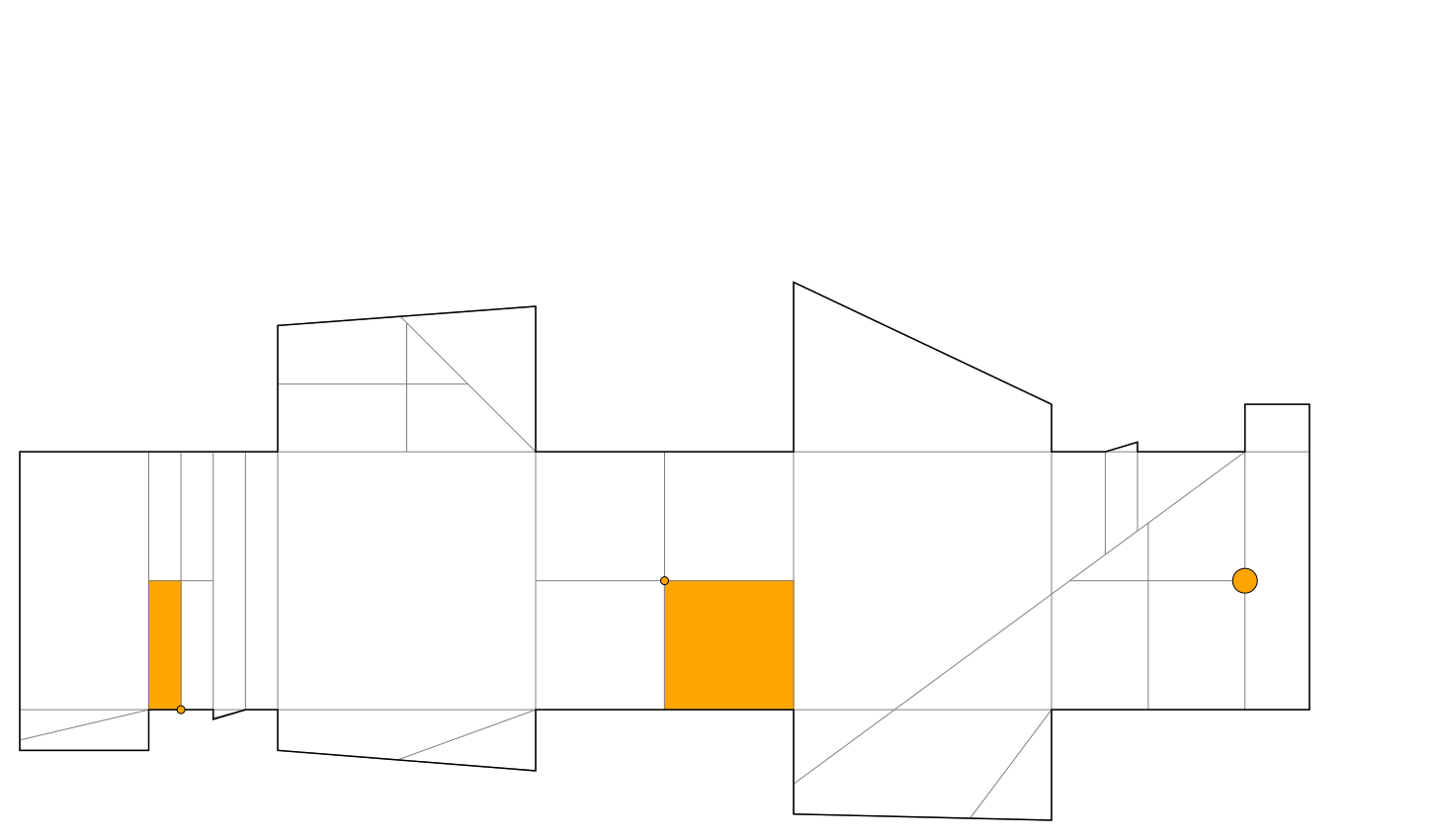}
	3
\end{minipage}
\hfill
\begin{minipage}[b]{0.48\textwidth}
	\centering
	\includegraphics[width=\textwidth]{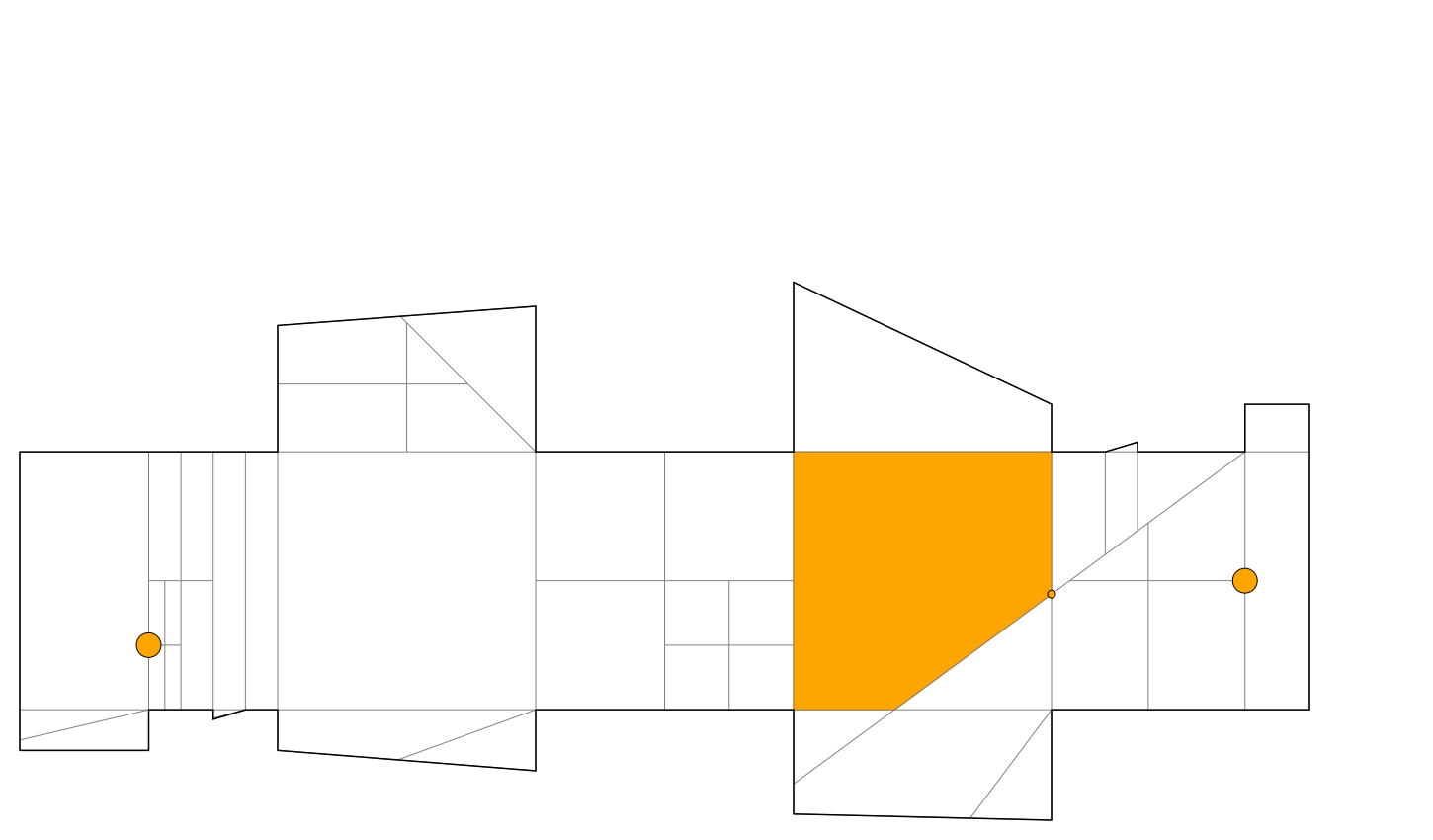}
	4
\end{minipage}
\begin{minipage}[b]{0.48\textwidth}
	\centering
	\includegraphics[width=\textwidth]{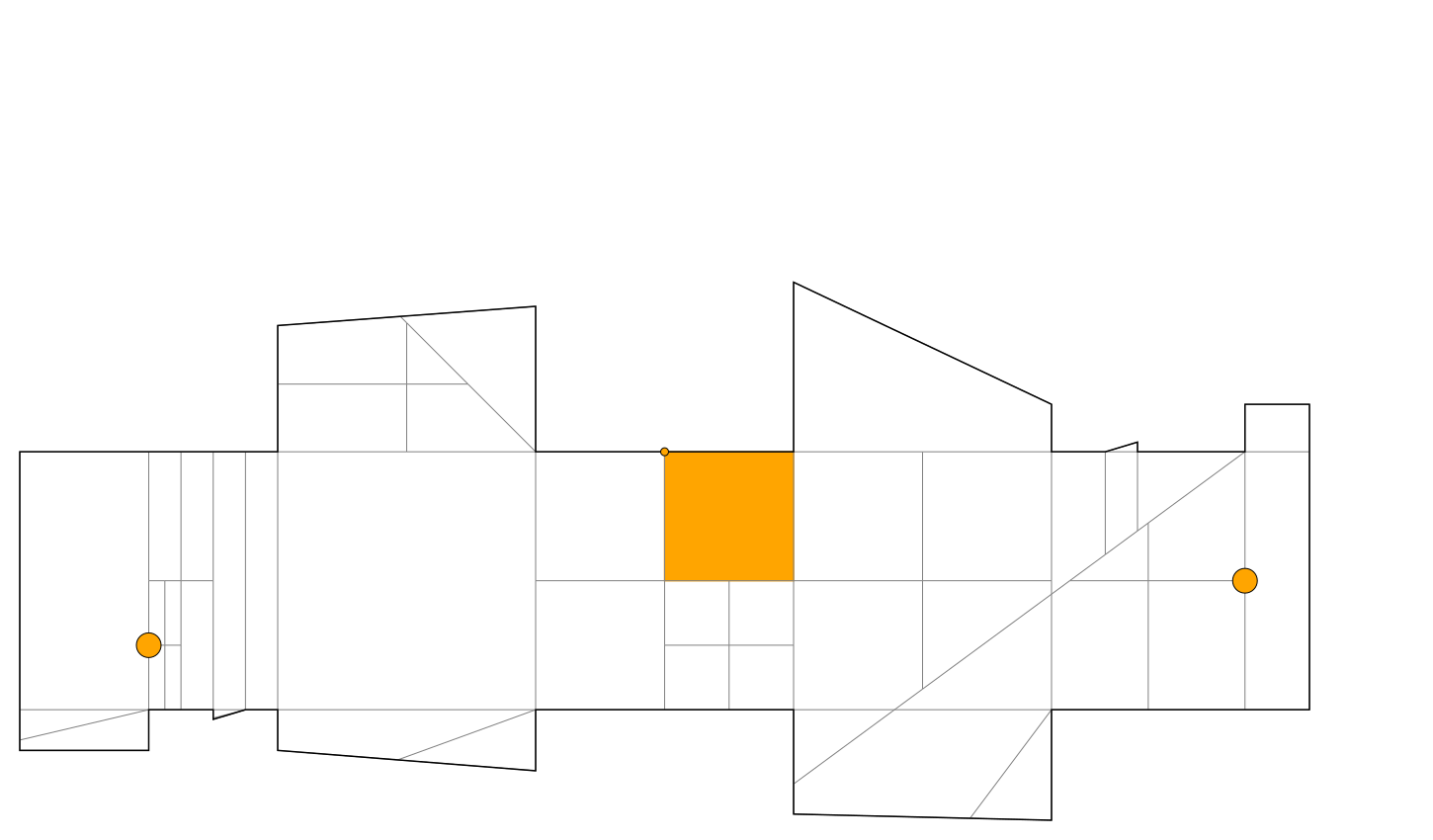}
	5
\end{minipage}
\hfill
\begin{minipage}[b]{0.48\textwidth}
	\centering
	\includegraphics[width=\textwidth]{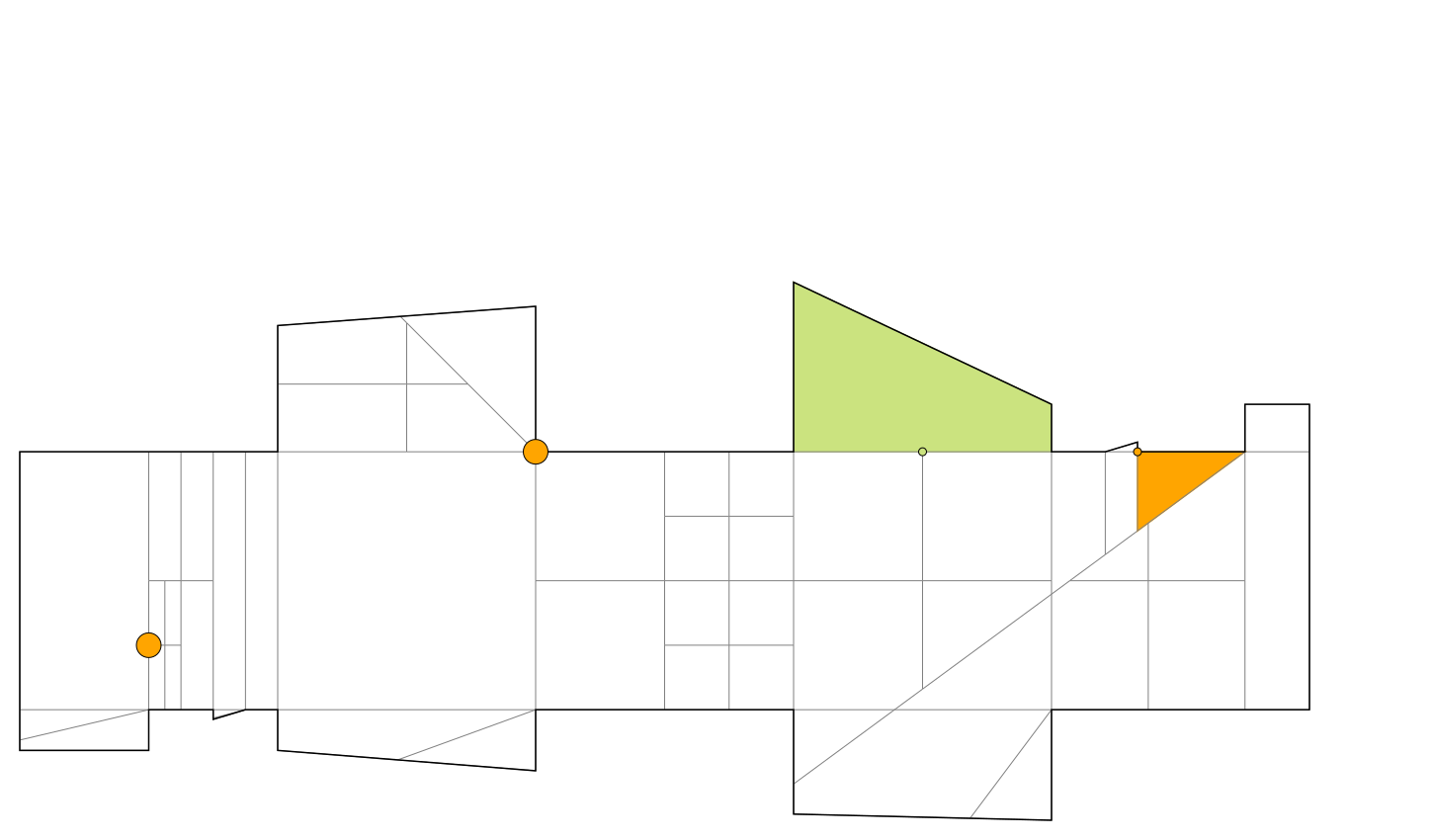}
	6
\end{minipage}
\begin{minipage}[b]{0.48\textwidth}
	\centering
	\includegraphics[width=\textwidth]{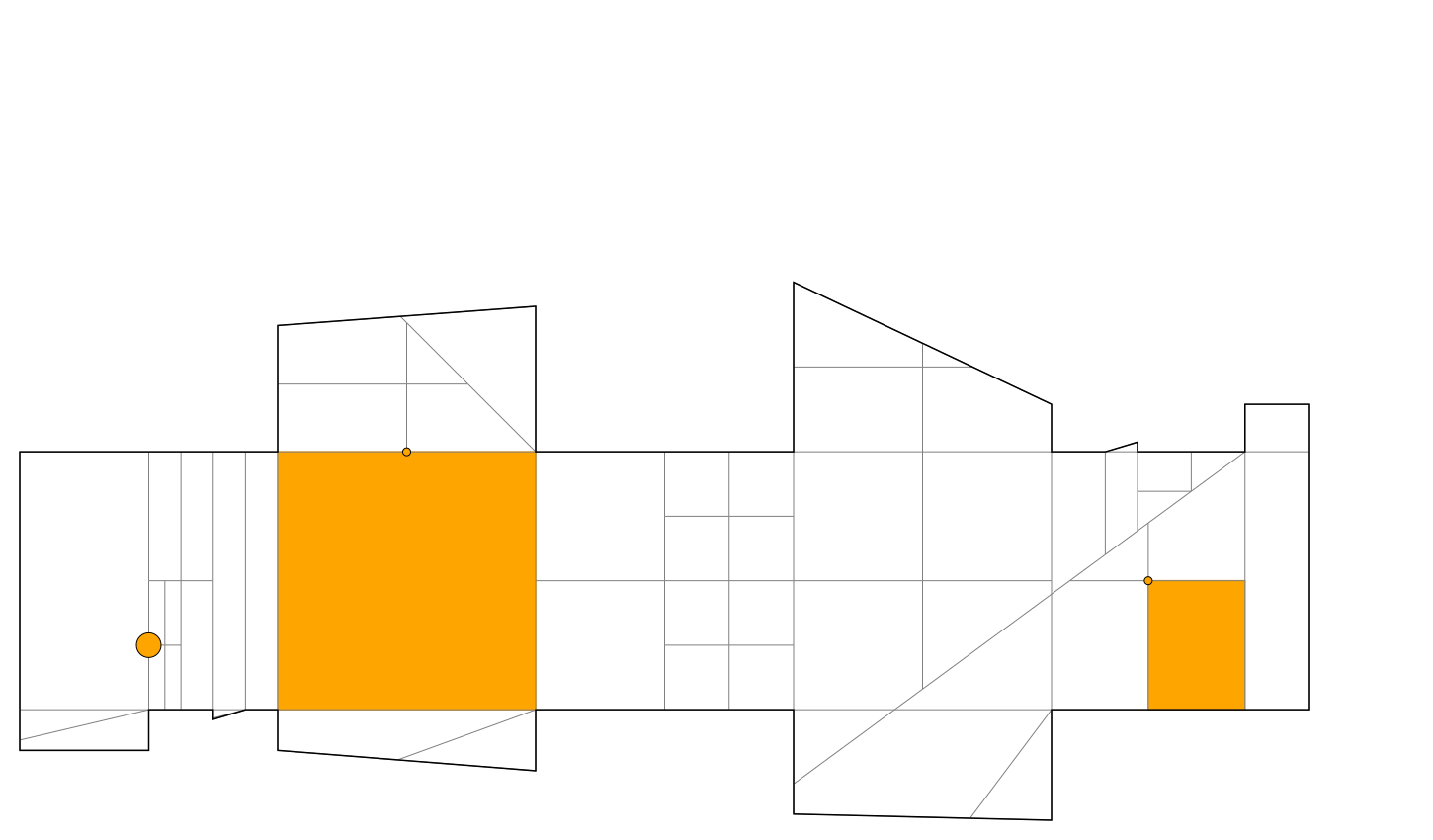}
	7
\end{minipage}
\hfill
\begin{minipage}[b]{0.48\textwidth}
	\centering
	\includegraphics[width=\textwidth]{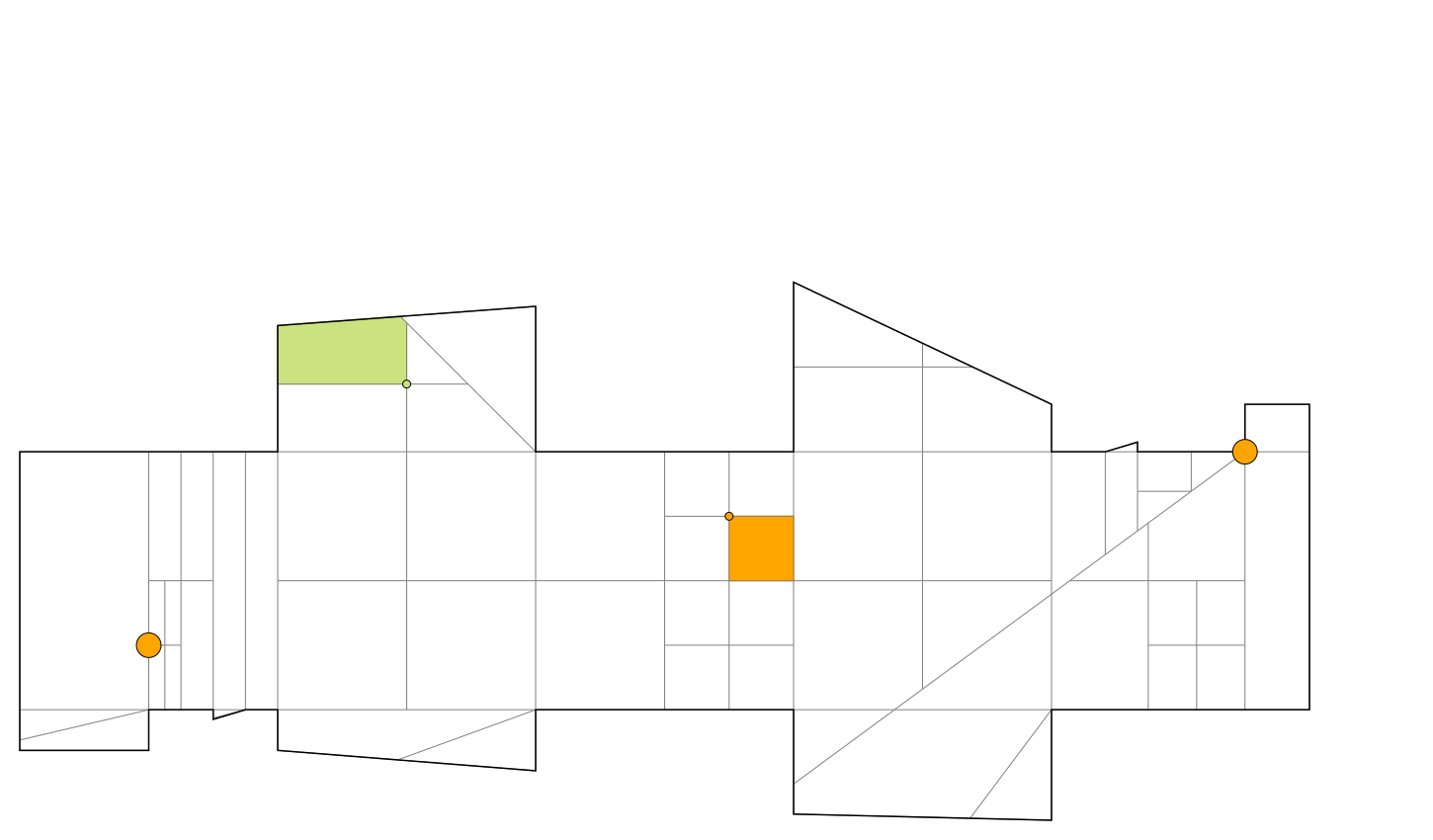}
	8
\end{minipage}
\begin{minipage}[b]{0.48\textwidth}
	\centering
	\includegraphics[width=\textwidth]{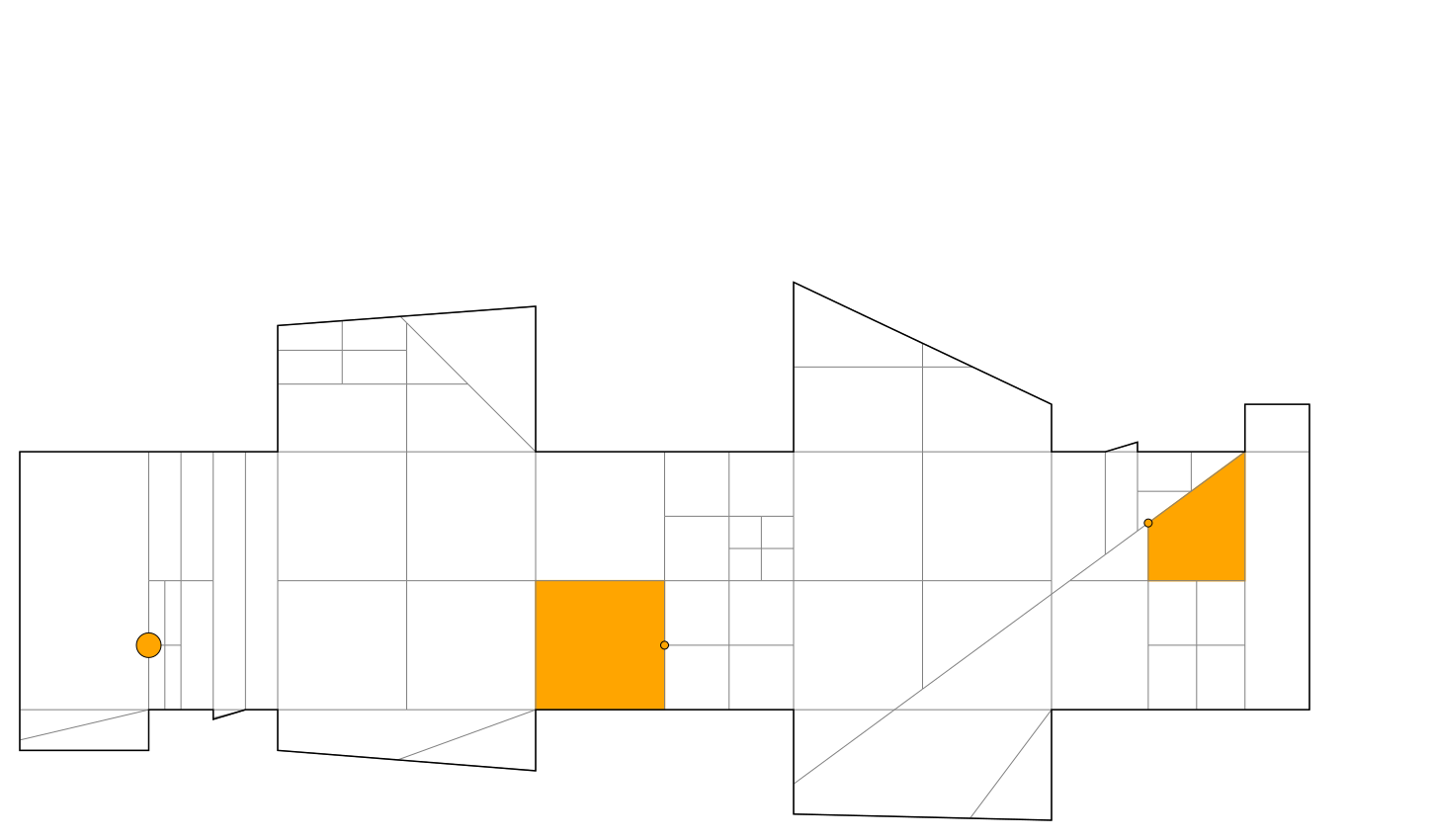}
	9
\end{minipage}
\hfill
\begin{minipage}[b]{0.48\textwidth}
	\centering
	\includegraphics[width=\textwidth]{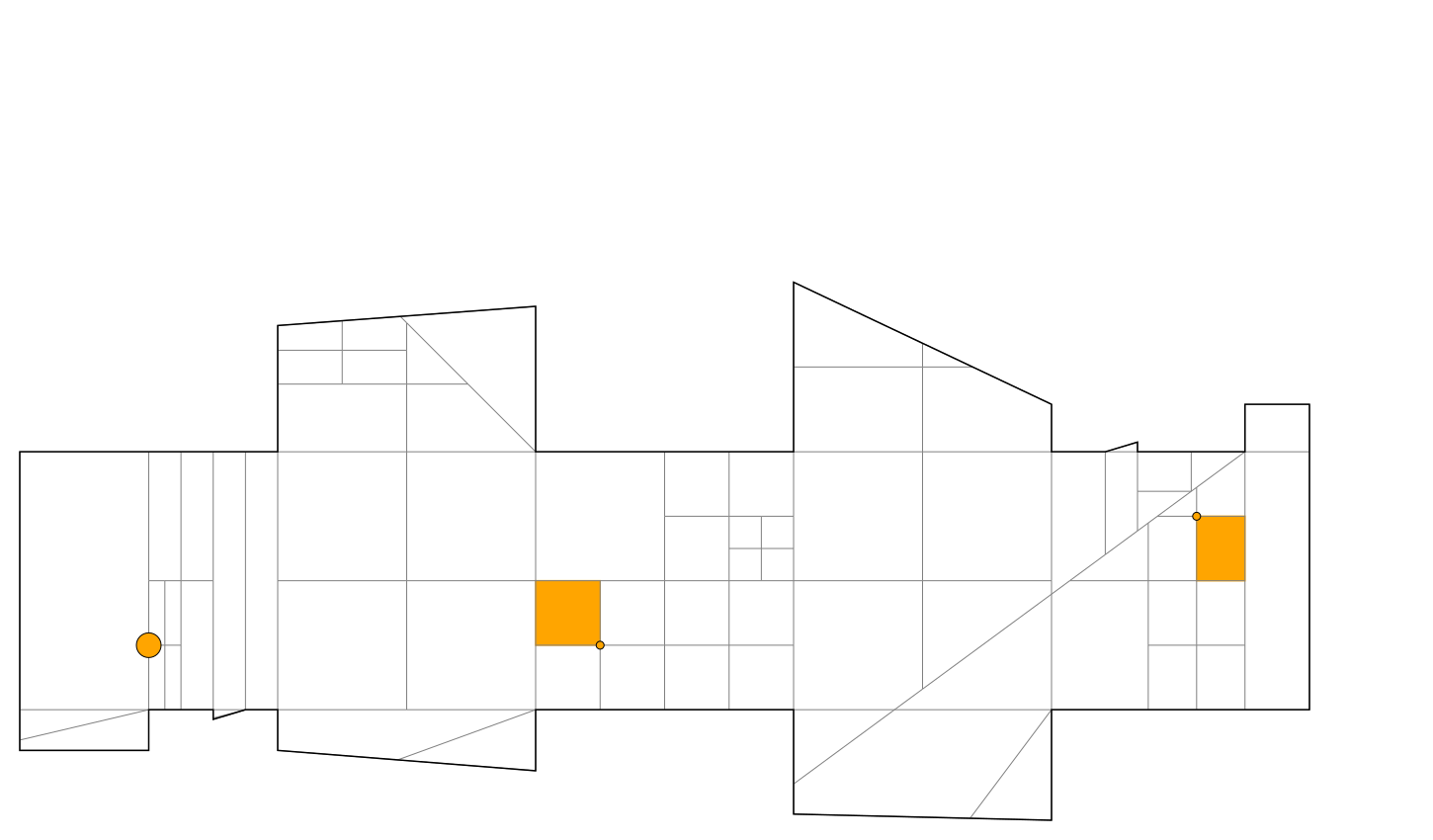}
	10
\end{minipage}
\begin{minipage}[b]{0.48\textwidth}
	\centering
	\includegraphics[width=\textwidth]{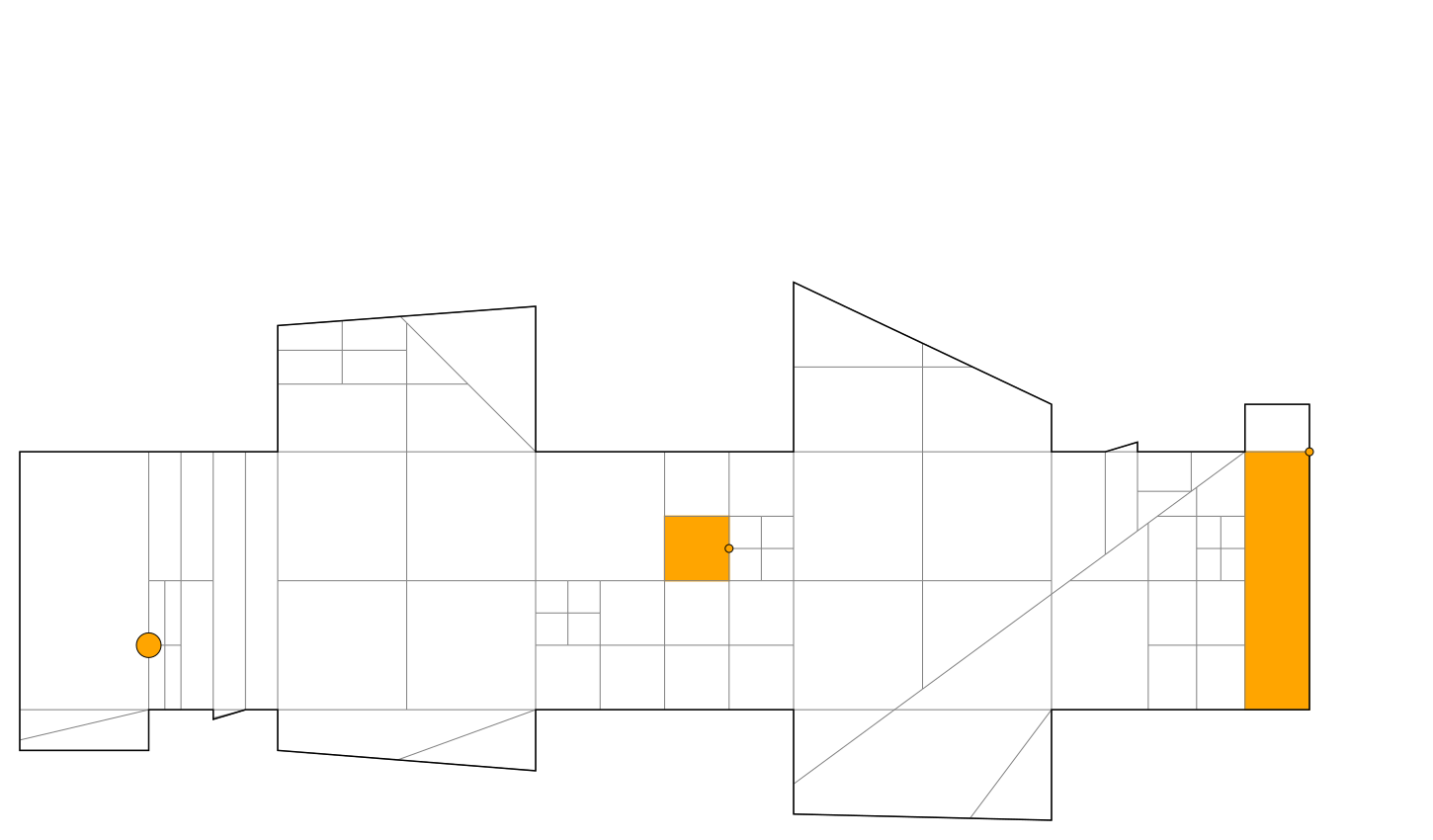}
	11
\end{minipage}
\hfill
\begin{minipage}[b]{0.48\textwidth}
	\centering
	\includegraphics[width=\textwidth]{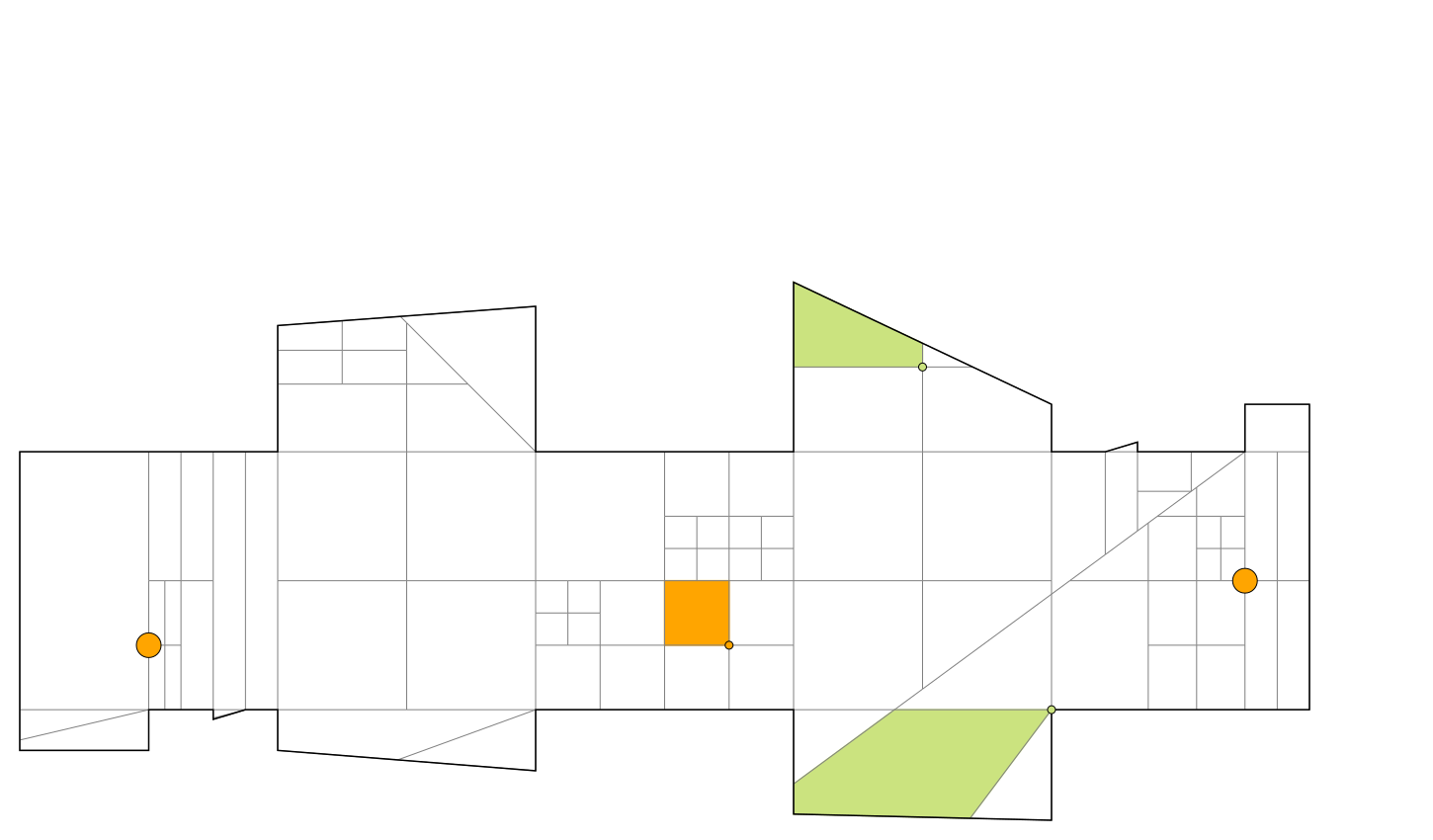}
	12
\end{minipage}
\begin{minipage}[b]{0.48\textwidth}
	\centering
	\includegraphics[width=\textwidth]{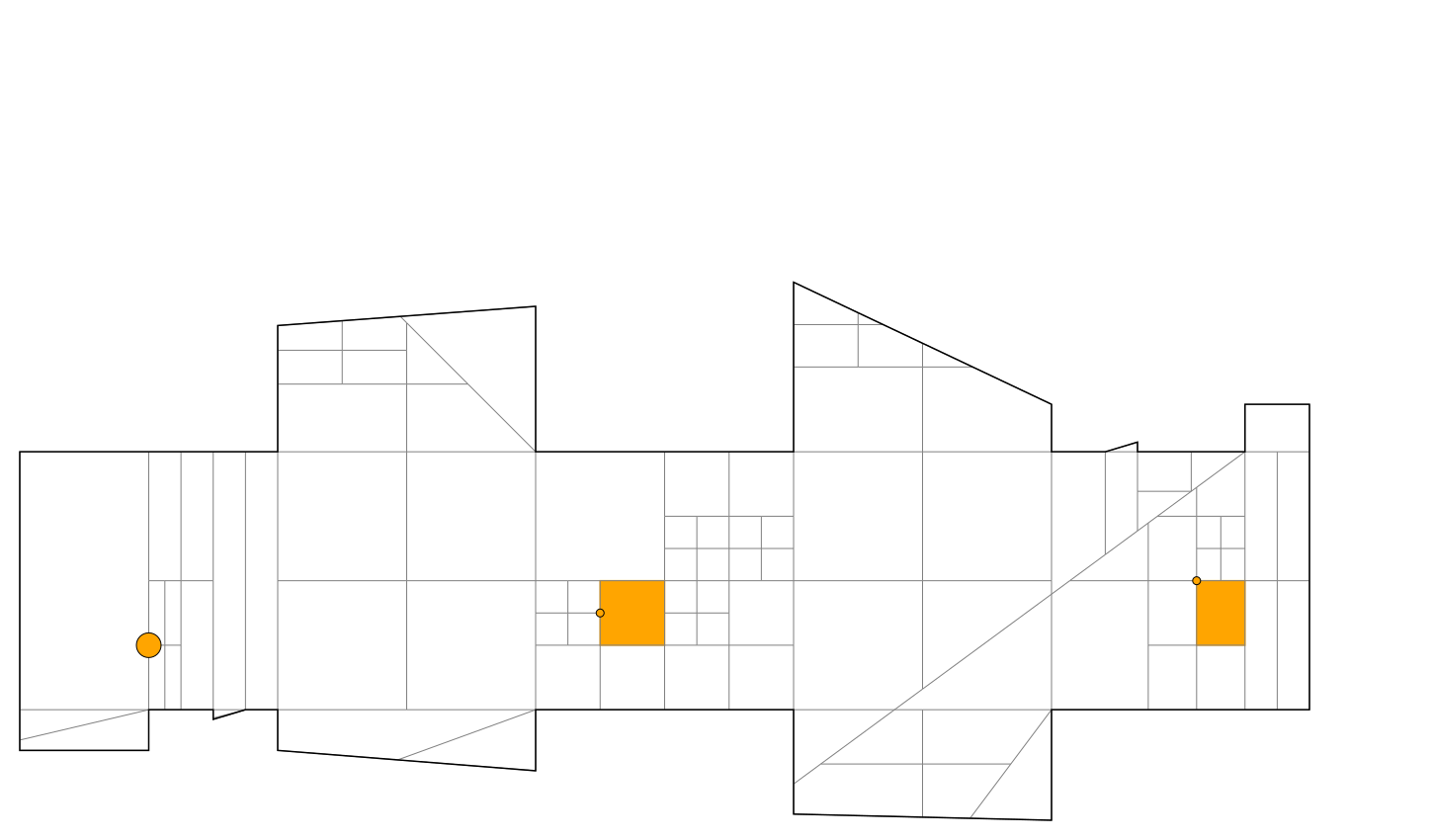}
	13
\end{minipage}
\hfill
\begin{minipage}[b]{0.48\textwidth}
	\centering
	\includegraphics[width=\textwidth]{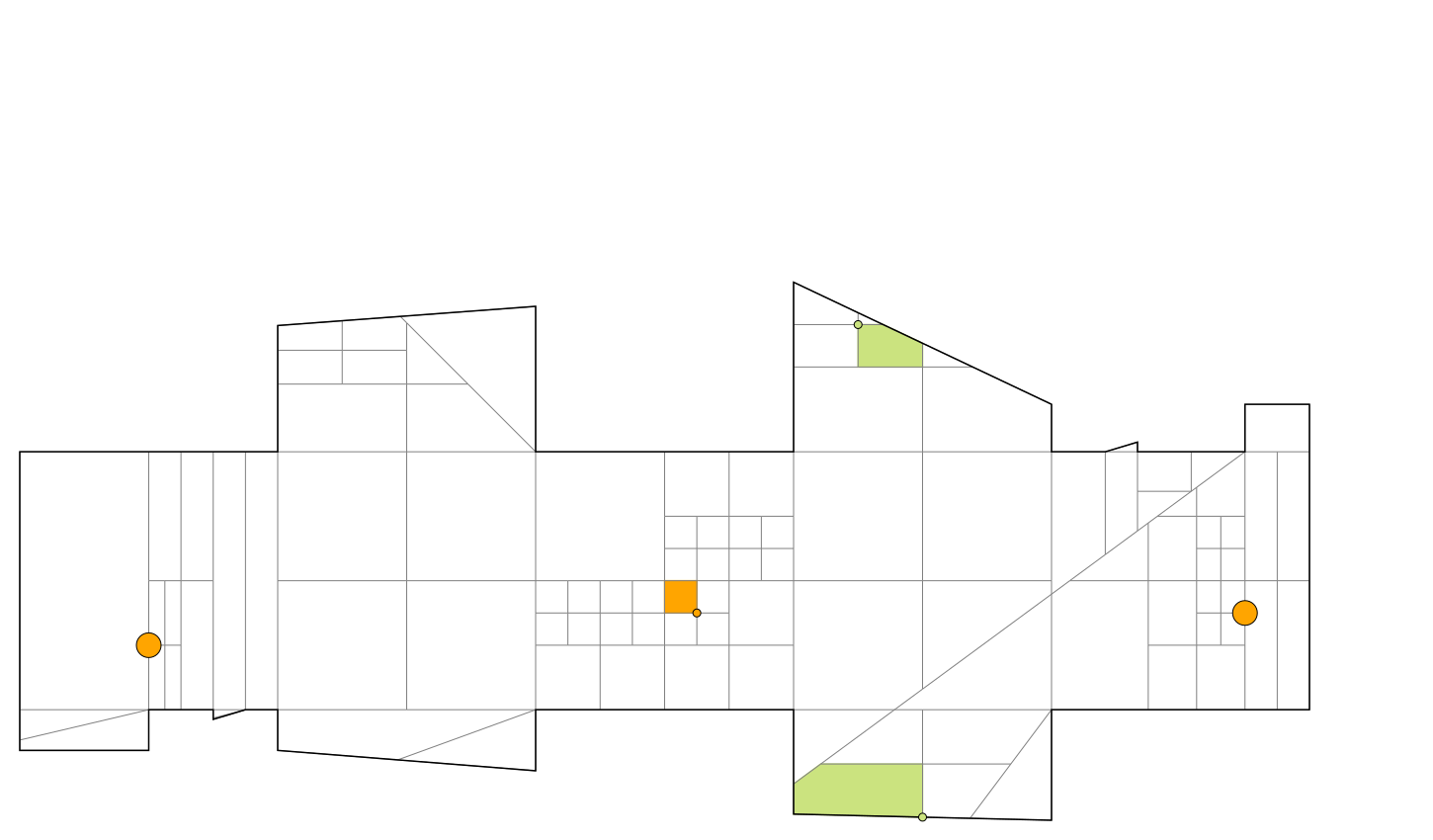}
	14
\end{minipage}
\begin{minipage}[b]{0.48\textwidth}
	\centering
	\includegraphics[width=\textwidth]{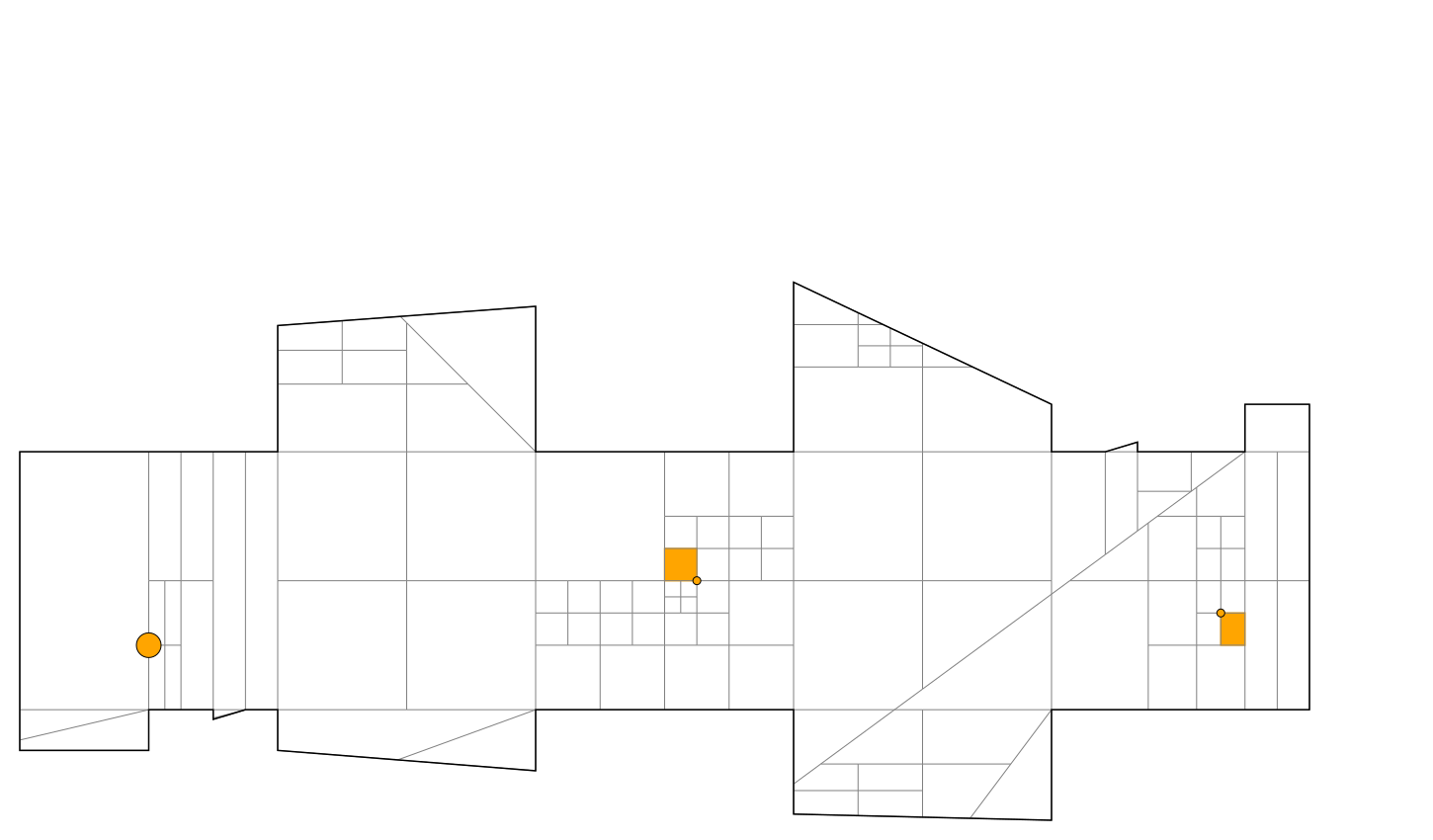}
	15
\end{minipage}
\hfill
\begin{minipage}[b]{0.48\textwidth}
	\centering
	\includegraphics[width=\textwidth]{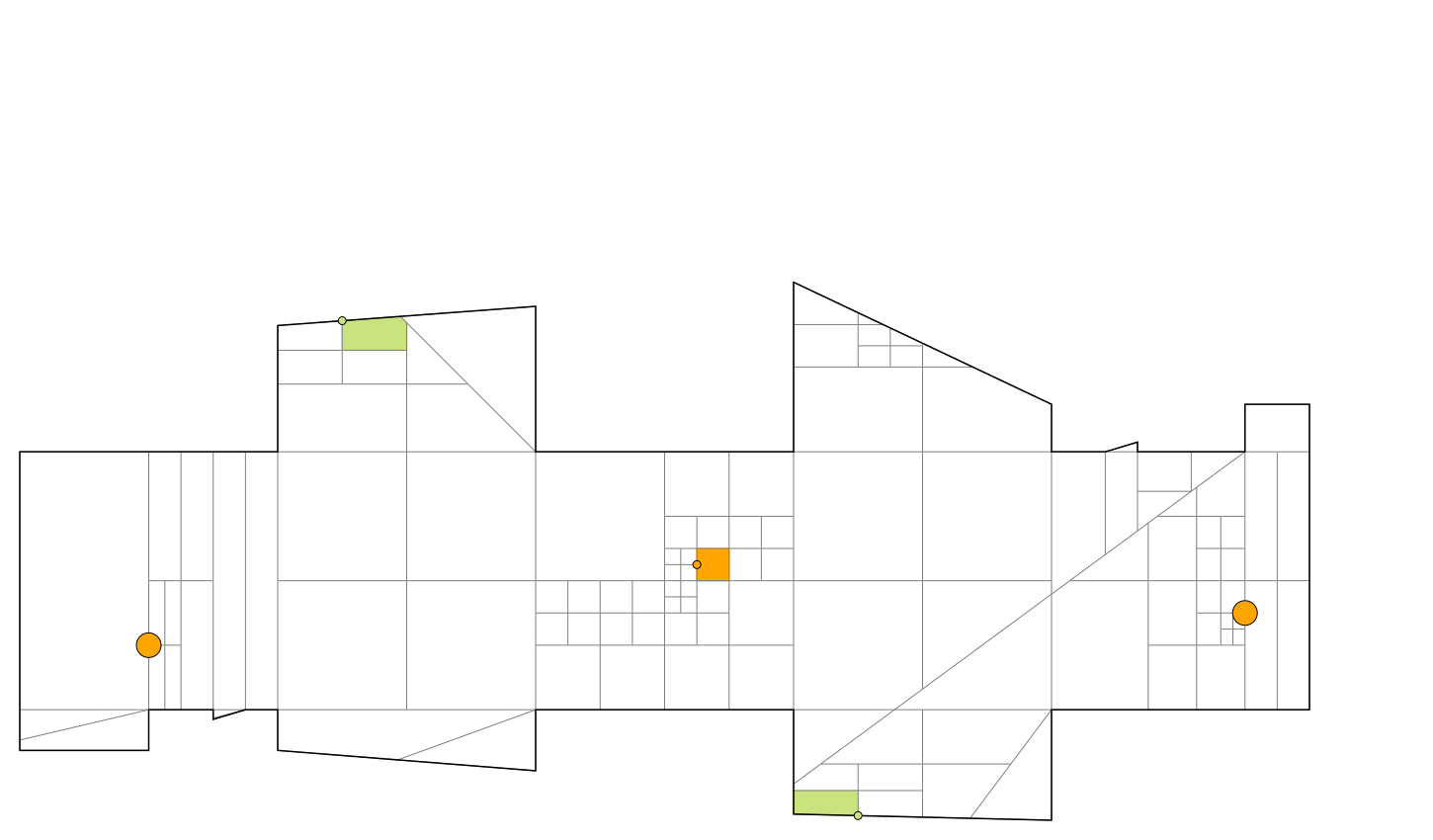}
	16
\end{minipage}
\begin{minipage}[b]{0.48\textwidth}
	\centering
	\includegraphics[width=\textwidth]{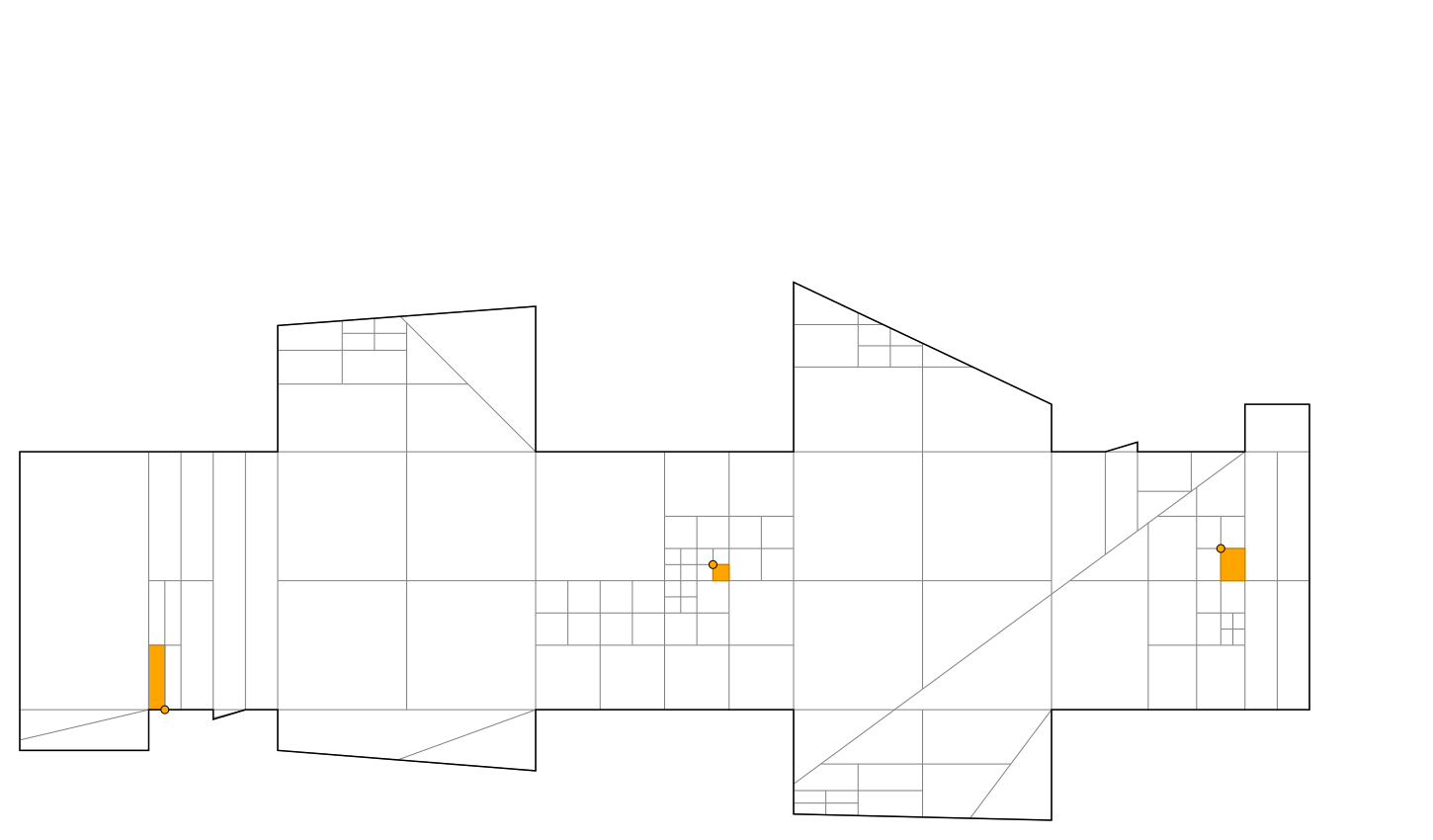}
	17
\end{minipage}
\hfill
\begin{minipage}[b]{0.48\textwidth}
	\centering
	\includegraphics[width=\textwidth]{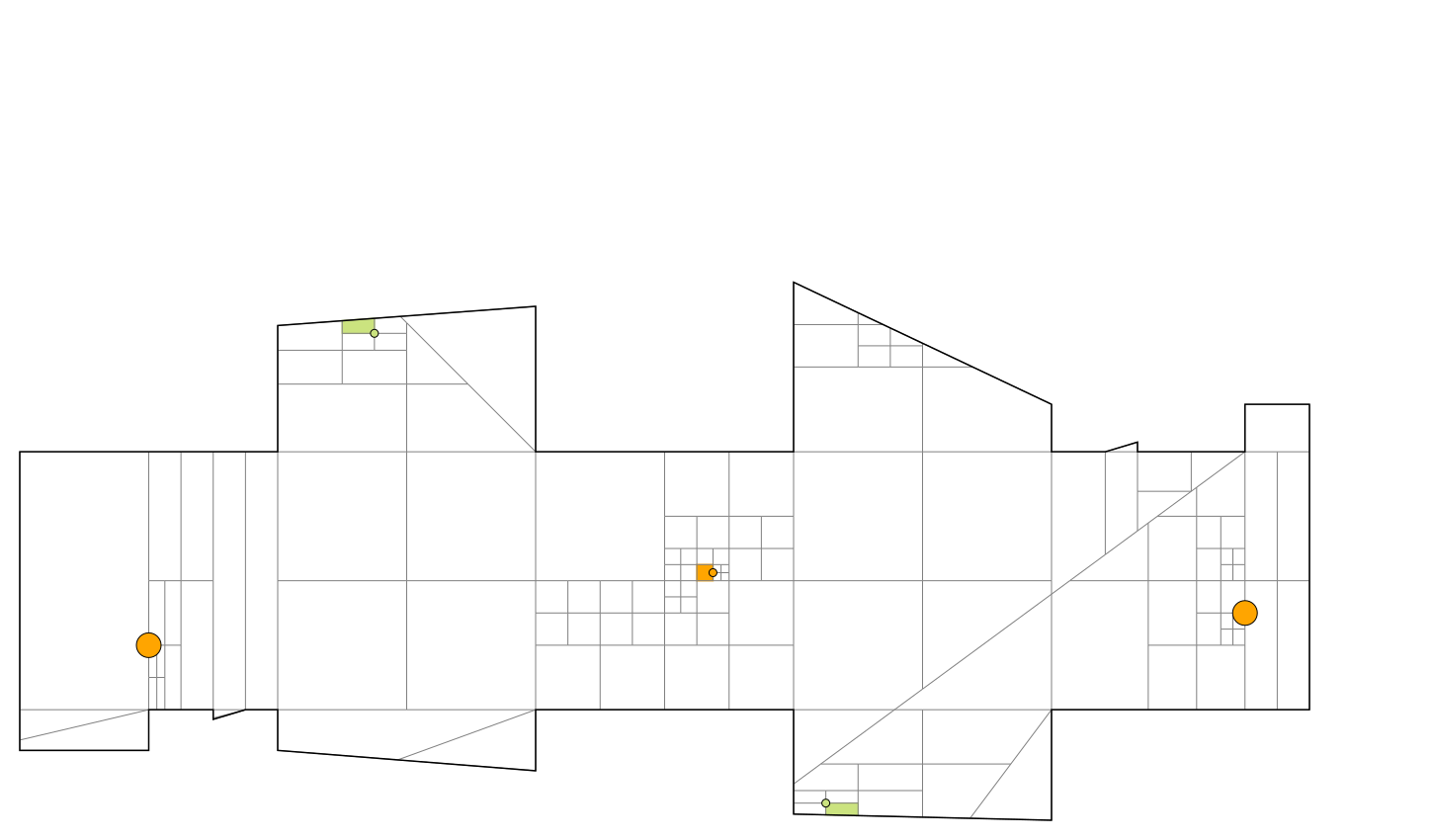}
	18
\end{minipage}
\begin{minipage}[b]{0.48\textwidth}
	\centering
	\includegraphics[width=\textwidth]{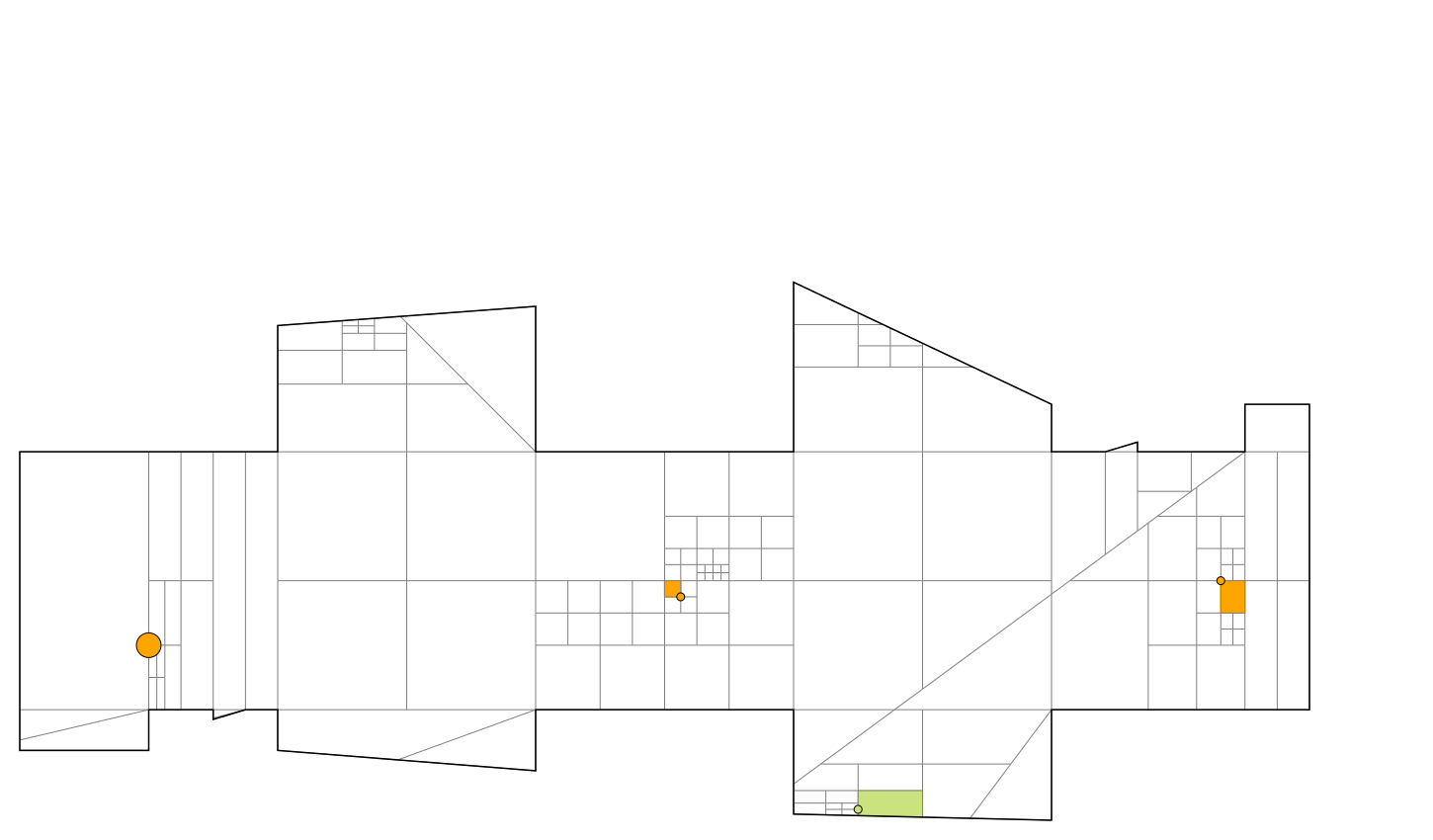}
	19
\end{minipage}
\hfill
\begin{minipage}[b]{0.48\textwidth}
	\centering
	\includegraphics[width=\textwidth]{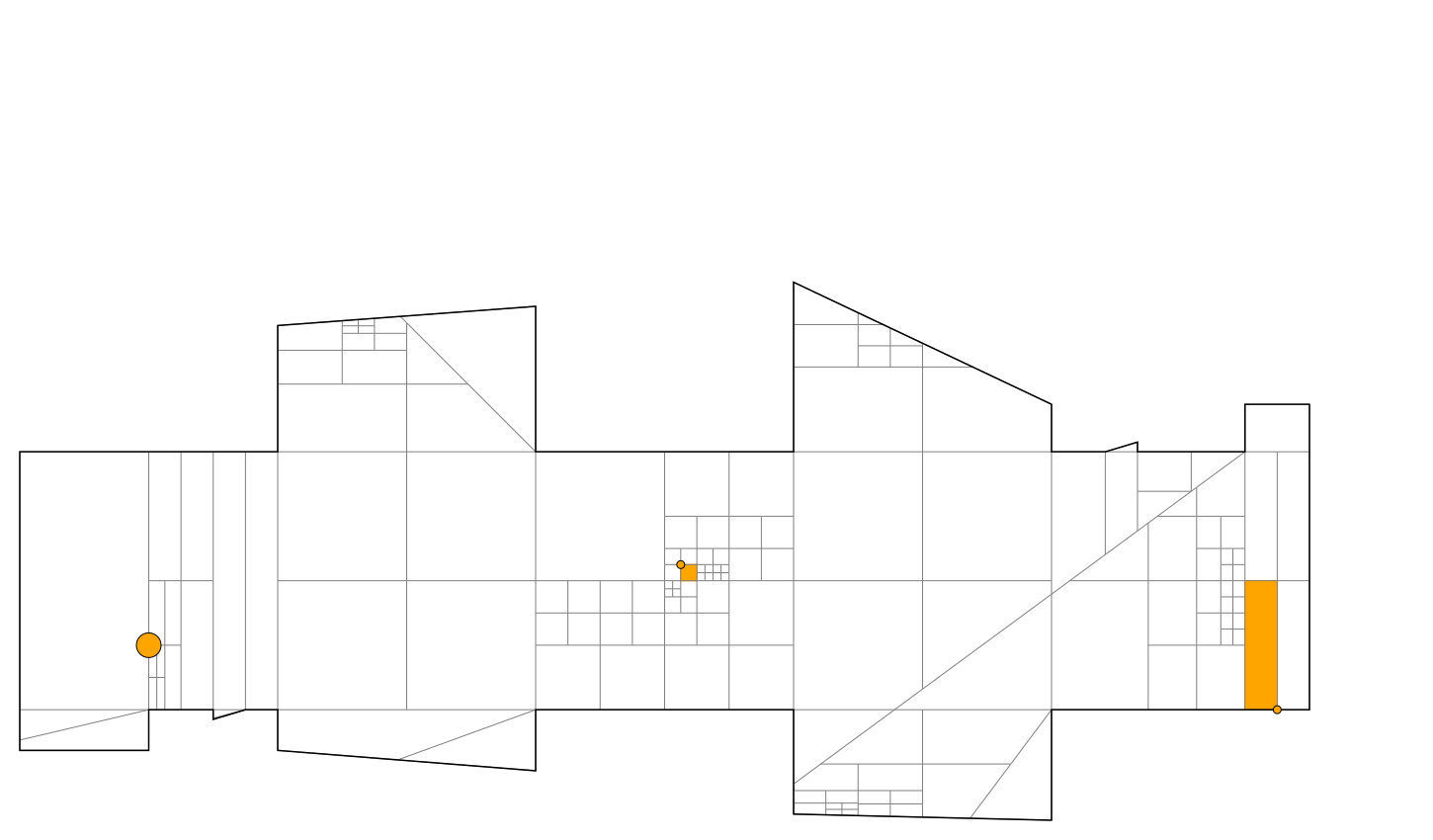}
	20
\end{minipage}

\paragraph*{Polygon with 60 vertices.}
Here, we show how the Iterative Algorithm without Safe Guards finds a solution in~9 iterations for a polygon with~60 vertices. 
This is one of the input polygons from Couto et al.~\cite{art-gallery-instances-page}. 
The orange coloured vertices and faces represent guards, and a green face is a witness faces.
\\
\begin{minipage}[b]{0.48\textwidth}
	\centering
	
	\includegraphics[width=\textwidth]{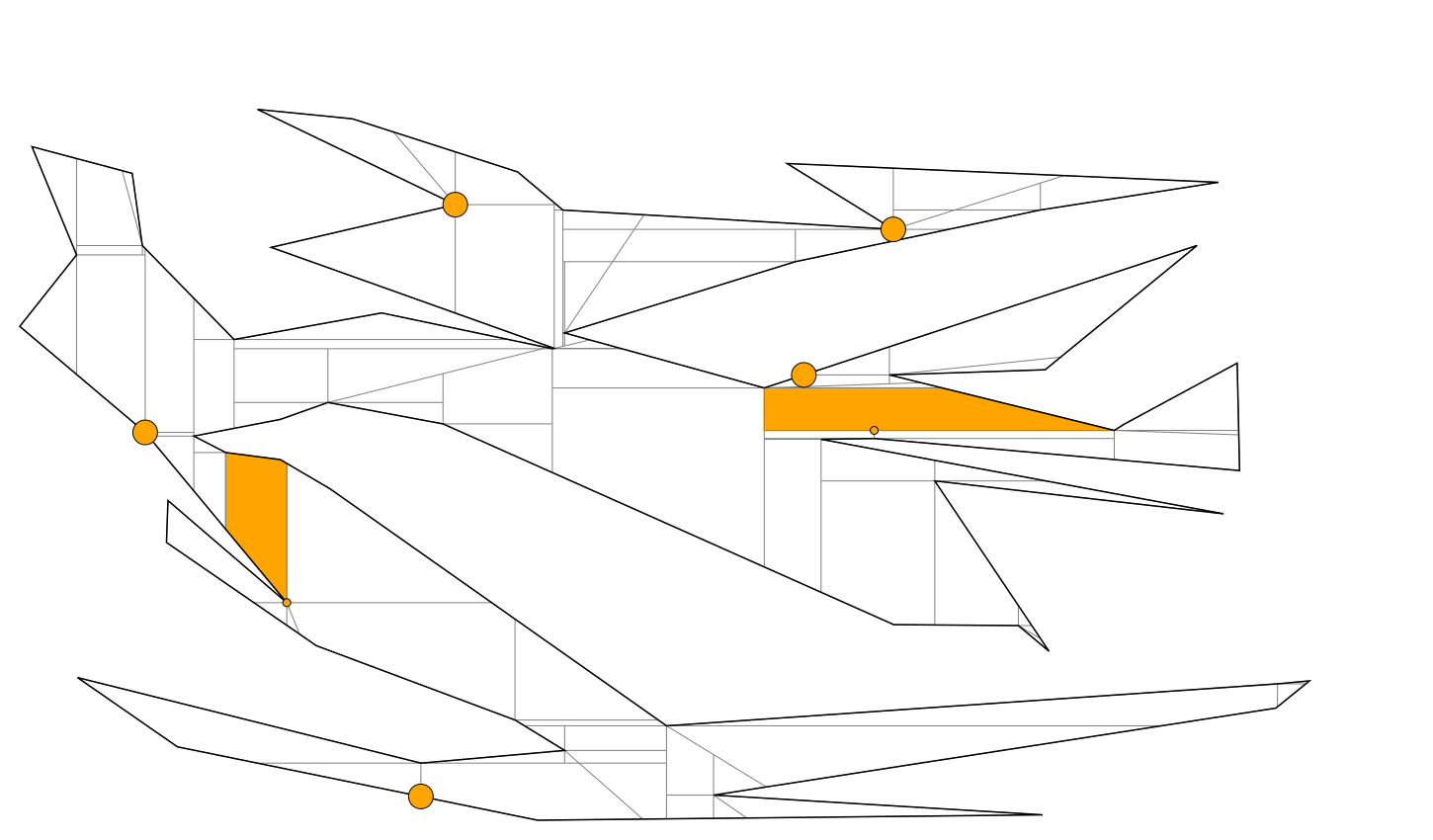}
	1
\end{minipage}
\hfill
\begin{minipage}[b]{0.48\textwidth}
	\centering
	
	\includegraphics[width=\textwidth]{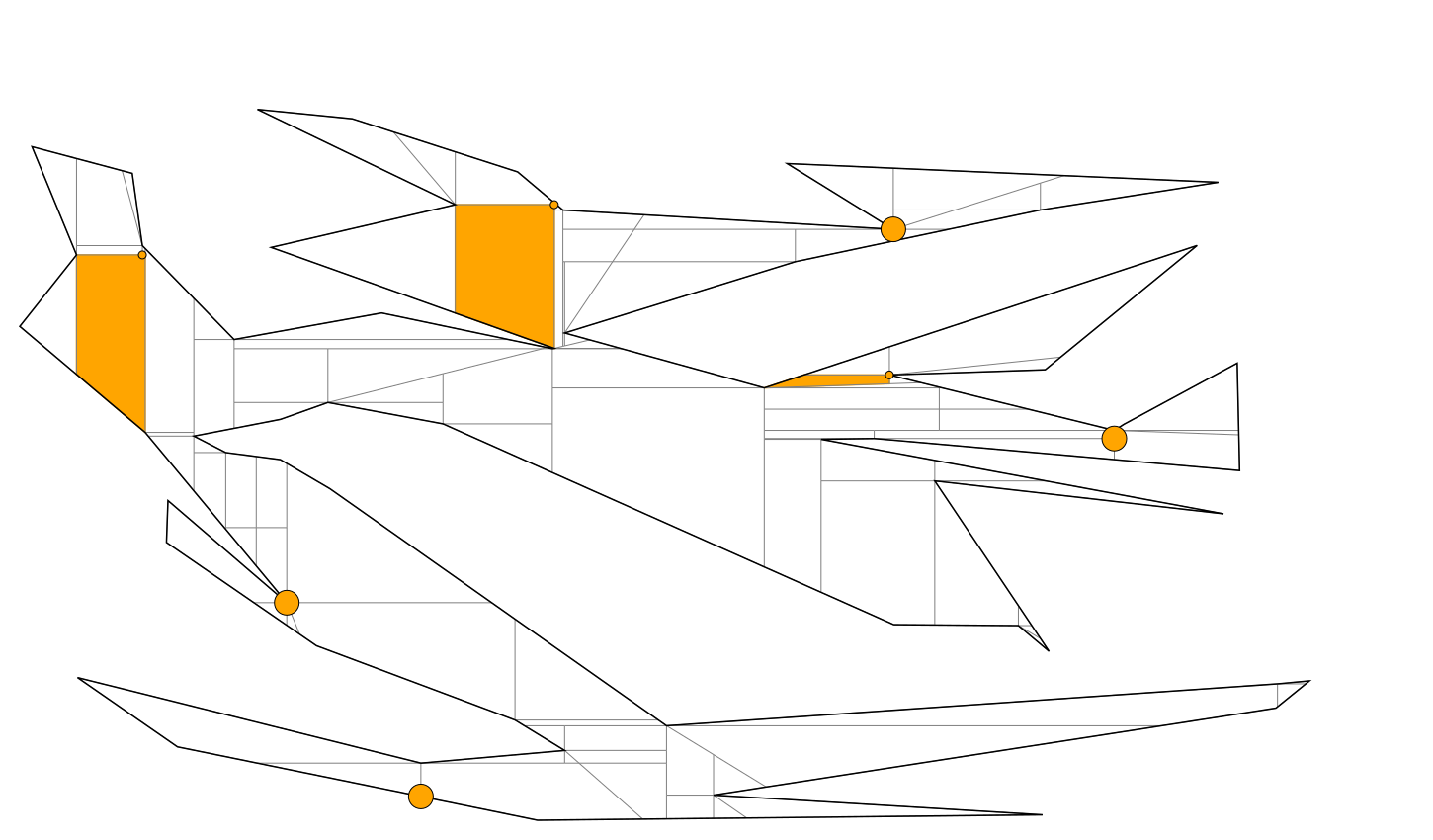}
	2
\end{minipage}
\begin{minipage}[b]{0.48\textwidth}
	\centering
	
	\includegraphics[width=\textwidth]{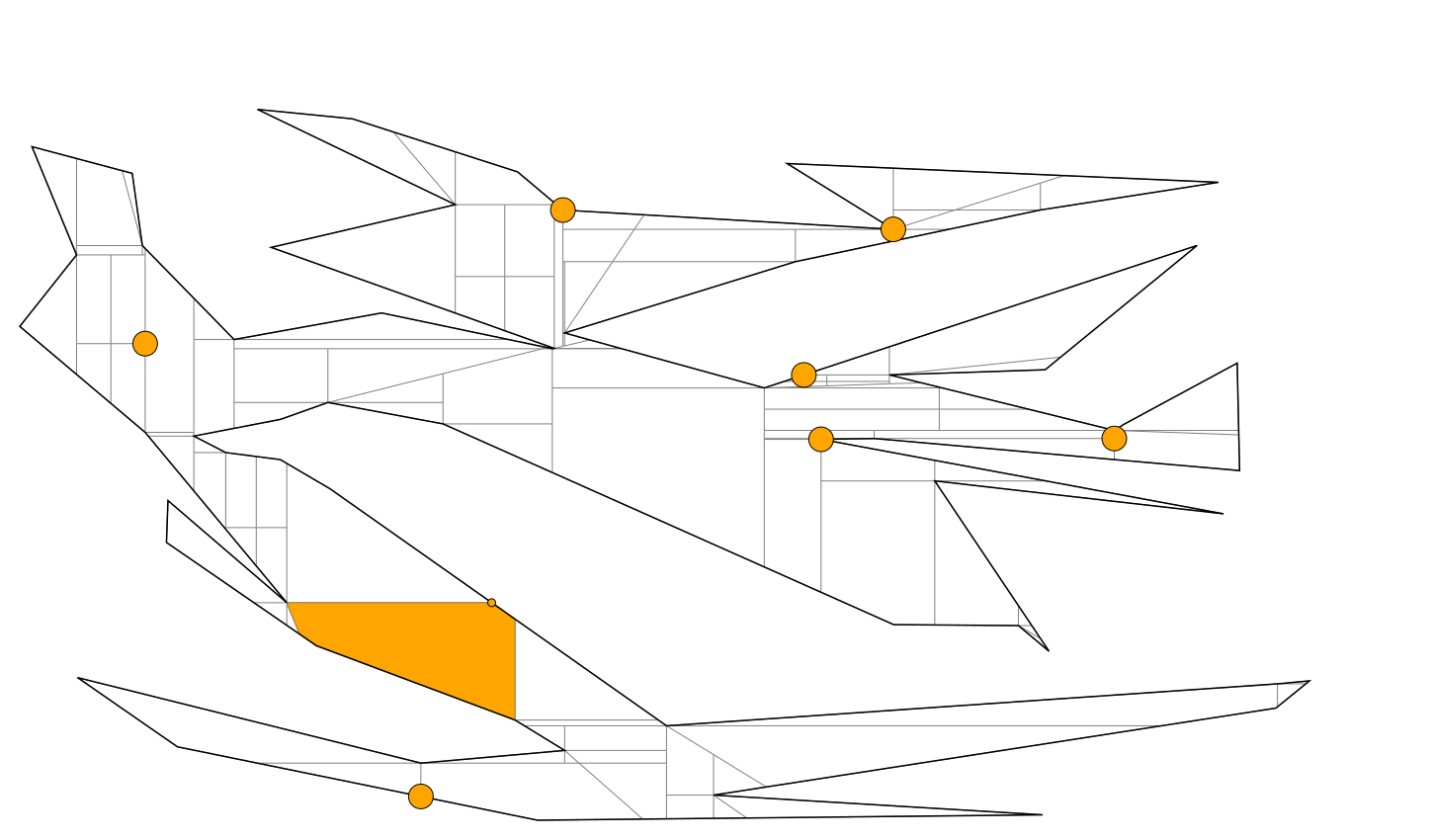}
	3
\end{minipage}
\hfill
\begin{minipage}[b]{0.48\textwidth}
	\centering
	
	\includegraphics[width=\textwidth]{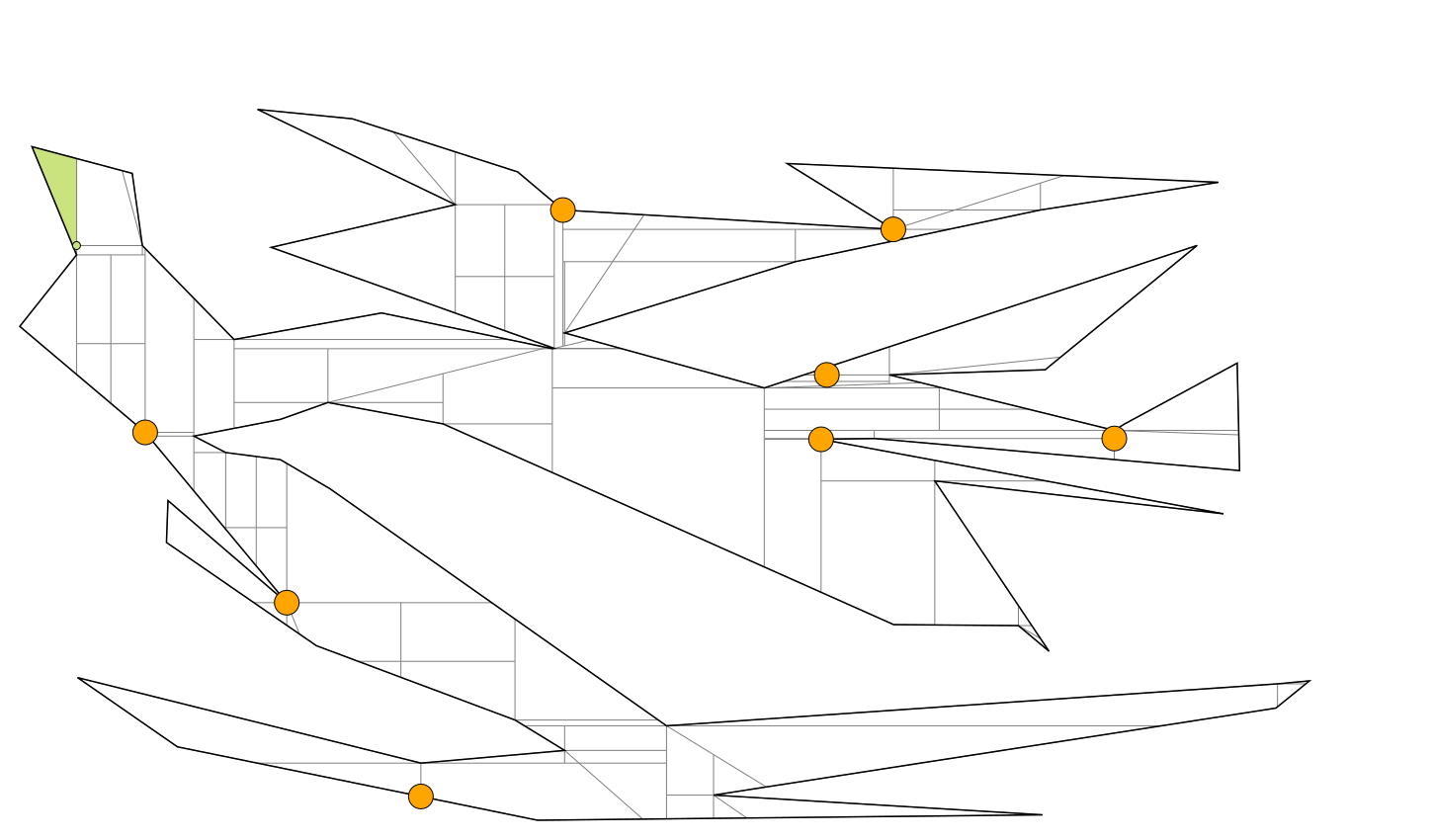}
	4
\end{minipage}
\begin{minipage}[b]{0.48\textwidth}
	\centering
	
	\includegraphics[width=\textwidth]{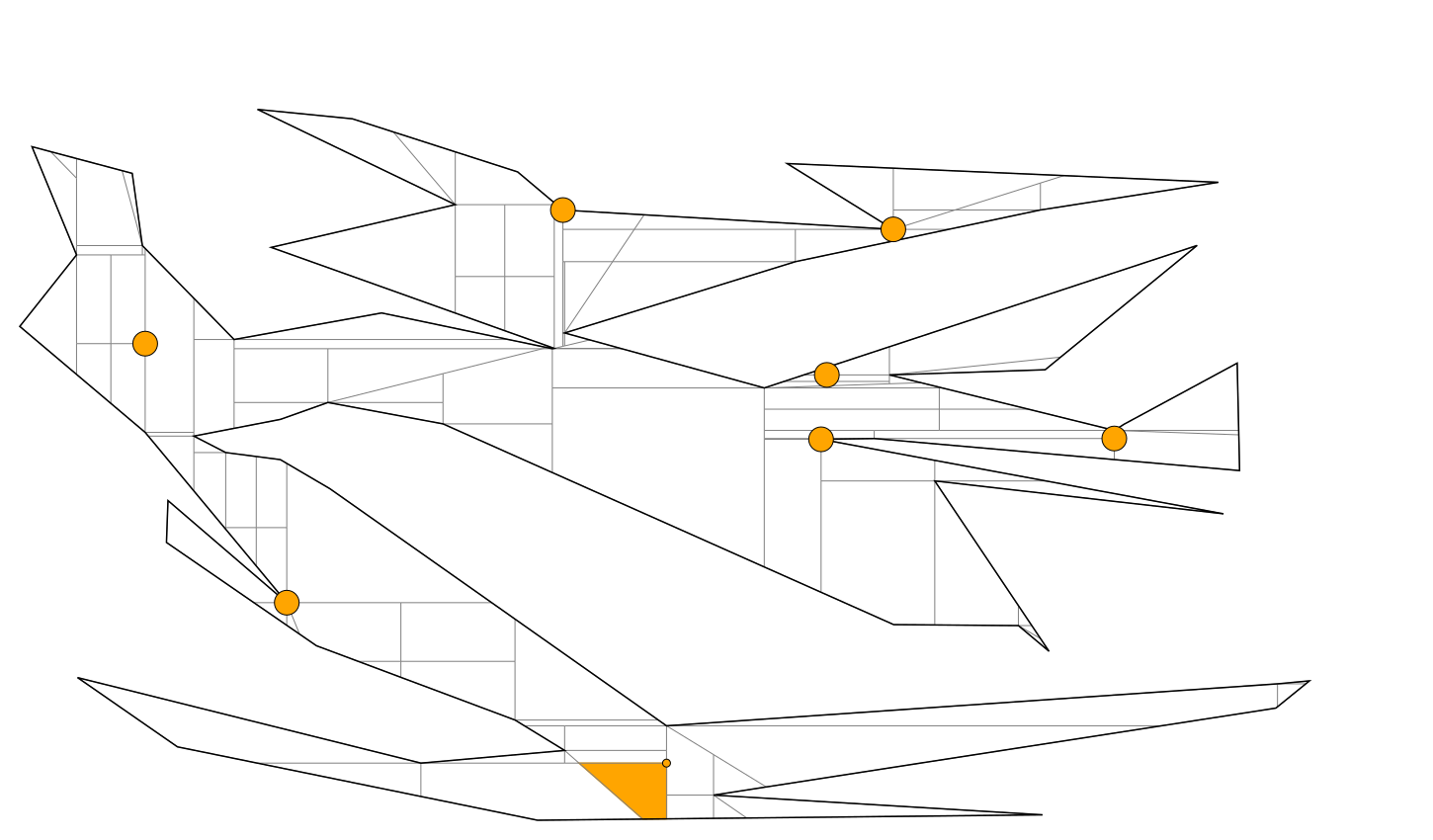}
	5
\end{minipage}
\hfill
\begin{minipage}[b]{0.48\textwidth}
	\centering
	
	\includegraphics[width=\textwidth]{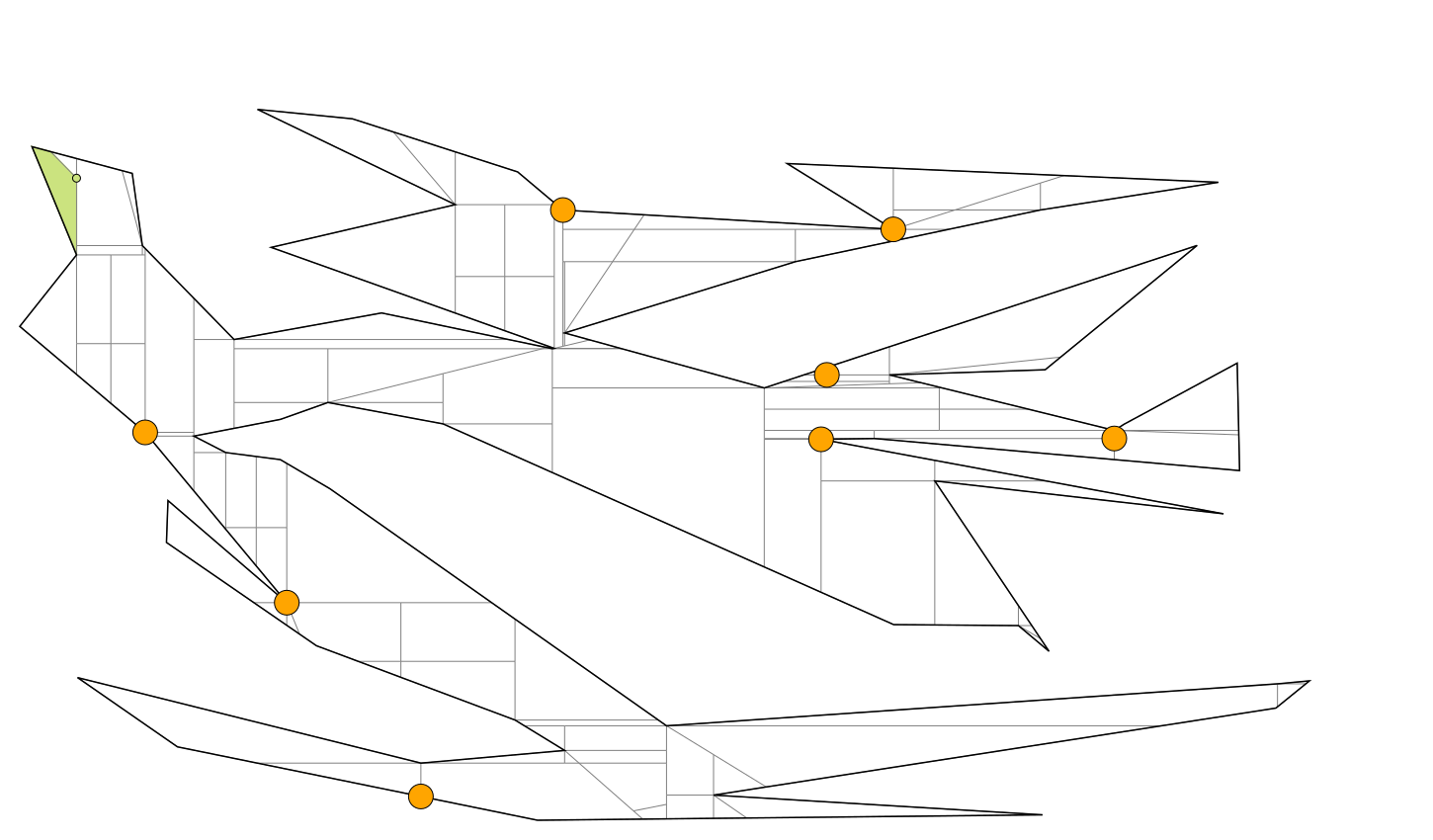}
	6
\end{minipage}
\begin{minipage}[b]{0.48\textwidth}
	\centering
	
	\includegraphics[width=\textwidth]{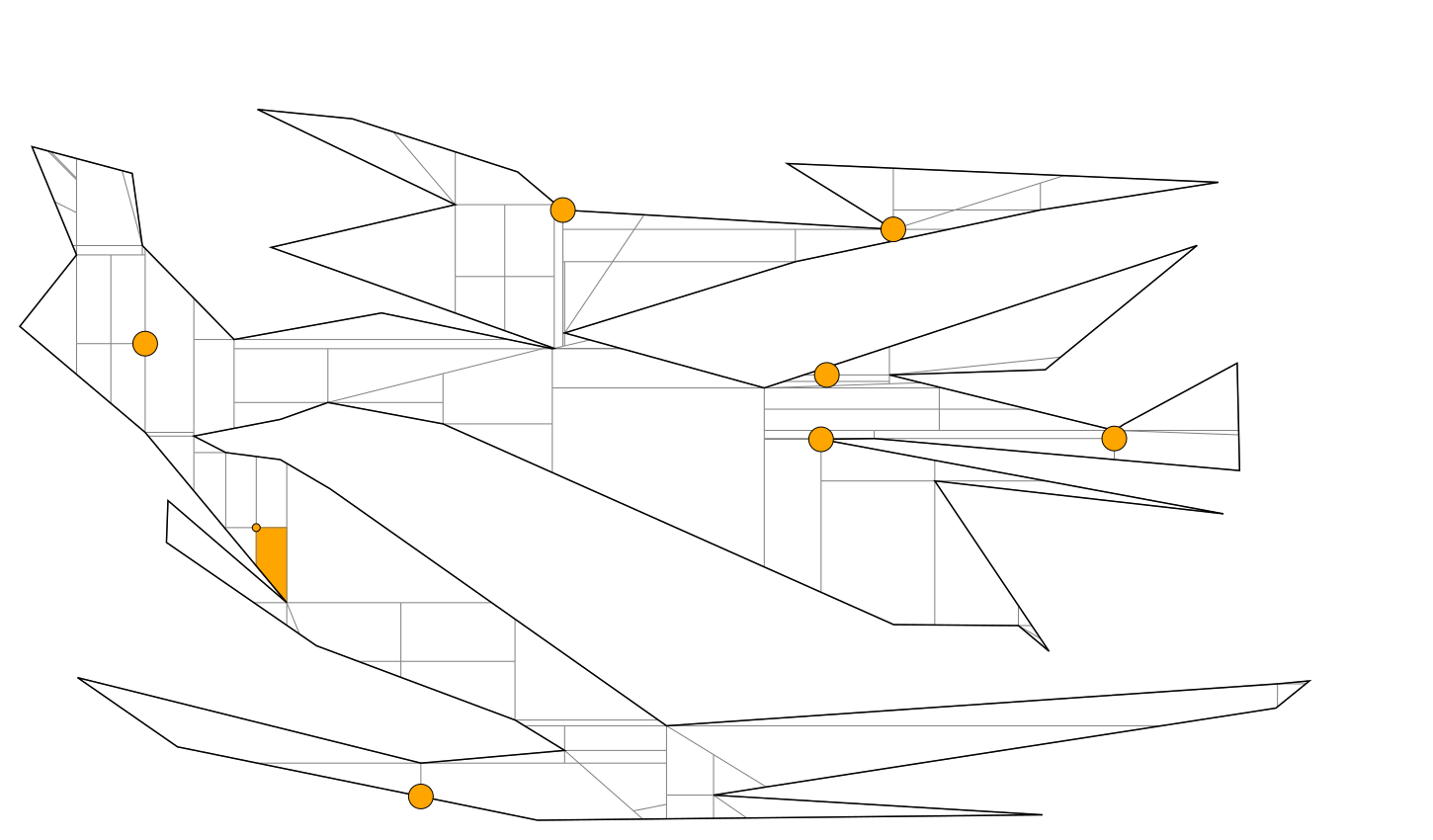}
	7
\end{minipage}
\hfill
\begin{minipage}[b]{0.48\textwidth}
	\centering
	
	\includegraphics[width=\textwidth]{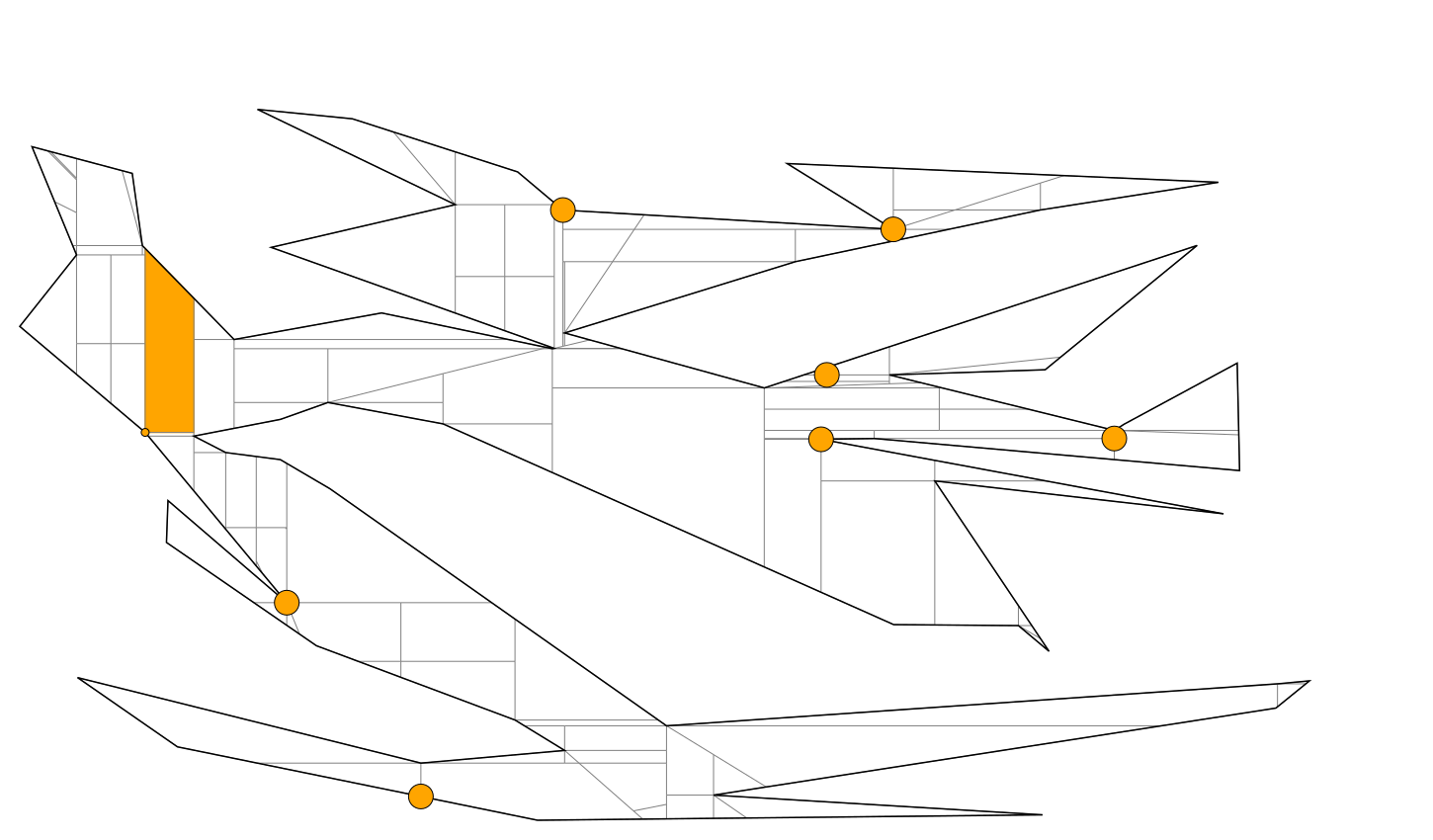}
	8
\end{minipage}
\begin{minipage}[b]{0.48\textwidth}
	\centering
	
	\includegraphics[width=\textwidth]{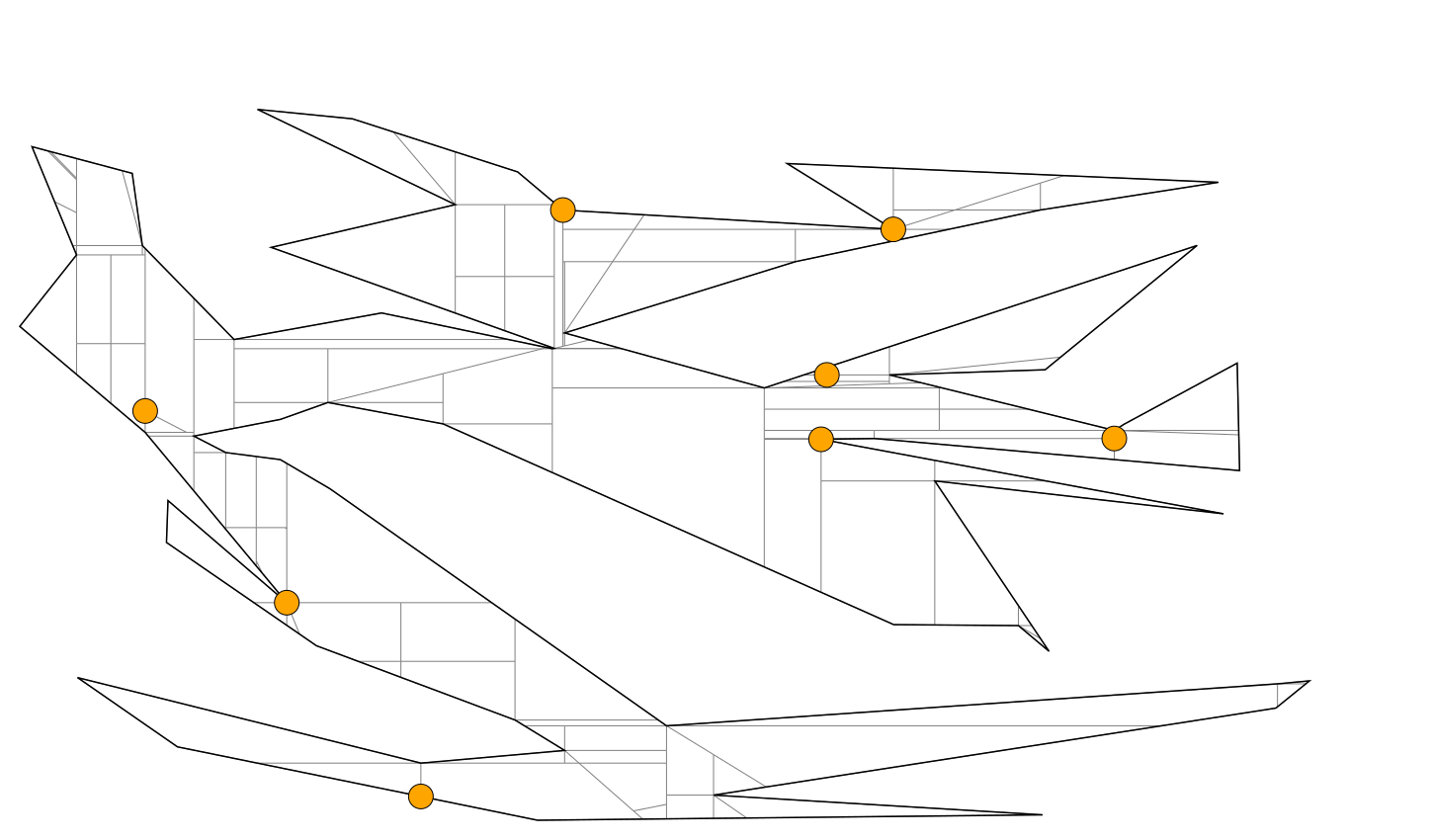}
	9
\end{minipage}

\newpage
\bibliographystyle{plain}
\bibliography{library}

\end{document}